\documentclass[12pt,english,american]{article}
\usepackage[T1]{fontenc}
\usepackage[latin9]{inputenc}
\usepackage{geometry}
\geometry{verbose,tmargin=1in,bmargin=1in,lmargin=1in,rmargin=1in}
\usepackage{color}
\usepackage{babel}
\usepackage{verbatim}
\usepackage{refstyle}
\usepackage{textcomp}
\usepackage{amsthm}
\usepackage{amsmath}
\usepackage{amssymb}
\usepackage[unicode=true,pdfusetitle,
 bookmarks=true,bookmarksnumbered=false,bookmarksopen=false,
 breaklinks=true,pdfborder={0 0 0},backref=false,colorlinks=true]
 {hyperref}
\hypersetup{
 linkcolor=blue,citecolor=blue, urlcolor=blue}

\makeatletter

\AtBeginDocument{\providecommand\eqref[1]{\ref{eq:#1}}}
\AtBeginDocument{\providecommand\lemref[1]{\ref{lem:#1}}}
\AtBeginDocument{\providecommand\thmref[1]{\ref{thm:#1}}}
\AtBeginDocument{\providecommand\defref[1]{\ref{def:#1}}}
\AtBeginDocument{\providecommand\secref[1]{\ref{sec:#1}}}
\newcommand{\noun}[1]{\textsc{#1}}
\RS@ifundefined{subref}
  {\def\RSsubtxt{section~}\newref{sub}{name = \RSsubtxt}}
  {}
\RS@ifundefined{thmref}
  {\def\RSthmtxt{theorem~}\newref{thm}{name = \RSthmtxt}}
  {}
\RS@ifundefined{lemref}
  {\def\RSlemtxt{lemma~}\newref{lem}{name = \RSlemtxt}}
  {}

\usepackage{enumitem}
\newlength{\lyxlabelwidth}
\usepackage[color=green!40]{todonotes}
\theoremstyle{plain}
\newtheorem{thm}{\protect\theoremname}[section]
\theoremstyle{definition}
\newtheorem{defn}[thm]{\protect\definitionname}
\theoremstyle{remark}
\newtheorem{rem}[thm]{\protect\remarkname}

\newenvironment{elabeling}[2][]%
{\settowidth{\lyxlabelwidth}{#2}
\begin{description}[font=\normalfont,style=sameline,
leftmargin=\lyxlabelwidth,#1]}
{\end{description}}
\theoremstyle{plain}
\newtheorem{prop}[thm]{\protect\propositionname}
\theoremstyle{plain}
\newtheorem{lem}[thm]{\protect\lemmaname}
\ifx\proof\undefined
\newenvironment{proof}[1][\protect\proofname]{\par
\normalfont\topsep6\p@\@plus6\p@\relax
\trivlist
\itemindent\parindent
\item[\hskip\labelsep
\scshape
#1]\ignorespaces
}{%
\endtrivlist\@endpefalse
}
\providecommand{\proofname}{Proof}
\fi
\theoremstyle{plain}

\usepackage{txfonts,bbm}

\newcommand{\charf}{\mathbbm 1}

\newref{con}{name = Conjecture\ }
\newref{prop}{name = Proposition\ }
\newref{def}{name = Definition\ }
\newref{sec}{name = Section\ }
\newref{thm}{name = Theorem\ }
\newref{lem}{name = Lemma\ }
\newref{cor}{name = Corollary\ }

\makeatother

\addto\captionsamerican{\renewcommand{\corollaryname}{Corollary}}
\addto\captionsamerican{\renewcommand{\definitionname}{Definition}}
\addto\captionsamerican{\renewcommand{\lemmaname}{Lemma}}
\addto\captionsamerican{\renewcommand{\propositionname}{Proposition}}
\addto\captionsamerican{\renewcommand{\remarkname}{Remark}}
\addto\captionsamerican{\renewcommand{\theoremname}{Theorem}}
\addto\captionsenglish{\renewcommand{\corollaryname}{Corollary}}
\addto\captionsenglish{\renewcommand{\definitionname}{Definition}}
\addto\captionsenglish{\renewcommand{\lemmaname}{Lemma}}
\addto\captionsenglish{\renewcommand{\propositionname}{Proposition}}
\addto\captionsenglish{\renewcommand{\remarkname}{Remark}}
\addto\captionsenglish{\renewcommand{\theoremname}{Theorem}}
\providecommand{\corollaryname}{Corollary}
\providecommand{\definitionname}{Definition}
\providecommand{\lemmaname}{Lemma}
\providecommand{\propositionname}{Proposition}
\providecommand{\remarkname}{Remark}
\providecommand{\theoremname}{Theorem}

\begin{document}

\title{\noun{Ultraviolet Properties of }\\
\noun{ the Spinless, One-Particle Yukawa Model}}

\author{D.-A. Deckert, A. Pizzo}

\maketitle
\global\long\def\gap#1#2{\textrm{Gap}\left(#1\upharpoonright#2\right)}

\global\long\def\pmax{P_{\mathrm{max}}}

\global\long\def\restrict{\upharpoonright}

\global\long\def\spec#1{\mathrm{spec}\left(#1\right)}

\begin{abstract}
We consider the one-particle sector of the spinless Yukawa model,
which describes the interaction of a nucleon with a real field of
scalar massive bosons (neutral mesons). The nucleon as well as the
mesons have relativistic dispersion relations. In this model we study
the dependence of the nucleon mass shell on the ultraviolet cut-off
$\Lambda$. For any finite ultraviolet cut-off the nucleon one-particle states
are constructed in a bounded region of the energy-momentum space. We identify the dependence of the ground state energy
on $\Lambda$ and the coupling constant. More importantly, we show
that the model considered here becomes essentially trivial in the
limit $\Lambda\to\infty$ regardless of any (nucleon) mass and self-energy
renormalization. Our results hold in the small coupling regime.\\
\\
\noindent{\bf Acknowledgments:} D.-A.D. gratefully acknowledges financial support from the post-doc program of the DAAD. A.P. is supported by the NSF grant \#DMS-0905988. \end{abstract}
\tableofcontents{}

\section{Introduction and Definition of the Model}

The Yukawa theory provides an effective description of the strong
nuclear forces between massive nucleons which are mediated by mesons.
The nucleons as well as the mesons have relativistic dispersion relations.
It is well-known that the Yukawa theory is plagued by ultraviolet
divergences, and so far the fully relativistic model has only been
constructed in $1+1$ dimensions; see \cite{Mastropietro_2008_Non-perturbative-renormalization}
and references therein for the details.

In this paper we consider a toy model of the Yukawa theory, referred
to as \emph{spinless, one-particle Yukawa model}, obtained by neglecting
pair-creation and spin, and we restrict the analysis to the one-nucleon
sector. In order to yield a well-defined Hamiltonian for this model
one usually introduces a cut-off which removes the problematic meson
momenta from the interaction term above a finite threshold energy
$\Lambda$. While for non-relativistic situations one may argue that
a cut-off $\Lambda$ of the order of the nucleon rest mass should
render a satisfying predictive power of the model, a finite cut-off
is not justified in the relativistic regime. Though the model we deal
with is a caricature of the relativistic interaction between nucleons
and mesons, we address the mathematical problem how to control the
model uniformly in $\Lambda$ beyond perturbation theory. 

More specifically, we analyze the effect of self-energy and mass renormalization
in the limit $\Lambda\to\infty$. It is a common hope that at least
for non-relativistic QED, i.e., for the Pauli-Fierz Hamiltonian, the
ultraviolet cut-off can possibly be removed by introducing a suitable
mass and energy renormalization; see \cite{spohn_dynamics_2004}.
The believe is that, in contrast to classical electrodynamics where
the electron bare mass is sent to negative infinity, in non-relativistic
QED the bare mass should tend to zero as $\Lambda\to\infty$ to compensate
for the growing electrodynamic mass. Our results show that because
of the relativistic dispersion relation of the nucleon this is not
the case for the spinless, one-particle Yukawa model. Namely, in a
neighborhood of the origin of the (total) momentum space and for small
values of the coupling constant, we establish two goals:
\begin{enumerate}
\item We identify the dependence of the ground state energy on $\Lambda$
and the coupling constant $g$.
\item We show that the nucleon mass shell becomes flat in the limit $\Lambda\to\infty$
up to corrections estimated to be $\mathcal{O}_{g\to0}(|g|^{\frac{1}{2}})$,
irrespectively of any scaling of the (nucleon) bare mass $m$, i.e., $m\equiv m(\Lambda)>0$.

\end{enumerate}
Our analysis is based on a multi-scale technique which was developed
in \cite{pizzo_one-particle_2003} to treat the infrared divergence
of the Nelson model, and which was recently refined in \cite{bachmann_mass_2011}
to simultaneously control the infrared and ultraviolet divergences
of the same model. We extend this multi-scale technique further and
apply it to the spinless, one-particle Yukawa model. 

It is interesting to note
that for this model the self-energy diverges
linearly for $\Lambda\to\infty$ as it is the case for its classical
analogue.

\paragraph{Definition of the Model. }

The Hilbert space of the model is
\[
\mathcal{H}:=L^{2}(\mathbb{R}^{3},\mathbb{C};dx)\otimes\mathcal{F}(h)\,,
\]
where $\mathcal{F}(h)$ is the Fock space of scalar bosons
\[
\mathcal{F}(h):=\bigoplus_{j=0}^{\infty}\mathcal{F}^{(j)}\,,\quad\quad\quad\quad\mathcal{F}^{(0)}:=\mathbb{C}\,,\quad\quad\quad\quad\mathcal{F}^{j\geq1}:=\bigodot_{l=1}^{j}h\,,\quad\quad\quad\quad h:=L^{2}(\mathbb{R}^{3},\mathbb{C};dk)
\]
where $\odot$ denotes the symmetric tensor product. Let $a(k),a^{*}(k)$
be the usual Fock space annihilation and creation operators satisfying
the canonical commutation relations (CCR)
\[
\left[a(k),a(q)^{*}\right]=\delta(k-q),\qquad\left[a(k),a(q)\right]=0=\left[a(k)^{*},a^{*}(q)\right],\qquad\forall k,q\in\mathbb{R}^{3}.
\]
The kinematics of the system is described by: (a) The position $x$
and the momentum $p$ of the nucleon that satisfy the Heisenberg commutation
relations. (b) The real scalar field $\Phi$ and its conjugate momentum.

The dynamics is generated by the Hamiltonian 
\begin{equation}
H|_{\kappa}^{\Lambda}:=\sqrt{p^{2}+m^{2}}+H^{f}+g\Phi|_{\kappa}^{\Lambda}(x)\label{eq:hamiltonian}
\end{equation}
where:
\begin{itemize}
  \item $m$ is the nucleon mass;
  \item $g\in\mathbb{R}$ is the coupling constant;
  \item 
  \[
H^{f}:=\int dk\,\omega(k)a^{*}(k)a(k),\qquad\omega(k)\equiv\omega(|k|):=\sqrt{|k|^{2}+\mu^{2}},
\]
is the free field Hamiltonian with $\mu$ being the meson mass;
  \item 
the interaction term is given by
\begin{equation}
\Phi|_{\kappa}^{\Lambda}(x):=\phi|_{\kappa}^{\Lambda}(x)+\phi^{*}|_{\kappa}^{\Lambda}(x),\qquad\phi|_{\kappa}^{\Lambda}(x):=\int_{\mathcal{B}_{\Lambda}\setminus\mathcal{B}_{\kappa}}dk\,\rho(k)a(k)e^{ikx},\qquad\rho(k):=\frac{1}{(2\pi)^{3/2}}\frac{1}{\sqrt{2\omega(k)}}\label{eq:form_factor}
\end{equation}
for $0\leq\kappa<\Lambda$, and for the domain of integration we use the notation $\mathcal{B}_{\sigma}:=\{k\in\mathbb{R}^{3}\,|\:|k|<\sigma\}$
for any $\sigma>0$;
\item we use units such that $\hbar=c=1$.
\end{itemize}

Note that for $\Lambda=\infty$ the formal expression of the interaction
$\Phi|_{\kappa}^{\Lambda}(x)$ is not a well-defined operator on $\mathcal{H}$
because the form factor $\rho(k)$ is not square integrable. It is
well-known (see also Proposition 1.1 below) that for $0\leq\kappa<\Lambda<\infty$
the operator $H|_{\kappa}^{\Lambda}$ is self-adjoint and its domain
coincides with the one of $H^{(0)}:=\sqrt{p^{2}+m^{2}}+H^{f}$

We briefly recall some well-known facts about this model. The total
momentum operator of the system is 
\begin{equation}
P:=p+P^{f}:=p+\int dk\, a^{\text{*}}(k)a(k)\label{eq:total_momentum}
\end{equation}
where $P^{f}$ is the field momentum. Due to translational invariance
of the system the Hamiltonian and the total momentum operator commute.
Hence, the Hilbert space $\mathcal{H}$ can be decomposed on the joint
spectrum of the three components of the total momentum operator, i.e.,
\[
\mathcal{H=\int}^{\oplus}dP\,\mathcal{H}_{P}
\]
here $\mathcal{H}_{P}$ is a copy of the Fock space $\mathcal{F}$
carrying the (Fock) representation corresponding to annihilation and
creation operators 
\[
b(k):=a(k)e^{ikx},\quad b^{*}(k):=a^{*}(k)e^{-ikx}\,.
\]
We will use the same symbol $\mathcal{F}$ for all Fock spaces. The
fiber Hamiltonian can be expressed as 
\[
H_{P}|_{\kappa}^{\Lambda}:=\sqrt{(P-P^{f})^{2}+m^{2}}+H^{f}+g\Phi|_{\kappa}^{\Lambda}
\]
where
\[
\Phi|_{\kappa}^{\Lambda}:=\phi|_{\kappa}^{\Lambda}+\phi^{*}|_{\kappa}^{\Lambda},\qquad\phi|_{\kappa}^{\Lambda}:=\int_{\mathcal{B}_{\Lambda}\setminus\mathcal{B}_{\kappa}}dk\,\rho(k)b(k),
\]
and 
\[
H^{f}=\int dk\,\omega(k)b^{*}(k)b(k),\qquad P^{f}=\int dk\, kb^{*}(k)b(k).
\]
By construction, the fiber Hamiltonian maps its domain in $\mathcal{H}_{P}$
into $\mathcal{H}_{P}$. Finally, for later use we define 
\[
H_{P}^{(0)}:=H_{P}^{nuc}+H^{f},\quad\quad\quad H_{P}^{nuc}:=\sqrt{(P-P^{f})^{2}+m^{2}}.
\]
We restrict our study to the \textbf{model parameters}: 
\[
m>0,\qquad\mu>1,\qquad0<|g|\leq1,\qquad0<\kappa\leq1<\Lambda<\infty,\qquad0<P_{max}<\frac{1}{2},\qquad|P|<P_{max}.
\]
The choice $\mu>1$ and $P_{max}$ less than one is only a technical artifact
of the crude  estimate (\ref{eq:energy_difference}) in the proof
of \lemref{energy} which provides an easy spectral gap estimate in
\lemref{gap_estimate} that we employ in the multi-scale analysis.
\\

Concerning previous results on the spinless, one-particle Yukawa model
we refer the reader to \cite{eckmann_model_1970,fraehlich_infrared_1973,fraehlich_existence_1974,Takaesu_2009_Ground-States-of-the-Yukawa-models-with-Cutoffs}.
In \cite{eckmann_model_1970} Eckmann considers the spinless Yukawa
model without pair-creation with a regularization of the meson form
factor. In contrast to our choice given in (\ref{eq:form_factor})
the interaction term in his Hamiltonian is given by
\[
\underset{|p|,|k|,|p-k|\leq\Lambda}{\int dp\int dk}\frac{n^{*}(p-k)\, a^{*}(k)\, n(p)}{\sqrt{((p-k)^{2}+\mu^{2})^{1/2}(k^{2}+\mu^{2})^{1/2}(p{}^{2}+\mu^{2})^{1/2}}}+h.c.
\]
where $n^{*}(p)$ and $n(p)$ denote the nucleon creation and annihilation
operators. This implies that the Hamiltonian renormalized by means
of a mass operator (for details see \cite{eckmann_model_1970}) converges
in the norm resolvent sense as $\Lambda\to\infty$. Furthermore, in
\cite{eckmann_model_1970} the one-particle scattering states are
constructed in the small coupling regime. Also Fröhlich \cite{fraehlich_existence_1974}
studied the spinless, one-particle Yukawa model but with the meson
form factor $\frac{\rho(k)}{|k|^{1/2}}$, for which he showed that
the Hamiltonian including a logarithmically divergent self-energy
renormalization constant is well defined in the limit $\Lambda\to\infty$
and that the nucleon mass shell is non-trivial.

The behavior of the ground state energy for $\Lambda\to\infty$ has been addressed in
\cite{lieb_loss_self_energy} and \cite{hainzl_2002} for non-relativistic and pesudo-relativistic QED models.
In particular, in \cite{lieb_loss_self_energy},  for the relativistic dispersion relation the
electron self-energy has been proven to obey the same type of dependence on
$\Lambda$ as in our model, but without the restriction to the small coupling
regime. Perturbative mass renormalization in non-relativistic QED has been
addressed in \cite{hiroshima_2005}. Furthermore, mass renormalization based on the binding energy of hydrogen has been discussed in models of quantum electrodynamics in \cite{lieb_loss_hydrogen}.

We also want to mention \cite{Konenberg_2012_The-mass-shell-in-the-semi-relativistic-Pauli-Fierz-model} for a recent application 
of the iterative analytic perturbation theory to the so-called semi-relativistic
Pauli-Fierz model that focusses on the infrared corrections to the
electron mass shell.

\paragraph{Notation.}
\begin{enumerate}
\item The symbol $C$ denotes any positive universal constant and may change
its value from line to line.
\item The components of a vector $v\in\mathbb{R}^{3}$ are denoted by $v=(v_{1},v_{2},v_{3})$.
\item The bars $\left|\cdot\right|,\left\Vert \cdot\right\Vert $ denote
the euclidean and the Fock space norm, respectively.
\item The brackets \textlangle{}·, ·\textrangle{} denote the scalar product
of vectors in ${\cal F}$. Given a subspace ${\cal K}\subseteq{\cal F}$
and an operator $A$ on ${\cal F}$ we use the notation
\[
\left\Vert A\right\Vert _{{\cal K}}\equiv\left\Vert A\restrict{\cal K}\right\Vert _{{\cal F}}.
\]

\item A \emph{hat} over a vector means that the vector is of unit length,
i.e., $\widehat{\Psi}:=\frac{\Psi}{\left\Vert \Psi\right\Vert }.$
\item For two vectors $\psi,\chi$ we write $\psi\parallel\chi$ if they
are parallel and $\psi\perp\chi$ if they are perpendicular.
\item We denote the spectral gap of a self-adjoint operator $H$ restricted
to a subspace ${\cal K}\subseteq{\cal F}$ with unique ground state
$\Psi$ and corresponding ground state energy $E$ by
\[
\gap H{{\cal K}}:=\inf\spec{H\restrict{\cal K}}\setminus\left\{ E\right\} -E=\inf_{\psi\perp\Psi}\left\langle \widehat{\psi},(H-E)\widehat{\psi}\right\rangle 
\]
where the infimum is taken over the domain of $H\restrict{\cal K}$.
\item We use the short-hand notation ($\gamma$ is defined in (\ref{eq:gamma}))
\[
H_{P,n}:=H_{P}|_{\Lambda\gamma^{n},\qquad}^{\Lambda}\ldots|_{n}^{m}=\ldots|_{\Lambda\gamma^{n}}^{\Lambda\gamma^{m}},\qquad\int_{a}^{b}dk=\int_{{\cal B}_{b}\setminus{\cal B}_{a}}dk.
\]

\end{enumerate}

\section{Strategy and Main Results}

Our computations are based on von Neumann expansion formulas of the
ground state of the Hamiltonians $H_{P}|_{\kappa}^{\Lambda}$ by \emph{iterative
analytic perturbation theory}, that means by a multi-scale procedure
that relies on analytic perturbation theory. Indeed, in order to study
the $\Lambda$-dependence of the mass shell, we need to construct
the ground states for a fixed and non-zero value of $g$ that is independent
of the cut-off $\Lambda.$ Note however that unless the coupling constant
$g$ is of order $\left(\frac{1}{\Lambda}\right)^{\frac{1}{2}}$ one
cannot add the full interaction $g\Phi|_{\kappa}^{\Lambda}$ to the
free Hamiltonian $H_{P}^{(0)}$ in a single shot of perturbation theory.
Therefore, instead of adding the interaction in one shot we shall
do many intermediate steps in the expansion by slicing up the interaction
term of the Hamiltonian into smaller pieces, namely slices corresponding
to momentum ranges $[\Lambda\gamma^{n-1},\Lambda\gamma^{n})$ that
can be made arbitrarily thin by adjusting a fineness parameter $\gamma$
\begin{equation}
\frac{1}{2}<\gamma<1.\label{eq:gamma}
\end{equation}
It turns out that in this way one can maintain control over the convergence
radius of the von Neumann expansions uniformly in $\Lambda$. With
respect to this slicing we define the Fock spaces:
\begin{defn}
For $n\in\{0\}\cup\mathbb{N}$, we define the Fock spaces
\begin{eqnarray*}
{\cal F} & := & {\cal F}\left(L^{2}\left(\mathbb{R}^{3},\mathbb{C};dk\right)\right),\\
{\cal F}_{n} & := & {\cal F}\left(L^{2}\left(\mathbb{R}^{3}\setminus{\cal B}_{\Lambda\gamma^{n}},\mathbb{C};dk\right)\right),\qquad\\
{\cal F}|_{n}^{n-1} & := & {\cal F}\left(L^{2}\left({\cal B}_{\Lambda\gamma^{n-1}}\setminus{\cal B}_{\Lambda\gamma^{n}},\mathbb{C};dk\right)\right).
\end{eqnarray*}
In all these Fock spaces we shall use the same symbol $\Omega$ to
denote the vacuum. For a vector $\psi$ in $\mathcal{F}_{n-1}$ and
an operator $O$ on $\mathcal{F}_{n-1}$ we shall use the same symbol
to denote the vector $\psi\otimes\Omega$ in $\mathcal{F}_{n}$ and
the operator $O\otimes\charf_{\mathcal{F}_{n}^{n-1}}$ on $\mathcal{F}_{n}$,
respectively, where $\charf_{\mathcal{F}_{n}^{n-1}}$ is the identity
operator on ${\cal F}|_{n}^{n-1}$ (e.g., $\int_{\Lambda\gamma^{n-1}}^{\Lambda}dk\,\rho(k)b(k)\,\restrict{\cal F}_{n}\equiv\int_{\Lambda\gamma^{n-1}}^{\Lambda}dk\,\rho(k)b(k)\otimes\charf_{\mathcal{F}_{n}^{n-1}}$).We
adapt the notation for the Hamiltonians
\[
H_{P,n}:=H_{P}|_{\Lambda\gamma^{n}}^{\Lambda}=\sqrt{(P-P^{f})^{2}+m^{2}}+H^{f}+g\int_{\Lambda\gamma^{n}}^{\Lambda}dk\,\rho(k)\left(b(k)+b^{*}(k)\right)\,,
\]
and note 
\[
H_{P,n}=H_{P,n-1}+g\Phi|_{n}^{n-1},\qquad\Phi|_{n}^{n-1}:=\phi|_{n}^{n-1}+\phi^{*}|_{n}^{n-1},\qquad\phi|_{n}^{n-1}:=\int_{\Lambda\gamma^{n}}^{\Lambda\gamma^{n-1}}dk\,\rho(k)b(k)\,.
\]

\end{defn}
Furthermore, for simplicity of our presentation we keep an infrared
cut-off
\[
\kappa\equiv\Lambda\gamma^{N}=1\,,
\]
and in the following, for fixed $\Lambda$, the fineness parameter $\gamma$
will be chosen in such a way that

\begin{equation}
N=\frac{\ln\Lambda}{-\ln\gamma}\label{eq:num_steps}
\end{equation}
is an integer. Note that by construction $1\leq\Lambda\gamma^{n}\leq\Lambda$
for $0\leq n\leq N$. 
\begin{rem}
We warn the reader that, though it is not explicit in the notation,
the definitions of $\mathcal{F}_{n}$ and $H_{P,n}$ are $\Lambda-$dependent
as well as for other quantities introduced later on (e.g., $E_{P,n}$,
$\Psi_{P,n})$.
\end{rem}
We introduce:
\begin{defn}
For $P\in\mathbb{R}^{3}$ and integers $0\leq n\leq N$ we define
the ground state energies 
\[
E_{P,n}:=\inf\spec{H_{P,n}\restrict{\cal F}_{n}}.
\]
\end{defn}
The desired expansion formulas are a byproduct of the construction
of the ground states of the Hamiltonians $H_{P,N}\restrict{\cal F}_{N}$,
$|P|<P_{max}$. At the heart of this construction lies an induction
argument. Suppose that:
\begin{elabeling}{00.00.0000}
\item [{\emph{(i)}}] At step $(n-1)$ the vector $\Psi_{P,n-1}$ is the
unique ground state of the Hamiltonian $H_{P,n-1}\restrict{\cal F}_{n-1}$
with corresponding ground state energy $E_{P,n-1}$. 
\item [{\emph{(ii)}}] For some $\zeta>0$ the spectral gap can be bounded
from below by 
\[
\gap{H_{P,n-1}}{{\cal F}_{n-1}}\geq\zeta\omega\left(\Lambda\gamma^{n}\right).
\]

\end{elabeling}
Given the assumptions (i) and (ii) we can derive the implications
reported below.
\begin{enumerate}
\item In \lemref{gap_estimate} we show through a variational argument that
\[
\gap{H_{P,n-1}}{{\cal F}_{n}}\geq\zeta\omega\left(\Lambda\gamma^{n}\right).
\]

\item Next, we justify the Neumann expansion of the resolvent $\frac{1}{H_{P,n}-z}$
in terms of $\frac{1}{H_{P,n-1}-z}$ and the slice interaction $H_{P,n}-H_{P,n-1}$
for $z\in\mathbb{C}$ in the domain defined by 
\[
\frac{1}{2}\zeta\omega\left(\Lambda\gamma^{n+1}\right)\leq\left|E_{P,n-1}-z\right|\leq\zeta\omega\left(\Lambda\gamma^{n+1}\right)
\]
 by a direct computation; see \lemref{neumann}. We find
\[
\left\Vert \left(\frac{1}{H_{P,n-1}-z}\right)^{1/2}g\Phi|_{n}^{n-1}\left(\frac{1}{H_{P,n-1}-z}\right)^{1/2}\right\Vert _{{\cal {\cal F}}_{n}}\leq C|g|
\]
uniformly in $n$ and in $\Lambda$. The reason for this is that we
add interaction slices starting from $\Lambda$ down to $\Lambda\gamma^{N}=1$
in decreasing order so that the contribution of 
\[
\left\Vert g\phi|_{n}^{n-1}\left(\frac{1}{H_{P,n-1}-z}\right)^{1/2}\right\Vert _{{\cal F}_{n}}\leq C|g|\left(\Lambda\gamma^{n-1}(1-\gamma)\right)^{1/2}
\]
 is compensated thanks to the spectral gap estimate and the chosen
domain for $z$ which gives
\[
\left\Vert \left(\frac{1}{H_{P,n-1}-z}\right)^{1/2}\right\Vert _{{\cal F}_{n}}\leq C\left(\frac{1}{\Lambda\gamma^{n}(1-\gamma)}\right)^{1/2}.
\]

\item Finally, \thmref{mass_shell} ensures the existence of a unique ground
state 
\begin{multline}
\Psi_{P,n}:=-\frac{1}{2\pi i}\oint_{\Gamma_{P,n}}\frac{dz}{H_{P,n}-z}\Psi_{P,n-1}\\
=-\frac{1}{2\pi i}\sum_{j=0}^{\infty}\oint_{\Gamma_{P,n}}\frac{dz}{H_{P,n-1}-z}\left[-(H_{P,n}-H_{P,n-1})\frac{1}{H_{P,n-1}-z}\right]^{j}\Psi_{P,n-1}\label{eq:expansion formula-1}
\end{multline}
of the Hamiltonian $H_{P,n}\restrict{\cal F}_{n}$ by analytic perturbation
theory for sufficiently small $|g|$ uniformly in $n$ and $\Lambda<\infty$,
where the contour $\Gamma_{P,n}$ is appropriately chosen around $E_{P,n-1}$;
see \defref{ground_states}.
\item Furthermore, another variational argument guarantees $E_{P,n}\leq E_{P,n-1}$
and, hence, by Kato's theorem 
\[
\gap{H_{P,n}}{{\cal F}_{n}}\geq\zeta\omega\left(\Lambda\gamma^{n+1}\right).
\]

\end{enumerate}
Along this construction we gain the expansion formula (\ref{eq:expansion formula-1})
of the ground state $\Psi_{P,n}$ in terms of the previous ground
state $\Psi_{P,n-1}$ for each induction step. The above induction
is based on the following well-known results:
\begin{prop}
For $P\in\mathbb{R}^{3}$ and any integer $0\leq n<\infty$ the Hamiltonians
$H_{P}^{nuc},H^{f},H_{P}^{(0)},H_{P,n}$ acting on ${\cal F}$ are
essentially self-adjoint on the domain $D(H_{P=0}^{(0)})$ and bounded
from below.
\end{prop}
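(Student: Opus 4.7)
The plan is a standard application of the Kato--Rellich theorem, after first identifying the free Hamiltonian as a sum of two commuting non-negative self-adjoint operators. I would begin by recalling that $H^{f}$ and the components of $P^{f}$ are second quantizations of multiplication operators on $L^{2}(\mathbb{R}^{3},dk)$, hence essentially self-adjoint on the finite-particle subspace and mutually commuting. Since $|k|\leq\omega(k)$, one has $|P^{f}|\leq H^{f}$, so $D(H^{f})\subseteq D(|P^{f}|)$. The nucleon kinetic term $H_{P}^{nuc}=\sqrt{(P-P^{f})^{2}+m^{2}}$ is defined via the joint functional calculus of the commuting triple $P^{f}$, is self-adjoint on $D(|P^{f}|)$, and has $D(H^{f})$ as a core; moreover the Lipschitz bound $|\sqrt{a^{2}+m^{2}}-\sqrt{b^{2}+m^{2}}|\leq|a-b|$ gives $\|H_{P}^{nuc}-H_{P=0}^{nuc}\|\leq|P|$ as a bounded operator, so the domain is $P$-independent. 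Since $H^{f}\geq 0$ and $H_{P}^{nuc}\geq m$ are commuting non-negative self-adjoint operators, their sum $H_{P}^{(0)}$ is self-adjoint on $D(H_{P}^{nuc})\cap D(H^{f})=D(H^{f})=D(H_{P=0}^{(0)})$ and bounded below by $m$.

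For $H_{P,n}$ I would then invoke the standard $N_{\tau}$-estimates obtained from the CCR:
\[
\|\phi|_{n}^{0}\psi\|\leq\left\|\tfrac{\rho}{\sqrt{\omega}}\right\|_{L^{2}(\mathcal{B}_{\Lambda}\setminus\mathcal{B}_{\Lambda\gamma^{n}})}\|(H^{f})^{1/2}\psi\|,\qquad\|\phi^{*}|_{n}^{0}\psi\|^{2}\leq\left\|\tfrac{\rho}{\sqrt{\omega}}\right\|_{L^{2}}^{2}\|(H^{f})^{1/2}\psi\|^{2}+\|\rho\|_{L^{2}}^{2}\|\psi\|^{2}.
\]
Because integration runs over a bounded annulus on which $\omega\geq\mu>0$, both $L^{2}$-norms on the right are finite. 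Combined with the standard inequality $\|(H^{f})^{1/2}\psi\|\leq\epsilon\|H^{f}\psi\|+\epsilon^{-1}\|\psi\|$, this yields $\|g\Phi|_{n}^{0}\psi\|\leq\epsilon\|H^{f}\psi\|+C_{\epsilon,g,\Lambda}\|\psi\|$ for every $\epsilon>0$, i.e.\ the interaction is $H^{f}$-bounded with relative bound zero. Since $H^{f}$ and $H_{P}^{nuc}\geq 0$ commute, one has $\|H^{f}\psi\|\leq\|H_{P}^{(0)}\psi\|$, so the interaction is likewise $H_{P}^{(0)}$-bounded with relative bound zero. The Kato--Rellich theorem then gives self-adjointness of $H_{P,n}$ on $D(H_{P=0}^{(0)})$ and a lower bound on its spectrum.

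The only subtlety worth flagging is that the $L^{2}$-norms of $\rho$ and $\rho/\sqrt{\omega}$ on the annulus depend on $\Lambda$ and diverge as $\Lambda\to\infty$, so the Kato--Rellich constants produced by this crude argument are hopelessly $\Lambda$-dependent. For the present proposition this is harmless -- it only concerns fixed $\Lambda<\infty$ -- but it is precisely the reason the multiscale construction of the remainder of the paper is required in order to obtain bounds that are \emph{uniform} in the ultraviolet cutoff.
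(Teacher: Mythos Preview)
Your argument is correct and is the standard Kato--Rellich route one would expect. Note, however, that the paper does not actually give a proof of this proposition: it is stated immediately after the sentence ``The above induction is based on the following well-known results'' and is left without proof, as is the subsequent theorem (attributed to Gross). So there is no proof in the paper to compare against; you have supplied the details the authors deliberately omit as folklore.

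Your closing remark is also on point: the $\Lambda$-dependence of the Kato--Rellich constants is exactly why this proposition, while sufficient to get the iterative construction off the ground at each fixed $n$ and $\Lambda$, says nothing about the limit $\Lambda\to\infty$ and why the multiscale machinery of Sections~3--4 is needed.
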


\begin{thm}
For $P\in\mathbb{R}^{3}$ and integers $0\leq n<\infty$ the ground
state energies fulfill
\begin{equation}
E_{P,n}\geq E_{0,n}.\label{eq:gross}
\end{equation}

\end{thm}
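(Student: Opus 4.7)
The plan is to combine the operator convexity of the relativistic kinetic energy with the reflection symmetry $k\mapsto -k$ on $\mathcal{F}_{n}$, and to close the argument via a positivity (Perron--Frobenius) input. Since $\omega_{m}(x):=\sqrt{x^{2}+m^{2}}$ is convex on $\mathbb{R}^{3}$, the scalar Jensen inequality $\omega_{m}(P-k)+\omega_{m}(P+k)\ge 2\omega_{m}(k)$ lifts, via the spectral theorem applied to the commuting components of $P^{f}$, to the operator inequality
\[
\sqrt{(P-P^{f})^{2}+m^{2}}+\sqrt{(P+P^{f})^{2}+m^{2}}\ \ge\ 2\sqrt{(P^{f})^{2}+m^{2}}\qquad\text{on }\mathcal{F}_{n}.
\]
I would then introduce the unitary $U$ on $\mathcal{F}_{n}$ defined by $Ub(k)U^{*}=b(-k)$. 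Because $\omega$ and $\rho$ are radial, $U$ commutes with $H^{f}$ and with the interaction $g\int_{\Lambda\gamma^{n}}^{\Lambda}dk\,\rho(k)(b(k)+b^{*}(k))$, while $UP^{f}U^{*}=-P^{f}$; thus $UH_{P,n}U^{*}=H_{-P,n}$ and the previous display rewrites as $H_{P,n}+H_{-P,n}\ge 2H_{0,n}$ on $\mathcal{F}_{n}$.

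The delicate step is to deduce $E_{P,n}\ge E_{0,n}$ from this operator inequality. Naive averaging of expectations against a test state $\Psi$ gives only the symmetrized bound $\tfrac{1}{2}(\langle\Psi,H_{P,n}\Psi\rangle+\langle U\Psi,H_{P,n}U\Psi\rangle)\ge E_{0,n}\|\Psi\|^{2}$, which is strictly weaker than what is needed: one easily writes $2\times 2$ toy examples where $A+UAU^{*}\ge 2C$ fails to imply $\inf\operatorname{spec}A\ge\inf\operatorname{spec}C$. My strategy is to switch to the Schr\"odinger (Q-space) realization of $\mathcal{F}_{n}$, where $\Phi$ acts as multiplication by a real Gaussian field, and where both $e^{-tH^{f}}$ and $e^{-t\sqrt{(P^{f})^{2}+m^{2}}}$ are positivity improving (using for the latter the L\'evy subordination of the relativistic kinetic semigroup). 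Hence $e^{-tH_{0,n}}$ is positivity improving and $H_{0,n}$ admits a unique, strictly positive ground state $\Psi_{0}$, which is automatically $U$-invariant by uniqueness. A diamagnetic-type pointwise bound
\[
\bigl|(e^{-tH_{P,n}}\Psi)(q)\bigr|\ \le\ \bigl(e^{-tH_{0,n}}|\Psi|\bigr)(q)
\]
would follow from the operator inequality $H_{P,n}+H_{-P,n}\ge 2H_{0,n}$ combined with the L\'evy path representation of the relativistic kinetic semigroup; sandwiching against $\Psi_{0}$ and letting $t\to\infty$ then extracts $E_{P,n}\ge E_{0,n}$.

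The main technical obstacle is precisely this last diamagnetic estimate: the classical diamagnetic inequality of Simon is formulated for $c$-number vector potentials, whereas here the ``shift'' $P-P^{f}$ involves the operator $P^{f}$, which acts nontrivially on the Fock factor and does not commute with $H^{f}$ or $\Phi$. Carrying out the path-integral comparison inside the second-quantized framework is where the real work lies; once this is in place, the conclusion follows in a single line. An alternative, less pedagogical route is to simply quote the inequality as a classical consequence of the analysis in \cite{fraehlich_existence_1974}, whose Hamiltonian differs from ours only in the power of $|k|$ appearing in the form factor and for which the analogue of (\ref{eq:gross}) is established.
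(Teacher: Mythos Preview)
The paper does not give a proof: immediately after the statement it writes ``The inequality in (\ref{eq:gross}) is due to \cite{gross_existence_1972}'' and moves on. Your fallback of citing the literature is therefore exactly what the paper does, though the attribution is to Gross rather than to Fr\"ohlich.

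Your attempted direct argument has a real gap. The convexity step and the reflection $U$ are correct and do give $H_{P,n}+H_{-P,n}\ge 2H_{0,n}$, but as you note yourself, this operator inequality is strictly weaker than $E_{P,n}\ge E_{0,n}$. The problem is your proposed bridge: you claim the pointwise diamagnetic bound $|e^{-tH_{P,n}}\Psi|\le e^{-tH_{0,n}}|\Psi|$ ``would follow from the operator inequality $H_{P,n}+H_{-P,n}\ge 2H_{0,n}$ combined with the L\'evy path representation.'' It does not. An inequality $A+UAU^{*}\ge 2C$ carries no information about positivity of integral kernels and cannot by itself produce pointwise domination of semigroups; your own $2\times2$ counterexample remark applies here just as well. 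The convexity detour is correct but buys nothing toward the conclusion.

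The mechanism in Gross and Fr\"ohlich bypasses convexity entirely. In the functional-integral representation of $\langle\psi,e^{-tH_{P,n}}\psi\rangle$ (Feynman--Kac, with L\'evy subordination for the relativistic kinetic term), the parameter $P$ enters solely through a unimodular factor $e^{iP\cdot X}$ against an otherwise $P$-independent nonnegative measure; the interaction $g\Phi$ contributes a real multiplicative weight in Q-space. Taking absolute values inside the integral removes the phase and yields the $P=0$ expression, which \emph{is} the diamagnetic inequality; your last step (pair against the positive $\Psi_{0}$ and send $t\to\infty$) then closes the argument. So the ingredients you list in your second paragraph---Q-space, positivity improvement of $e^{-tH_{0,n}}$, L\'evy subordination---are the right ones; drop the convexity step and use the path integral directly to expose the phase.
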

The inequality in (\ref{eq:gross}) is due to \cite{gross_existence_1972}.
\begin{rem}
We remark that the construction of the ground state can be implemented
for $\gamma$ arbitrarily close to $1$. This feature of our technique
will be crucial to derive the results on the limiting regime, as $\Lambda\to\infty$,
of the ground state energy and of the effective velocity stated in
Theorems (\ref{thm:energy_bounds}) and (\ref{thm:effective_velocity-2}),
respectively. Indeed, by (\ref{eq:num_steps}) it allows us to control
any error term that can be bounded by $\mathcal{O}(N(1-\gamma)^{1+\varepsilon})$
with $\varepsilon>0$.
\end{rem}
\textbf{Main Results.} As a direct application of the established
expansion formulas we can bound the ground state energy from above
and from below. The bounds are sharp in the sense that they identify
the order of dependence of the ground state energy on the ultraviolet
cut-off $\Lambda$ and the coupling constant $g$:
\begin{thm}
\label{thm:energy_bounds}Let $|g|$ be sufficiently small and $|P|<P_{max}$.
Define $E_{P,\Lambda}:=\inf\spec{H_{P}|_{\kappa}^{\Lambda}}$. There
exist universal constants $a,b>0$ such that for all $1<\Lambda<\infty$
it holds 
\begin{equation}
\sqrt{P^{2}+m^{2}}-g^{2}b\Lambda\leq E_{P,\Lambda}\leq\sqrt{P^{2}+m^{2}}-g^{2}a\Lambda\label{eq:ground_state_energy_estimate-1}
\end{equation}

\end{thm}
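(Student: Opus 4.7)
The plan is to use the multi-scale construction of Section~2 to write
\[
E_{P,\Lambda}-\sqrt{P^2+m^2}\;=\;\sum_{n=1}^N\bigl(E_{P,n}-E_{P,n-1}\bigr)
\]
(recall $E_{P,0}=\sqrt{P^2+m^2}$ and $\Lambda\gamma^N=\kappa=1$) and to establish matching two-sided step bounds
\[
-C_2\, g^2\,\Lambda\gamma^{n-1}(1-\gamma)\;\leq\;E_{P,n}-E_{P,n-1}\;\leq\;-C_1\, g^2\,\Lambda\gamma^{n-1}(1-\gamma)
\]
with universal $C_1,C_2>0$. The geometric sum $\sum_{n=1}^N\gamma^{n-1}(1-\gamma)=1-\Lambda^{-1}$ then yields the claim (absorbing the $\Lambda$-independent constant into the universal constants).

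The step formula follows from combining $H_{P,n}\Psi_{P,n}=E_{P,n}\Psi_{P,n}$ with $(H_{P,n-1}-E_{P,n-1})\Psi_{P,n-1}=0$:
\[
E_{P,n}-E_{P,n-1}=\frac{g\,\langle\Psi_{P,n-1},\Phi|_n^{n-1}\Psi_{P,n}\rangle}{\langle\Psi_{P,n-1},\Psi_{P,n}\rangle},
\]
into which I insert the Neumann expansion (\ref{eq:expansion formula-1}). Because $\Psi_{P,n-1}$ has vacuum in the slice, only $\phi^*|_n^{n-1}\Psi_{P,n-1}$ survives; the CCR intertwining $H_{P,n-1}b^*(k)\Psi_{P,n-1}=b^*(k)(H_{P-k,n-1}+\omega(k))\Psi_{P,n-1}$ (valid for slice $k$ since $\Phi|_{n-1}$ commutes with slice creation operators) then produces the leading second-order contribution
\[
-\,g^2\!\int_{\Lambda\gamma^n}^{\Lambda\gamma^{n-1}}\! dk\,\rho(k)^2\,\bigl\langle\widehat{\Psi}_{P,n-1},\,(H_{P-k,n-1}+\omega(k)-E_{P,n-1})^{-1}\widehat{\Psi}_{P,n-1}\bigr\rangle,
\]
together with an $O(g^4)$ remainder of the same slice-measure order (odd orders vanish by slice-boson parity), controlled uniformly in $n,\Lambda$ by the Neumann bound $\|g\Phi|_n^{n-1}(H_{P,n-1}-z)^{-1/2}\|_{\mathcal{F}_n}\leq C|g|$ recalled in Section~2.

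The two sides of the step bound correspond to two-sided control of the resolvent expectation. For the \emph{lower} bound on $E_{P,n}$ (upper bound on the step magnitude), the inductive spectral-gap estimate $\gap{H_{P,n-1}}{\mathcal{F}_{n-1}}\geq\zeta\omega(\Lambda\gamma^n)$ combined with $H_{P-k,n-1}\geq E_{P-k,n-1}$ bounds the denominator below by $c\,\omega(k)$, hence the expectation by $(c\omega(k))^{-1}$. For the \emph{upper} bound on $E_{P,n}$ (matching lower bound on the step magnitude), operator convexity of $x\mapsto 1/x$ (Jensen) bounds the expectation below by the reciprocal of $\langle\widehat{\Psi}_{P,n-1},H_{P-k,n-1}\widehat{\Psi}_{P,n-1}\rangle+\omega(k)-E_{P,n-1}$; the $1$-Lipschitz property of $x\mapsto\sqrt{x^2+m^2}$ controls the first term by $E_{P,n-1}+|k|$, so the Jensen denominator is at most $|k|+\omega(k)\leq 2\omega(k)$. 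In both cases a spherical-coordinate integration of $\rho(k)^2/\omega(k)\propto 1/\omega(k)^2$ over the slice gives the slice contribution $\asymp\Lambda\gamma^{n-1}(1-\gamma)$, and telescoping yields the result.

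The main obstacle I anticipate is the lower bound on the resolvent denominator: one must guarantee that translating $P\to P-k$ in the inner Hamiltonian cannot drop its ground-state energy by more than a bounded fraction of $\omega(k)$, uniformly in $n$ and $\Lambda$. This is precisely what the inductive spectral-gap estimate of the multi-scale scheme provides, and it is the freedom to take $\gamma$ close to $1$ (so that $\gamma$-dependent constants do not degenerate) that keeps the $O(g^4\Lambda)$ remainder from the higher orders of the Neumann series genuinely subleading and lets the two-sided bound hold with universal constants.
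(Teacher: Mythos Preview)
Your overall strategy coincides with the paper's: telescope $E_{P,\Lambda}-\sqrt{P^2+m^2}$ into the step increments, identify the leading second-order contribution via pull-through, establish two-sided bounds $g^2a\Lambda\gamma^{n-1}(1-\gamma)\le\Delta E_P|_n^{n-1}\le g^2b\Lambda\gamma^{n-1}(1-\gamma)$, sum the geometric series, and let $\gamma\to 1$ to make the $O(g^4)$ remainder negligible. This is exactly \lemref{ground_state_energy_bounds} followed by the short proof at the end of \secref{mass_shell}.

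One point needs correction. You attribute the bound $H_{P-k,n-1}+\omega(k)-E_{P,n-1}\ge c\,\omega(k)$ to ``the inductive spectral-gap estimate'', and in your last paragraph you call this the main obstacle and claim the gap estimate resolves it. That is not right: the gap of $H_{P,n-1}\restrict\mathcal{F}_{n-1}$ concerns the spectrum \emph{above} the ground state of $H_{P,n-1}$ and says nothing about $E_{P-k,n-1}-E_{P,n-1}$. The correct input is \lemref{energy}(iii), which rests on Gross's inequality $E_{Q,n}\ge E_{0,n}$ together with the $1$-Lipschitz property of the nucleon dispersion; this yields $E_{P-k,n-1}-E_{P,n-1}\ge-|P|\,\omega(k)\ge-P_{\max}\,\omega(k)$, hence the denominator is bounded below by $(1-P_{\max})\,\omega(k)$.

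Your Jensen step for the \emph{lower} bound on $\Delta E_P|_n^{n-1}$ is correct and is actually simpler than what the paper does. The paper (see \lemref{ground_state_energy_bounds}(ii), which refers back to the argument of \thmref{expansion_formulas}(iii)) bounds the resolvent expectation from below via a spectral-projection splitting $\chi^\pm(k)$ relative to the threshold $5\omega(k)$, combined with a resolvent identity, and treats separately the cases where most of $\widehat\Psi_{P,n-1}$ lies below or above that threshold. Your one-line route, $\langle\psi,A^{-1}\psi\rangle\ge\langle\psi,A\psi\rangle^{-1}$ for positive $A$ followed by $\langle\widehat\Psi_{P,n-1},(H_{P-k,n-1}-H_{P,n-1})\widehat\Psi_{P,n-1}\rangle\le|k|$, gives the same $\asymp 1/\omega(k)$ bound. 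Note, however, that the positivity of $A=H_{P-k,n-1}+\omega(k)-E_{P,n-1}$ required for the Jensen step again relies on \lemref{energy}(iii), not on the gap estimate.
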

The proof will be given in the end of \secref{mass_shell}. \\

In our second main result we give an estimate of the effective velocity
of the nucleon in a one-particle state:
\begin{thm}
\label{thm:effective_velocity-2}Let $|g|$ be sufficiently small
and $|P|<P_{max}$. Then, there exist universal constants $c_{1},C_{1}>0$
such that the following estimate holds true
\end{thm}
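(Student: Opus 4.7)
The plan is to use the Feynman--Hellmann identity
\[
\nabla_P E_{P,N} = \left\langle \widehat{\Psi}_{P,N},\,\frac{P - P^f}{\sqrt{(P - P^f)^2 + m^2}}\,\widehat{\Psi}_{P,N}\right\rangle,
\]
which is available at every finite $\Lambda$ because the construction described in Steps 1--4 above produces an analytic family $P\mapsto(E_{P,N},\Psi_{P,N})$ on $\{|P|<P_{max}\}$, uniformly in $N$. The goal is then to employ the iterative expansion formula (\ref{eq:expansion formula-1}) in order to compute this expectation value scale by scale and to extract the claimed $\mathcal{O}(|g|^{1/2})$ behavior together with a vanishing $\Lambda\to\infty$ correction.

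First I would write the telescoping decomposition
\[
\widehat{\Psi}_{P,N} = \Omega + \sum_{n=1}^{N}\bigl(\widehat{\Psi}_{P,n} - \widehat{\Psi}_{P,n-1}\bigr)
\]
and plug (\ref{eq:expansion formula-1}) into each difference. The $j$-th term of the corresponding Neumann series adds exactly $j$ bosons with momenta in the shell ${\cal B}_{\Lambda\gamma^{n-1}}\setminus{\cal B}_{\Lambda\gamma^{n}}$, and by Step 2 of the strategy it is bounded in norm by $C|g|^{j}$ uniformly in $n$ and $\Lambda$. This allows a systematic bookkeeping of the various contributions to $\langle \widehat{\Psi}_{P,N},(\nabla_P H_{P,N})\widehat{\Psi}_{P,N}\rangle$ organized by scale and by boson number.

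Next I would exploit two structural features to extract smallness. The first is the spherical symmetry of the form factor $\rho(|k|)$ in $\Phi|_n^{n-1}$, combined with the kinematical observation that, for $|k|\geq \Lambda\gamma^{n-1}\geq 1>|P|+P_{max}$, the velocity operator $(P - P^f - k)/\sqrt{(P - P^f - k)^2+m^2}$ differs from $-\widehat{k}$ only by $\mathcal{O}(1/|k|)$; hence the leading first-order contribution of each slice vanishes after angular integration and the surviving remainder is suppressed by $(\Lambda\gamma^{n-1})^{-1}$, yielding a summable geometric series in $n$. The second feature is that for cross terms involving two distinct scales $n\neq m$, I would apply Cauchy--Schwarz together with the uniform norm bound $\|\widehat{\Psi}_{P,n}-\widehat{\Psi}_{P,n-1}\|\leq C|g|$ obtained from Step 3; extracting a square root from the diagonal pieces of these Cauchy--Schwarz estimates is where the $|g|^{1/2}$ rate enters, instead of the naive $|g|$.

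The main obstacle will be to ensure that the $N=\ln\Lambda/(-\ln\gamma)$ cumulative slice contributions do not blow up and in fact vanish as $\Lambda\to\infty$. This is where the freedom to choose $\gamma$ arbitrarily close to $1$ --- highlighted in the remark preceding \thmref{energy_bounds} --- becomes essential: any per-slice remainder of the form $(1-\gamma)^{1+\varepsilon}$ with $\varepsilon>0$ can be summed into $\mathcal{O}(N(1-\gamma)^{1+\varepsilon})$, which by (\ref{eq:num_steps}) can be made arbitrarily small uniformly in $\Lambda$ by tuning $\gamma$ suitably. Balancing the angular-cancellation estimate for the first-order contributions with the Cauchy--Schwarz treatment of the cross terms, and then summing the resulting telescoping series, is the quantitative core of the argument.
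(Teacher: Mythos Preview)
Your proposal has a genuine structural gap: the additive telescoping decomposition you set up cannot produce the damping factor $\Lambda^{-g^{2}c_{1}}$ in front of $|P|/\sqrt{P^{2}+m^{2}}$, and without it the theorem is false. In your scheme the zeroth-order piece $\langle\Omega,V_{i}(P)\Omega\rangle=P_{i}/\sqrt{P^{2}+m^{2}}$ sits there as an $\mathcal{O}(1)$ constant, and everything you do afterwards (angular cancellation of first-order slices, Cauchy--Schwarz on cross terms, summing $(1-\gamma)^{1+\varepsilon}$ remainders) only controls the \emph{corrections} to that constant. You would end up with $|\partial_{P_{i}}E_{P,N}|\leq |P|/\sqrt{P^{2}+m^{2}}+o(1)$, which does not imply the claimed flattening as $\Lambda\to\infty$.

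The paper's proof is organized around a \emph{multiplicative} recursion, not an additive one. One expands the normalized expectation $\langle\widehat{\Psi}_{P,n},V_{i}(P)\widehat{\Psi}_{P,n}\rangle$ in terms of $\langle\widehat{\Psi}_{P,n-1},V_{i}(P)\widehat{\Psi}_{P,n-1}\rangle$ using \thmref{expansion_formulas}; the point is that the change of normalization $\|\Psi_{P,n}\|^{2}=(1-g^{2}\alpha_{P}|_{n}^{n-1}+\ldots)\|\Psi_{P,n-1}\|^{2}$ together with a specific $\mathcal{O}(g^{2})$ cross term in the numerator combine to give
\[
\langle\widehat{\Psi}_{P,n},V_{i}\widehat{\Psi}_{P,n}\rangle=\bigl(1-g^{2}\alpha_{P}|_{n}^{n-1}\bigr)\langle\widehat{\Psi}_{P,n-1},V_{i}\widehat{\Psi}_{P,n-1}\rangle+A_{P,n-1}+B_{P,n-1}+\mathcal{O}\bigl(|g|^{4}(1-\gamma)^{2}\bigr),
\]
with $\alpha_{P}|_{n}^{n-1}\geq c_{1}(1-\gamma)$ by (\ref{eq:tool-2}). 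Iterating, the product $\prod_{j=1}^{N}(1-g^{2}\alpha_{P}|_{j}^{j-1})\leq\Lambda^{-g^{2}c_{1}(1-\gamma)/(-\ln\gamma)}$ damps the initial value $P_{i}/\sqrt{P^{2}+m^{2}}$; this damping is invisible in your additive telescoping. The origin of the $|g|^{1/2}$ is also different from what you propose: it does not come from a Cauchy--Schwarz square root, but from a one-step \emph{backwards} expansion of $B_{P,n-1}$ to a $g$-dependent intermediate scale $\Xi_{n-1}\sim\Lambda\gamma^{n-1}/|g|^{1/2}$, which trades part of the gap for an extra power $|g|^{1/2}$ and yields $|B_{P,n-1}|\leq C|g|^{5/2}(1-\gamma)$; after dividing by the $g^{2}(1-\gamma)$ coming from the geometric sum of the damping factors one is left with $C|g|^{1/2}$. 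Your angular-cancellation idea is in spirit related to the paper's treatment of $A_{P,n-1}$ (which uses rotational invariance at $P=0$ and a $\lambda$-derivative to gain $(\Lambda\gamma^{n})^{-1}$), but that piece alone does not carry the argument.
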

\begin{equation}
\lim sup_{\gamma\to1}\left|\frac{\partial E_{P,N}}{\partial P_{i}}\right|\leq\Lambda^{-g^{2}c_{1}}\frac{|P|}{\left[P^{2}+m^{2}\right]^{1/2}}+C_{1}|g|^{1/2},\qquad i=1,2,3.\label{thm:effective_velocity-1-1}
\end{equation}
The proof will be given in Section (\ref{sec:eff_vel_eff_mass}).
A direct consequence of the bound in (\ref{thm:effective_velocity-1-1})
is

\begin{equation}
\lim sup_{\Lambda\to\infty}\left|\frac{\partial E_{P,\Lambda}}{\partial P_{i}}\right|\leq C|g|^{1/2}.\label{eq:eff_mass_limit-1}
\end{equation}
In order to interpret this result consider that in the free case,
i.e., $g=0$, one finds 
\[
\left|\frac{\partial E_{P,\Lambda}}{\partial P_{i}}\right|=\frac{|P_{i}|}{\sqrt{P^{2}+m^{2}}}.
\]
Therefore, \thmref{effective_velocity-2} states that if the interaction
is turned on, even for an arbitrarily small but non-zero $|g|$,
 the absolute value of the gradient
of the ground state energy decreases to an order smaller or equal
to $|g|^{1/2}$ in the limit $\Lambda\to\infty$. The physical interpretation of this result is that
the mass shell essentially becomes flat and the theory trivial in
the limit $\Lambda\to\infty$. Moreover, our proof shows that not
even a suitable scaling of the bare mass, i.e., $m\equiv m(\Lambda)>0$,
may prevent the mass shell from becoming essentially flat. 

A crucial tool for the above results comes from the non-perturbative
estimates that we derive in Theorem (\ref{thm:expansion_formulas})
and Theorem (\ref{lem:ground_state_energy_bounds}), respectively:
\begin{equation}
a\Lambda\gamma^{n-1}(1-\gamma)\leq\left\langle \widehat{\Psi}_{P,n-1},\phi|_{n}^{n-1}\frac{1}{H_{P,n-1}-E_{P,n-1}}\phi^{*}|_{n}^{n-1}\widehat{\Psi}_{P,n-1}\right\rangle \leq b\Lambda\gamma^{n-1}(1-\gamma)\,,\label{eq:tool-1}
\end{equation}
\begin{equation}
c_{1}(1-\gamma)\leq\alpha_{P}|_{n}^{n-1}:=\left\langle \widehat{\Psi}_{P,n-1},\phi|_{n}^{n-1}\left(\frac{1}{H_{P,n-1}-E_{P,n-1}}\right)^{2}\phi^{*}|_{n}^{n-1}\widehat{\Psi}_{P,n-1}\right\rangle \leq c_{2}(1-\gamma)\label{eq:tool-2}
\end{equation}
which hold for some universal constants $0<a\leq b<\infty$, $0<c_{1}\leq c_{2}<\infty$.
In order to get the bounds in (\ref{eq:tool-1})-(\ref{eq:tool-2})
we make use of the spectral information obtained during the construction
of the ground states.

The strategy of proof in \thmref{effective_velocity-2} consists in
re-expanding back the vectors in the matrix element yielding the effective
velocity. This means that, iteratively, the matrix element 
\[
\left\langle \widehat{\Psi}_{P,n},V_{i}(P)\widehat{\Psi}_{P,n}\right\rangle \equiv\frac{\partial E_{P,n}}{\partial P_{i}},\qquad V_{i}(P):=\frac{P_{i}-P_{i}^{f}}{\left[(P-P^{f})^{2}+m^{2}\right]^{1/2}}
\]
will be expressed in terms of:
\begin{enumerate}
\item The analogous quantity on scale $n-1$, i.e., 
\begin{equation}
\Big\langle\widehat{\Psi}_{P,n-1},V_{i}(P)\widehat{\Psi}_{P,n-1}\Big\rangle\label{eq:leading term}
\end{equation}

\item The scalar products
\[
A_{P,n-1}:=g^{2}\left\langle \frac{1}{H_{P,n-1}-E_{P,n-1}}\phi^{*}|_{n}^{n-1}\widehat{\Psi}_{P,n-1},V_{i}(P)\frac{1}{H_{P,n-1}-E_{P,n-1}}\phi^{*}|_{n}^{n-1}\widehat{\Psi}_{P,n-1}\right\rangle 
\]
and 
\[
B_{P,n-1}:=2g^{2}\Re\left\langle {\cal \widetilde{{\cal Q}}}_{P,n-1}^{\perp}\frac{1}{H_{P,n-1}-E_{P,n-1}}\phi|_{n}^{n-1}\frac{1}{H_{P,n-1}-E_{P,n-1}}\phi^{*}|_{n}^{n-1}\widehat{\Psi}_{P,n-1},V_{i}(P)\widehat{\Psi}_{P,n-1}\right\rangle 
\]
where ${\cal \widetilde{{\cal Q}}}_{P,n-1}^{\perp}$ is defined in
\eqref{Qtilde_def}.\\

\item A remainder that can be estimated to be $\mathcal{O}(|g|^{4}(1-\gamma)^{\frac{4}{2}})$.
\end{enumerate}
The hard part of our proof is showing that some a priori estimates
on $A_{P,n-1}$ and $B_{P,n-1}$ hold so that they shall not be re-expanded
like the leading term (\ref{eq:leading term}) but their cumulative
contribution can be estimated to be of order $|g|^{\frac{1}{2}}$
as in (\ref{thm:effective_velocity-1-1}). Two substantially different
arguments are devised to control $A_{P,n-1}$ and $B_{P,n-1}$:
\begin{itemize}
\item As for $A_{P,n-1}$, due to the velocity operator $V_{i}(P)$ we can
show summability in $n$ after contracting the boson operators $\phi^{*}|_{n}^{n-1}$.
\item As for $B_{P,n-1}$, by exploiting the presence of the orthogonal
projection ${\cal \widetilde{{\cal Q}}}_{P,n-1}^{\perp}$ and a suitable
\emph{one-step, $g-$dependent} backwards expansion\emph{,} we can improve the
crude estimate, $\mathcal{O}(g^{2}(1-\gamma))$, that follows from
the operator bounds derived in Section \ref{sec:mass_shell} by, at
least, an extra factor $|g|^{\frac{1}{2}}$.
\end{itemize}
The product of the coefficients $\{(1-g^{2}\alpha_{P}|_{n}^{n-1})\}_{1\leq n\leq N}$
that are generated in front of the leading term (\ref{eq:leading term})
at each step of the re-expansion gives rise to a damping factor bounded
above by $\Lambda^{-g^{2}c_{1}}$ as $\gamma$ tends to $1$.

\section{\label{sec:mass_shell}Construction of the One-Particle States}

We begin our discussion with the construction of the ground states
corresponding to the Hamiltonians $H_{P,n}\restrict{\cal F}_{n}$,
$0\leq n\leq N$. This construction is based on an induction completed
in \thmref{mass_shell}. Next, we collect helpful estimates and expansion
formulas which also will be used frequently in \secref{eff_vel_eff_mass}.
This section ends with \lemref{ground_state_energy_bounds} where
we derive some upper and lower bounds on the ground state energies. 

The first lemma provides some a priori estimates on the ground state
energies. In particular claim (iii) of \lemref{energy} will be crucial
for the gap estimate in \lemref{gap_estimate}.
\begin{lem}
\label{lem:energy}For $P\in\mathbb{R}^{3}$ and any integer $0\leq n<N$
suppose $\Psi_{P,n}$ is the ground state of $H_{P,n}\restrict{\cal F}_{n}$
and $E_{P,n}$ is the corresponding ground state energy. Then:
\begin{enumerate}[label=(\roman*)]
\item  $E_{P,n+1}\leq E_{P,n}$.
\item $-g^{2}C\Lambda\leq E_{P,n}\leq\sqrt{P^{2}+m^{2}}$.
\item $\forall k\in\mathbb{\mathbb{R}}^{3}\,,\,\, E_{P-k,n}-E_{P,n}\geq-|P|\omega(k)$.
\end{enumerate}
\end{lem}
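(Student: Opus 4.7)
The plan is to handle the three claims separately by variational arguments, with (i) and (ii) being standard and (iii) being the delicate one. For (i) I would use the natural isometric embedding $\mathcal{F}_{n}\hookrightarrow\mathcal{F}_{n+1}$, $\psi\mapsto\psi\otimes\Omega$, and take the lifted ground state $\widehat{\Psi}_{P,n}\otimes\Omega$ as trial vector for $H_{P,n+1}$. Since $H_{P,n+1}=H_{P,n}+g(\phi|_{n+1}^{n}+\phi^{*}|_{n+1}^{n})$ and the new slice operators satisfy $\phi|_{n+1}^{n}\Omega=0$ and $\langle\Omega,\phi^{*}|_{n+1}^{n}\Omega\rangle=0$, the slice interaction has zero expectation on the lifted state, giving $\langle\widehat{\Psi}_{P,n}\otimes\Omega,H_{P,n+1}\widehat{\Psi}_{P,n}\otimes\Omega\rangle=E_{P,n}$, and (i) then follows from the min-max principle. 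For the upper bound in (ii) the Fock vacuum yields $\langle\Omega,H_{P,n}\Omega\rangle=\sqrt{P^{2}+m^{2}}$ at once (since $P^{f}\Omega=0=H^{f}\Omega$ and $\Phi|_{n}^{\Lambda}$ has vanishing vacuum expectation). For the lower bound I would complete the square in the bosonic modes,
\[
H^{f}+g\Phi|_{n}^{\Lambda}=\int_{\Lambda\gamma^{n}}^{\Lambda}\!dk\,\omega(k)\Bigl(b^{*}(k)+\tfrac{g\rho(k)}{\omega(k)}\Bigr)\Bigl(b(k)+\tfrac{g\rho(k)}{\omega(k)}\Bigr)-g^{2}\!\int_{\Lambda\gamma^{n}}^{\Lambda}\!dk\,\tfrac{|\rho(k)|^{2}}{\omega(k)}\,,
\]
whose first operator is nonnegative; the $c$-number is $\mathcal{O}(g^{2}\Lambda)$ because $|\rho(k)|^{2}/\omega(k)=(2(2\pi)^{3}\omega(k)^{2})^{-1}$ and $\omega(k)\geq\mu>1$. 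Combined with $\sqrt{(P-P^{f})^{2}+m^{2}}\geq 0$ this produces $H_{P,n}\geq-g^{2}C\Lambda$, hence $E_{P,n}\geq-g^{2}C\Lambda$.

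For (iii) I would rewrite the claim as $E_{P,n}-E_{P-k,n}\leq|P|\omega(k)$ and use $\widehat{\Psi}_{P-k,n}$ as a trial vector for $H_{P,n}$. Since $H_{P,n}$ and $H_{P-k,n}$ coincide except in the nucleon dispersion, the min-max principle reduces the problem to estimating
\[
\bigl\langle\widehat{\Psi}_{P-k,n},\bigl[\sqrt{(P-P^{f})^{2}+m^{2}}-\sqrt{(P-k-P^{f})^{2}+m^{2}}\bigr]\widehat{\Psi}_{P-k,n}\bigr\rangle\,.
\]
The next step is to apply the tangent-line inequality for the convex function $Q\mapsto\sqrt{Q^{2}+m^{2}}$ at $Q=P-P^{f}$, which delivers the operator bound
\[
\sqrt{(P-P^{f})^{2}+m^{2}}-\sqrt{(P-k-P^{f})^{2}+m^{2}}\leq\frac{P\cdot k}{\sqrt{(P-P^{f})^{2}+m^{2}}}-\frac{P^{f}\cdot k}{\sqrt{(P-P^{f})^{2}+m^{2}}}\,.
\]
The $P\cdot k$ summand contributes at most $|P||k|$ times a factor bounded in the expectation against $\widehat\Psi_{P-k,n}$, yielding a piece that is harmlessly of the order $|P|\omega(k)$ since $|k|\leq\omega(k)$. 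The $P^{f}\cdot k$ summand is to be absorbed by exploiting the weak-coupling structure of $\widehat{\Psi}_{P-k,n}$ together with the restrictions $|P|<P_{max}<1/2$ and $\mu>1$, which force $\widehat{\Psi}_{P-k,n}$ to remain close to the Fock vacuum and therefore keep $\langle P^{f}\cdot k\,(\,\cdot\,)\rangle$ sub-leading.

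The main obstacle is (iii). The refined factor $|P|$ in the bound $|P|\omega(k)$, rather than the naive $|k|$ that would emerge from the pointwise operator inequality $\sqrt{(P-P^{f})^{2}+m^{2}}-\sqrt{(P-k-P^{f})^{2}+m^{2}}\geq -|k|$, cannot be obtained from spectral considerations in $P^{f}$ alone: it requires a genuine use of the ground state $\widehat{\Psi}_{P-k,n}$ together with all the parameter restrictions (small $|P|$ relative to the meson mass $\mu>1$, weak coupling, and $m>0$). This is precisely the ``crude estimate'' the authors mention after the statement of the parameters and which, as they explicitly note, is the technical reason for imposing $\mu>1$ and $P_{max}<1/2$.
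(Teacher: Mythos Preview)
Your arguments for (i) and (ii) are correct and coincide with the paper's.

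For (iii), however, your approach has a genuine gap. After the tangent-line inequality you are left with the expectation of
\[
\frac{P\cdot k}{\sqrt{(P-P^{f})^{2}+m^{2}}}-\frac{P^{f}\cdot k}{\sqrt{(P-P^{f})^{2}+m^{2}}}
\]
in $\widehat{\Psi}_{P-k,n}$. The second term is the real problem: you propose to absorb it by arguing that $\widehat{\Psi}_{P-k,n}$ is ``close to the Fock vacuum'' in the weak-coupling regime. But the lemma must hold for \emph{every} $k\in\mathbb{R}^{3}$, so $P-k$ can be arbitrarily large, and there is no reason the ground state at such momenta is vacuum-like; moreover the lemma as stated does not assume $|g|$ small. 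Even the first term is not innocuous: the operator $\bigl[(P-P^{f})^{2}+m^{2}\bigr]^{-1/2}$ is only bounded by $1/m$, so you get $|P||k|/m$ rather than $|P||k|$, which exceeds $|P|\omega(k)$ once $|k|>m\omega(k)$ (possible when $m<1$). Neither term can be closed by the method you outline.

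The paper's argument is entirely different and avoids these difficulties. It first invokes Gross's inequality $E_{Q,n}\geq E_{0,n}$ (stated just before the lemma) to write
\[
E_{P-k,n}-E_{P,n}\;\geq\;E_{0,n}-E_{P,n}\,,
\]
which removes the $k$-dependence from the energy difference altogether. Then $E_{0,n}-E_{P,n}$ is bounded below by taking $\widehat{\Psi}_{0,n}$ as a trial vector for $H_{P,n}$ and using the elementary Lipschitz bound
\[
\bigl\|\sqrt{(P^{f})^{2}+m^{2}}-\sqrt{(P-P^{f})^{2}+m^{2}}\bigr\|\leq|P|\,.
\]
This yields $E_{0,n}-E_{P,n}\geq-|P|$, and the factor $\omega(k)$ is then inserted for free via $\omega(k)\geq\mu>1$. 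That last step is exactly the ``crude estimate'' the authors flag as the reason for the restriction $\mu>1$; the refined factor $|P|$ comes from Gross's theorem, not from any property of $\widehat{\Psi}_{P-k,n}$.
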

\begin{proof}
~
\begin{enumerate}[label=(\roman*)]
\item By definition of the ground state energy we can estimate
\[
E_{P,n+1}-E_{P,n}\leq\frac{\left\langle \Psi_{P,n},\left[H_{P,n+1}-H_{P,n}\right]\Psi_{P,n}\right\rangle }{\left\langle \Psi_{P,n},\Psi_{P,n}\right\rangle }=\frac{\left\langle \Psi_{P,n},\, g\Phi|_{n+1}^{n}\Psi_{P,n}\right\rangle }{\left\langle \Psi_{P,n},\Psi_{P,n}\right\rangle }=0.
\]

\item It suffices to observe that
\[
E_{P,n}\leq\left\langle \widehat{\Psi}_{P,0},H_{P,n}\widehat{\Psi}_{P,0}\right\rangle =\sqrt{P^{2}+m^{2}}
\]
and
\[
0\leq\sqrt{(P-P^{f})^{2}+m^{2}}+\int_{\Lambda\gamma^{n}}^{\Lambda}dk\,\omega(k)\left(b_{k}^{*}+g\frac{\rho(k)}{\omega(k)}\right)\left(b_{k}+g\frac{\rho(k)}{\omega(k)}\right)=H_{P,n}+g^{2}\int_{\Lambda\gamma^{n}}^{\Lambda}dk\,\frac{\rho(k)^{2}}{\omega(k)}
\]
where
\[
g^{2}\int_{\Lambda\gamma^{n}}^{\Lambda}dk\,\frac{\rho(k)^{2}}{\omega(k)}\leq g^{2}C\Lambda.
\]

\item Inequality (\ref{eq:gross}) implies
\[
E_{P-k,n}-E_{P,n}=E_{P-k,n}-E_{0,n}+E_{0,n}-E_{P,n}\geq E_{0,n}-E_{P,n}
\]
and
\begin{multline}
E_{0,n}-E_{P,n}\geq\frac{\left\langle \Psi_{0,n},\left[H_{0,n}-H_{P,n}\right]\Psi_{0,n}\right\rangle }{\left\langle \Psi_{0,n},\Psi_{0,n}\right\rangle }=\frac{\left\langle \Psi_{0,n},\left[H_{0}^{nuc}-H_{P}^{nuc}\right]\Psi_{0,n}\right\rangle }{\left\langle \Psi_{0,n},\Psi_{0,n}\right\rangle }\geq-|P|\geq-|P|\omega(k)\\
\label{eq:energy_difference}
\end{multline}
because 
\[
\left\|\sqrt{P^{f}{}^{2}+m^{2}}-\sqrt{(P-P^{f})^{2}+m^{2}}\right\|\leq|P|
\]
and $\omega(k)=\sqrt{k^{2}+\mu^{2}}$ with $\mu>1$.
\end{enumerate}
\end{proof}

In our construction we shall single out two parameters needed to control
the gap of the Hamiltonians $H_{P,n}\restrict{\cal F}_{n}$, $0\leq n\leq N$:
\begin{defn}
\label{def:gap_params}Define $\frac{1}{8}<\theta<\frac{1}{4}$ and
$\zeta>\frac{1}{4}$ such that
\[
1-\theta-\pmax\geq\zeta.
\]

\end{defn}
Later the following lemma will be invoked from the main induction
in \thmref{mass_shell} to provide the gap estimate that is used in
the inductive scheme.
\begin{lem}
\label{lem:gap_estimate}Let $|P|<P_{max}$ and $1\leq n\leq N$.
Assume:
\begin{enumerate}[label=A(\roman*)]
\item $E_{P,n-1}$ is the non-degenerate ground state energy of $H_{P,n-1}\restrict{\cal F}_{n-1}$
corresponding to the ground state vector $\Psi_{P,n-1}.$
\item $\gap{H_{P,n-1}}{{\cal F}_{n-1}}\geq\zeta\omega\left(\Lambda\gamma^{n}\right)$.
\end{enumerate}
Then:
\begin{enumerate}[label=C(\roman*)]
\item $E_{P,n-1}$ is the non-degenerate ground state energy of $H_{P,n-1}\restrict{\cal F}_{n}$
corresponding to the ground state vector $\Psi_{P,n-1}\otimes\Omega.$
\item 
\[
\gap{H_{P,n-1}}{{\cal F}_{n}},\,\inf_{\varphi=\psi\otimes\eta}\left\langle \widehat{\varphi},\left(H_{P,n-1}-\theta H^{f}|_{n}^{n-1}-E_{P,n-1}\right)\widehat{\varphi}\right\rangle \geq\zeta\omega\left(\Lambda\gamma^{n}\right)
\]
where the infimum is taken over $\varphi\in D(H_{P}^{(0)})$ such
that $\psi\in{\cal F}_{n-1}$ and $\eta\in{\cal F}|_{n}^{n-1}$ contains
a strictly positive number of bosons.
\end{enumerate}
\end{lem}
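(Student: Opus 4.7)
\textbf{Proof plan for Lemma \lemref{gap_estimate}.} The strategy is to exploit the factorization $\mathcal{F}_n = \mathcal{F}_{n-1} \otimes \mathcal{F}|_n^{n-1}$ and fiber-decompose $H_{P,n-1}$ acting on $\mathcal{F}_n$ over the joint spectrum of the transferred field momentum $P^f|_n^{n-1}$. The interaction $\int_{\Lambda\gamma^{n-1}}^{\Lambda} dk\,\rho(k)(b(k)+b^*(k))$ acts trivially on $\mathcal{F}|_n^{n-1}$, while the free quantities split additively, $H^f = H^f_{{\cal F}_{n-1}} + H^f|_n^{n-1}$ and $P^f = P^f_{{\cal F}_{n-1}} + P^f|_n^{n-1}$. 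Only the nucleon term $\sqrt{(P-P^f)^2+m^2}$ couples the two factors, and it is handled by fibering.

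Concretely, decompose $\mathcal{F}|_n^{n-1} = \bigoplus_{j\geq 0}(\mathcal{F}|_n^{n-1})^{(j)}$, represent a vector $\varphi^{(j)} \in \mathcal{F}_{n-1}\otimes(\mathcal{F}|_n^{n-1})^{(j)}$ by a symmetric $\mathcal{F}_{n-1}$-valued function $\varphi^{(j)}(k_1,\ldots,k_j)$ with $|k_i|\in(\Lambda\gamma^n,\Lambda\gamma^{n-1})$, and set $K := \sum_{i=1}^{j} k_i$. A direct computation yields
\[
\left\langle\varphi^{(j)},\,H_{P,n-1}\varphi^{(j)}\right\rangle \;=\; \int dk_1\cdots dk_j\,\left\langle\varphi^{(j)}(k_1,\ldots,k_j),\,\bigl[H_{P-K,n-1}+\textstyle\sum_i \omega(k_i)\bigr]\,\varphi^{(j)}(k_1,\ldots,k_j)\right\rangle_{\mathcal{F}_{n-1}}.
\]
Using $H_{P-K,n-1}\restrict\mathcal{F}_{n-1}\geq E_{P-K,n-1}$ together with the bound $E_{P-K,n-1}-E_{P,n-1}\geq -|P|$ (this is precisely the intermediate step in the proof of \lemref{energy}(iii), valid for any transferred momentum $K$ since it rests on $E_{0,n-1}\geq E_{P,n-1}-|P|$), and subtracting $\theta H^f|_n^{n-1}$ which contributes $-\theta\sum_i\omega(k_i)$, one finds
\[
\left\langle\varphi^{(j)},\,\bigl(H_{P,n-1}-\theta H^f|_n^{n-1}-E_{P,n-1}\bigr)\varphi^{(j)}\right\rangle \;\geq\; \bigl((1-\theta)\omega(\Lambda\gamma^n)-P_{\max}\bigr)\,\|\varphi^{(j)}\|^2
\]
for every $j\geq 1$, where I used $|k_i|\geq\Lambda\gamma^n$ and $|P|<P_{\max}$. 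Since $\mu>1$ implies $\omega(\Lambda\gamma^n)\geq 1$, the defining inequality $1-\theta-P_{\max}\geq\zeta$ from \defref{gap_params} upgrades the right-hand side to $\zeta\omega(\Lambda\gamma^n)\|\varphi^{(j)}\|^2$. This proves the second (infimum) bound in C(ii).

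For C(i) and the first bound in C(ii), split $\mathcal{F}_n = \bigl(\mathcal{F}_{n-1}\otimes\Omega\bigr)\oplus\bigl(\mathcal{F}_{n-1}\otimes\bigoplus_{j\geq 1}(\mathcal{F}|_n^{n-1})^{(j)}\bigr)$. On the first (zero-boson) summand, $H_{P,n-1}$ acts as $H_{P,n-1}\restrict\mathcal{F}_{n-1}$, so A(i) makes $\Psi_{P,n-1}\otimes\Omega$ a non-degenerate ground state of energy $E_{P,n-1}$ and A(ii) supplies the gap $\geq\zeta\omega(\Lambda\gamma^n)$. On the second summand, the same estimate as above (specialized to $\theta=0$) gives that the spectrum lies above $E_{P,n-1}+\zeta\omega(\Lambda\gamma^n)$. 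Together these two observations yield C(i) and $\gap{H_{P,n-1}}{\mathcal{F}_n}\geq\zeta\omega(\Lambda\gamma^n)$, finishing the proof.

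The main obstacle is purely bookkeeping: one has to write out the fiber decomposition carefully enough to track the $-|P|$ contribution coming from the non-commutativity of $\sqrt{(P-P^f)^2+m^2}$ with $P^f|_n^{n-1}$ and see that it is absorbed by the lower bound $\omega(\Lambda\gamma^n)\geq 1$ guaranteed by $\mu>1$ together with the constraint $1-\theta-P_{\max}\geq\zeta$.
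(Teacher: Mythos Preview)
Your proof is correct and follows essentially the same route as the paper: fiber the slice Fock space $\mathcal{F}|_n^{n-1}$ over boson momenta, use $H_{P-K,n-1}\restrict\mathcal{F}_{n-1}\geq E_{P-K,n-1}$ together with the energy inequality from \lemref{energy}, and absorb the $-P_{\max}$ loss via $1-\theta-P_{\max}\geq\zeta$ and $\omega\geq 1$. The only cosmetic difference is that the paper keeps the full sum $(1-\theta-P_{\max})\sum_i\omega(k_i)$ by invoking \lemref{energy}(iii) in the packaged form $E_{P-K,n-1}-E_{P,n-1}\geq -P_{\max}\sum_i\omega(k_i)$, whereas you use the sharper intermediate bound $\geq -|P|$ and then drop all but one $\omega(k_i)$; both lead to $\zeta\omega(\Lambda\gamma^n)$.
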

\begin{proof}
A direct computation using A(i) shows that $\Psi_{P,n-1}\otimes\Omega$
is eigenvector of $H_{P,n-1}\restrict{\cal F}_{n}$ with corresponding
eigenvalue $E_{P,n-1}$. Since $H^{f}|_{n}^{n-1}$ is a positive operator
one has 
\begin{equation}
\inf_{\varphi\perp\Psi_{P,n-1}\otimes\Omega}\left\langle \widehat{\varphi},\left(H_{P,n-1}-E_{P,n-1}\right)\widehat{\varphi}\right\rangle \geq\inf_{\varphi\perp\Psi_{P,n-1}\otimes\Omega}\left\langle \widehat{\varphi},\left(H_{P,n-1}-\theta H^{f}|_{n}^{n-1}-E_{P,n-1}\right)\widehat{\varphi}\right\rangle \,;\label{eq:lemma-gap inf}
\end{equation}
we subtract the term $\theta H^{f}|_{n}^{n-1}$ for a technical reason
which will become clear in Lemma \ref{lem:neumann}.

Now, the right-hand side of (\ref{eq:lemma-gap inf}) is bounded from
below by
\begin{equation}
\min\left\{ \gap{H_{P,n-1}}{{\cal F}_{n-1}},\inf_{\varphi=\psi\otimes\eta}\left\langle \widehat{\varphi},\left(H_{P,n-1}-\theta H^{f}|_{n}^{n-1}-E_{P,n-1}\right)\widehat{\varphi}\right\rangle \right\} ,\label{eq:gap_min-1-1}
\end{equation}
where $\psi\in\mathcal{F}_{n-1},\eta\in\mathcal{F}_{n}^{n-1},\psi\otimes\eta$
belongs to $D(H_{P}^{(0)})$, and $\eta$ is a vector with definite,
strictly positive number of bosons. For a vector $\eta$ with $l\geq1$
bosons we compute 
\begin{multline*}
\inf_{\varphi=\psi\otimes\eta}\left\langle \widehat{\varphi},\left(H_{P,n-1}-\theta H^{f}|_{n}^{n-1}-E_{P,n-1}\right)\widehat{\varphi}\right\rangle \\
\geq\inf_{\psi,\Lambda\gamma^{n}\leq\left|k_{j}\right|\leq\Lambda\gamma^{n-1}}\left\langle \widehat{\psi},\left(H_{P-\sum_{j=1}^{l}k_{j},n-1}+(1-\theta)\sum_{j=1}^{l}\omega(k_{j})-E_{P,n-1}\right)\widehat{\psi}\right\rangle \\
\geq\inf_{\psi,\Lambda\gamma^{n}\leq\left|k_{j}\right|\leq\Lambda\gamma^{n-1}}\left(E_{P-\sum_{j=1}^{l}k_{j},n-1}-E_{P,n-1}+(1-\theta)\sum_{j=1}^{l}\omega(k_{j})\right).
\end{multline*}
Furthermore, \lemref{energy} implies
\[
E_{P-\sum_{j=1}^{l}k_{j},n-1}-E_{P,n-1}\geq-\pmax\sum_{j=1}^{l}\omega(k_{j}).
\]
Hence, by \defref{gap_params} the inequality
\[
\inf_{\varphi=\psi\otimes\eta}\left\langle \widehat{\varphi},\left(H_{P,n-1}-\theta H^{f}|_{n}^{n-1}-E_{P,n-1}\right)\widehat{\varphi}\right\rangle \geq\zeta\omega\left(\Lambda\gamma^{n}\right)
\]
 holds. Now by A(ii) we also get

\begin{equation}
(\ref{eq:gap_min-1-1})\geq\zeta\omega\left(\Lambda\gamma^{n}\right).\label{eq:gap_min-1}
\end{equation}
From the estimate in \eqref{gap_min-1} we can conclude that $\Psi_{P,n-1}\otimes\Omega$
is the unique ground state of $H_{P,n-1}\restrict{\cal F}_{n}$ with
eigenvalue $E_{P,n-1}$ and 
\[
\gap{H_{P,n-1}}{{\cal F}_{n}}\geq\zeta\omega\left(\Lambda\gamma^{n}\right).
\]
This proves C(i) and C(ii). 
\end{proof}
The second ingredient needed for the main induction in \thmref{mass_shell}
is a control of the resolvent expansion of the Hamiltonians:
\begin{lem}
\label{lem:neumann}Let $|g|$ be sufficiently small and $|P|<P_{max}$.
Suppose further that for $1\leq n\leq N$ $E_{P,n-1}$ is the non-degenerate
ground state energy of $H_{P,n-1}\restrict{\cal F}_{n-1}$ corresponding
to the ground state vector $\Psi_{P,n-1}$ and that
\begin{equation}
\gap{H_{P,n-1}}{{\cal F}_{n}}\geq\zeta\omega\left(\Lambda\gamma^{n}\right).\label{eq:gap_assumption}
\end{equation}
Then, for $z\in\mathbb{C}$ such that
\[
\frac{1}{2}\zeta\omega\left(\Lambda\gamma^{n+1}\right)\leq\left|E_{P,n-1}-z\right|\leq\zeta\omega\left(\Lambda\gamma^{n+1}\right),
\]
the resolvent $\frac{1}{H_{P,n}-z}$ is a well-defined operator on
${\cal F}_{n}$ which equals to
\begin{equation}
\frac{1}{H_{P,n-1}-z}\sum_{j=0}^{\infty}\left[-g\Phi|_{n}^{n-1}\frac{1}{H_{P,n-1}-z}\right]^{j}.\label{eq:powerseries}
\end{equation}

\begin{proof}
We start with the estimate
\begin{multline*}
\left\Vert \left(\frac{1}{H_{P,n-1}-z}\right)^{1/2}\right\Vert _{{\cal F}_{n}}=\frac{1}{\sqrt{\mathrm{dist}\left(z,\spec{H_{P,n-1}\restrict{\cal F}_{n}}\right)}}\\
\leq\left(\max\left\{ \frac{2}{\zeta\omega\left(\Lambda\gamma^{n+1}\right)},\frac{C}{\zeta\omega\left(\Lambda\gamma^{n}\right)-\zeta\omega\left(\Lambda\gamma^{n+1}\right)}\right\} \right)^{1/2}\leq\left(\frac{C}{\zeta\Lambda\gamma^{n+1}(1-\gamma)}\right)^{1/2}
\end{multline*}
where we made use of the assumption in (\ref{eq:gap_assumption}).
Next, we estimate
\begin{equation}
\left\Vert g\phi|_{n}^{n-1}\left(\frac{1}{H_{P,n-1}-z}\right)^{1/2}\right\Vert _{{\cal F}_{n}}\leq|g|C\left[\Lambda\gamma^{n-1}(1-\gamma)\right]^{1/2}\left\Vert \left(H^{f}|_{n}^{n-1}\right)^{1/2}\left(\frac{1}{H_{P,n-1}-z}\right)^{1/2}\right\Vert _{{\cal F}_{n}}.\label{eq:phi_res_est}
\end{equation}
The operators $H^{f}|_{n}^{n-1}$ and $H_{P,n-1}$ commute, and we
may apply the spectral theorem and \lemref{gap_estimate} in order
to get
\[
\left\Vert \left(H^{f}|_{n}^{n-1}\right)^{1/2}\left(\frac{1}{H_{P,n-1}-z}\right)^{1/2}\right\Vert _{{\cal F}_{n}}=\left\Vert \left(H^{f}|_{n}^{n-1}\right)^{1/2}\left(\frac{1}{H_{P,n-1}-\theta H^{f}|_{n}^{n-1}-z+\theta H^{f}|_{n}^{n-1}}\right)^{1/2}\right\Vert _{{\cal F}_{n}}\leq\theta^{-1/2}.
\]
In consequence, we can estimate
\[
\left\Vert g\left(\frac{1}{H_{P,n-1}-z}\right)^{1/2}\Phi|_{n}^{n-1}\left(\frac{1}{H_{P,n-1}-z}\right)^{1/2}\right\Vert _{{\cal F}_{n}}\leq|g|C(\zeta\gamma^{2})^{-1/2}\theta^{-1/2}.
\]
Since $\gamma>\frac{1}{2}$, $\zeta>\frac{1}{4}$, and $\theta>\frac{1}{8}$
the coupling constant $|g|$ can be chosen independently of $n$ (and
of $\Lambda$) such that
\[
|g|C(\theta\zeta\gamma^{2})^{-1/2}<1
\]
which implies the convergence of the power series on the right-hand
side of (\ref{eq:powerseries}) and, thus, the claim.
\end{proof}
\end{lem}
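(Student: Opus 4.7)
The plan is to show that the symmetrized perturbation
\[
K_z := \bigl(H_{P,n-1}-z\bigr)^{-1/2}\, g\Phi|_n^{n-1}\, \bigl(H_{P,n-1}-z\bigr)^{-1/2}
\]
has operator norm strictly less than one on $\mathcal{F}_n$, uniformly in $n$ and $\Lambda$, provided $|g|$ is small. Once this bound is in hand, the geometric series in (\ref{eq:powerseries}) converges in operator norm, and the identity with $(H_{P,n}-z)^{-1}$ follows from the standard resolvent manipulation applied to the decomposition $H_{P,n}-z=(H_{P,n-1}-z)+g\Phi|_n^{n-1}$.

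The estimate factorizes into three independent pieces that I would bound in turn. First, using the gap hypothesis (\ref{eq:gap_assumption}), the spectrum of $H_{P,n-1}\restrict\mathcal{F}_n$ is contained in $\{E_{P,n-1}\}\cup[E_{P,n-1}+\zeta\omega(\Lambda\gamma^n),\infty)$, so for $z$ in the prescribed annulus the distance to the spectrum is at least $C\zeta\Lambda\gamma^{n+1}(1-\gamma)$, giving
\[
\bigl\|(H_{P,n-1}-z)^{-1/2}\bigr\|_{\mathcal{F}_n}\leq C\bigl(\zeta\Lambda\gamma^{n+1}(1-\gamma)\bigr)^{-1/2}.
\]
Second, the standard relative bound $\|\phi|_n^{n-1}\psi\|^2\leq\bigl(\int_n^{n-1}dk\,|\rho(k)|^2/\omega(k)\bigr)\langle\psi,H^f|_n^{n-1}\psi\rangle$ for annihilation operators, combined with a direct computation of the form-factor integral with $\rho$ as in (\ref{eq:form_factor}), produces a shell-volume factor of order $\Lambda\gamma^{n-1}(1-\gamma)$. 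The creation part of $\Phi|_n^{n-1}$ is controlled by duality.

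The main obstacle, and the source of the $\theta$ parameter introduced in Lemma \ref{lem:gap_estimate}, is that a third factor $\|(H^f|_n^{n-1})^{1/2}(H_{P,n-1}-z)^{-1/2}\|_{\mathcal{F}_n}$ still appears after applying the relative bound, and it must be controlled by a constant independent of $n$ and $\Lambda$; a naive use of the resolvent estimate would waste another $[\Lambda\gamma^{n+1}(1-\gamma)]^{-1/2}$ and ruin the argument. The key observation is that $H^f|_n^{n-1}$ and $H_{P,n-1}$ commute, so that writing $H_{P,n-1}-z = (H_{P,n-1}-\theta H^f|_n^{n-1}-z)+\theta H^f|_n^{n-1}$ and applying joint spectral calculus reduces the question to showing that the bracketed summand has non-negative real part on joint eigenvectors. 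On the slice vacuum this factor is absent because $(H^f|_n^{n-1})^{1/2}$ annihilates; on states with at least one slice boson the positivity is exactly what conclusion C(ii) of Lemma \ref{lem:gap_estimate} provides. This yields $\|(H^f|_n^{n-1})^{1/2}(H_{P,n-1}-z)^{-1/2}\|_{\mathcal{F}_n}\leq\theta^{-1/2}$. Combining the three factors one obtains $\|K_z\|\leq C|g|(\theta\zeta\gamma^2)^{-1/2}$, which is strictly less than $1$ for $|g|$ small thanks to the fixed lower bounds $\gamma>1/2$, $\zeta>1/4$, $\theta>1/8$, uniformly in $n$ and $\Lambda$.
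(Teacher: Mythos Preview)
Your proposal is correct and follows essentially the same route as the paper's proof: both bound the symmetrized perturbation $K_z$ by factoring into the resolvent norm (controlled via the gap hypothesis), the form-factor shell integral of order $\Lambda\gamma^{n-1}(1-\gamma)$, and the crucial $\bigl\|(H^f|_n^{n-1})^{1/2}(H_{P,n-1}-z)^{-1/2}\bigr\|\leq\theta^{-1/2}$ obtained from commutativity and conclusion C(ii) of Lemma~\ref{lem:gap_estimate}. Your explanation of why the $\theta$-trick works (vacuum versus at-least-one-slice-boson sectors) is in fact more explicit than the paper's one-line invocation of the spectral theorem.
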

We will now prove that the vectors in the following definition are
the unique, non-zero ground states of the Hamiltonians $H_{P,n}\restrict{\cal F}_{n}$
, $0\leq n\leq N$. (We warn the reader that the spectral projection
in (\ref{eq:projection/contour}) will be shown to be well defined
in Theorem \ref{thm:mass_shell}.)
\begin{defn}
\label{def:ground_states}For $1\leq n\leq N$ we define
\begin{equation}
{\cal Q}_{P,n}:=-\frac{1}{2\pi i}\oint_{\Gamma_{P,n}}\frac{dz}{H_{P,n}-z\,}\restrict{\cal F}_{n}\qquad\Gamma_{P,n}:=\left\{ z\in\mathbb{C}\,\bigg|\,\left|E_{P,n-1}-z\right|=\frac{1}{2}\zeta\omega\left(\Lambda\gamma^{n+1}\right)\right\} \label{eq:projection/contour}
\end{equation}
and recursively
\begin{equation}
\Psi_{P,n}:={\cal Q}_{P,n}\Psi_{P,n-1},\qquad\Psi_{P,0}:=\Omega.\label{vectors by iteration}
\end{equation}
Note that $\Psi_{P,n}$ are in general unnormalized vectors with $\|\Psi_{P,n}\|\leq1$.\end{defn}
\begin{thm}
\label{thm:mass_shell}Let $|g|$ be sufficiently small and $|P|<P_{max}$.
For $0\leq n\leq N$ it holds:
\begin{enumerate}[label=(\roman*)]
\item $\Psi_{P,n}$ is well-defined, non-zero, and the unique ground state
vector of $H_{P,n}\restrict{\cal F}_{n}$ with corresponding eigenvalue
\[
E_{P,n}:=\inf\spec{H_{P,n}\restrict{\cal F}_{n}}.
\]

\item $\gap{H_{P,n}}{{\cal F}_{n}}\geq\zeta\omega\left(\Lambda\gamma^{n+1}\right).$
\end{enumerate}
\end{thm}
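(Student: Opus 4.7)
The plan is to prove \thmref{mass_shell} by induction on $n$, packaging together \lemref{gap_estimate} and \lemref{neumann} in a standard Riesz-projection argument. The base case $n=0$ is direct: since the integration range in $H_{P,0}$ is empty, $H_{P,0}\restrict\mathcal{F}_0 = \sqrt{(P-P^{f})^{2}+m^{2}}+H^{f}$ has vacuum ground state $\Omega$ of energy $\sqrt{P^{2}+m^{2}}$, and any non-vacuum state carries at least one boson of momentum $\geq\Lambda$, giving a gap of at least $(1-\pmax)\omega(\Lambda)\geq\zeta\omega(\Lambda\gamma)$ by \defref{gap_params}.

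For the inductive step $n-1\to n$, I first invoke \lemref{gap_estimate} --- whose hypotheses A(i), A(ii) are exactly the induction hypothesis --- to obtain that $\Psi_{P,n-1}\otimes\Omega$ remains the simple ground state of $H_{P,n-1}\restrict\mathcal{F}_n$ with $\gap{H_{P,n-1}}{\mathcal{F}_n}\geq\zeta\omega(\Lambda\gamma^n)$. This feeds directly into \lemref{neumann}, which for $|g|$ sufficiently small (uniformly in $n$ and $\Lambda$) provides the convergent Neumann expansion of $(H_{P,n}-z)^{-1}$ on the annulus $\frac{1}{2}\zeta\omega(\Lambda\gamma^{n+1})\leq|z-E_{P,n-1}|\leq\zeta\omega(\Lambda\gamma^{n+1})$. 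In particular $\Gamma_{P,n}$ lies in the resolvent set, so $\mathcal{Q}_{P,n}$ of \defref{ground_states} is a well-defined bounded Riesz projection. Integrating the Neumann series term-by-term against $\Psi_{P,n-1}$, the $j=0$ term collapses to $\Psi_{P,n-1}$ by Cauchy's formula (using that $\Psi_{P,n-1}$ is the only spectral contribution of $H_{P,n-1}\restrict\mathcal{F}_n$ inside $\Gamma_{P,n}$), while all $j\geq 1$ terms are $\mathcal{O}(|g|)$; thus $\Psi_{P,n}=\Psi_{P,n-1}+\mathcal{O}(|g|)\neq 0$. The same norm bound gives $\|\mathcal{Q}_{P,n}-\mathcal{Q}_{P,n-1}^{(0)}\|<1$, where $\mathcal{Q}_{P,n-1}^{(0)}$ is the rank-one projection onto $\Psi_{P,n-1}\otimes\Omega$, so by the standard fact that Riesz projections close in norm have equal rank, $\mathcal{Q}_{P,n}$ is itself rank-one. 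This produces a unique simple eigenvalue $\tilde{E}$ of $H_{P,n}\restrict\mathcal{F}_n$ inside $\Gamma_{P,n}$, with eigenvector $\Psi_{P,n}$.

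The step I expect to be the real obstacle is identifying $\tilde{E}$ with $E_{P,n}=\inf\spec{H_{P,n}\restrict\mathcal{F}_n}$: \lemref{neumann} rules out spectrum in the annulus but does not a priori exclude spectrum below $E_{P,n-1}-\zeta\omega(\Lambda\gamma^{n+1})$. My plan is a continuity argument along the path $H(s):=H_{P,n-1}+sg\,\Phi|_{n}^{n-1}$, $s\in[0,1]$: since replacing $g$ by $sg$ preserves the smallness assumption of \lemref{neumann}, the \emph{same} annulus stays in the resolvent set of every $H(s)\restrict\mathcal{F}_n$. Thus the continuous map $s\mapsto E(s):=\inf\spec{H(s)\restrict\mathcal{F}_n}$ cannot touch this annulus; as $E(0)=E_{P,n-1}$ lies inside the disk, $E(s)$ is trapped inside the disk for all $s\in[0,1]$, and in particular $E_{P,n}=E(1)$ must coincide with the unique eigenvalue $\tilde{E}$ there.

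Once this identification is in place, assertion (ii) is immediate: all remaining spectrum of $H_{P,n}\restrict\mathcal{F}_n$ lies outside the outer circle $|z-E_{P,n-1}|=\zeta\omega(\Lambda\gamma^{n+1})$, and combining with $E_{P,n}\leq E_{P,n-1}$ from \lemref{energy}(i) yields $\gap{H_{P,n}}{\mathcal{F}_n}\geq E_{P,n-1}+\zeta\omega(\Lambda\gamma^{n+1})-E_{P,n}\geq\zeta\omega(\Lambda\gamma^{n+1})$, closing the induction.
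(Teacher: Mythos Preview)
Your proof is correct and follows essentially the same inductive scheme as the paper: invoke \lemref{gap_estimate} to transfer the gap to $\mathcal{F}_n$, invoke \lemref{neumann} to justify the Neumann expansion and hence the Riesz projection $\mathcal{Q}_{P,n}$, show $\Psi_{P,n}\neq 0$, establish that $\mathcal{Q}_{P,n}$ has rank one, and close with \lemref{energy}(i) for the gap. The only noteworthy difference is that where the paper simply cites ``Kato's theorem'' to conclude that the eigenvalue inside $\Gamma_{P,n}$ is the ground state energy and that the gap persists, you unpack this into an explicit continuity argument along the path $s\mapsto H_{P,n-1}+sg\,\Phi|_n^{n-1}$, observing that the annulus stays in the resolvent set for all $s\in[0,1]$ so that no spectrum can migrate across it. This is exactly the mechanism underlying the Kato reference, so the two arguments coincide in substance; your version has the advantage of being self-contained, while the paper's is more concise.
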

\begin{proof}
A direct computation shows that the claim holds for $n=0$. Let us
assume it holds for $n-1$ with $0\leq n-1<N-1$:
\begin{enumerate}
\item The assumptions allow to apply \lemref{gap_estimate} which states
that
\[
\gap{H_{P,n-1}}{{\cal F}_{n}}\geq\zeta\omega\left(\Lambda\gamma^{n}\right).
\]

\item Hence, \lemref{neumann} ensures that for $|g|$ small enough but
uniform in $n$ (and in $\Lambda$) the resolvent
\[
\frac{1}{H_{P,n}-z}\restrict{\cal F}_{n}=\frac{1}{H_{P,n-1}-z}\sum_{j=0}^{\infty}\left[-g\Phi|_{n}^{n-1}\frac{1}{H_{P,n-1}-z}\right]^{j}\restrict{\cal F}_{n}
\]
is well-defined for 
\begin{equation}
\frac{1}{2}\zeta\omega\left(\Lambda\gamma^{n+1}\right)\leq\left|E_{P,n-1}-z\right|\leq\zeta\omega\left(\Lambda\gamma^{n+1}\right).\label{eq:perturb_radius}
\end{equation}

\item For $|g|$ small enough but uniform in $n$ (and in $\Lambda$), $\Psi_{P,n}$
defined in (\ref{vectors by iteration}) is non-zero. Indeed for $0\leq n\leq N$ and $z\in\Gamma_{P,n}$ we have
\[
\left\Vert \left(\frac{1}{H_{P,n-1}-z}\right)^{1/2}g\Phi|_{n}^{n-1}\left(\frac{1}{H_{P,n-1}-z}\right)^{1/2}\right\Vert _{{\cal F}_{n}}\leq C|g|(1-\gamma)^{1/2}
\]
because for $z$ in the domain $\Gamma_{P,n}$ defined
in (\ref{eq:projection/contour}) we get
\begin{equation}
\left\Vert \left(\frac{1}{H_{P,n-1}-z}\right)^{1/2}\right\Vert _{{\cal F}_{n}}\leq\left(\frac{C}{\Lambda\gamma^{n}}\right)^{1/2}\nonumber\label{eq:estimate norm contour}
\end{equation}
that we can combine with
the bound in (\ref{eq:phi_res_est}).
By Kato's theorem
we can conclude that it is the unique ground state of $H_{P,n}\restrict{\cal F}_{n}$
with corresponding ground state energy $E_{P,n}$.
\item \lemref{energy}(i), Kato's theorem, and the domain of $z$ given
in (\ref{eq:perturb_radius}) provide the estimate
\[
\gap{H_{P,n}}{{\cal F}_{n}}\geq\zeta\omega\left(\Lambda\gamma^{n+1}\right).
\]

\end{enumerate}
\end{proof}
Next we provide expansion formulas which will be used frequently in
our computations in \secref{eff_vel_eff_mass}.
\begin{thm}
\label{thm:expansion_formulas}Let $|g|$ be sufficiently small and
$|P|<P_{max}$. For $0\leq n\leq N$ the following statements hold:
\begin{enumerate}[label=(\roman*)]
\item The following equality is satisfied: 
\begin{align*}
\Psi_{P,n}= & \Psi_{P,n-1}-g\frac{1}{H_{P,n-1}-E_{P,n-1}}\phi^{*}|_{n}^{n-1}\Psi_{P,n-1}\\
 & +g^{2}\widetilde{{\cal Q}}_{P,n-1}^{\perp}\frac{1}{H_{P,n-1}-E_{P,n-1}}\phi|_{n}^{n-1}\frac{1}{H_{P,n-1}-E_{P,n-1}}\phi^{*}|_{n}^{n-1}\Psi_{P,n-1}\\
 & +g^{2}\widetilde{{\cal Q}}_{P,n-1}^{\perp}\frac{1}{H_{P,n-1}-E_{P,n-1}}\phi^{*}|_{n}^{n-1}\frac{1}{H_{P,n-1}-E_{P,n-1}}\phi^{*}|_{n}^{n-1}\Psi_{P,n-1}\\
 & -g^{2}\widetilde{{\cal Q}}_{P,n-1}\phi|_{n}^{n-1}\left(\frac{1}{H_{P,n-1}-E_{P,n-1}}\right)^{2}\phi^{*}|_{n}^{n-1}\Psi_{P,n-1}+{\cal O}\left(|g|^{3}(1-\gamma)^{3/2}\right)
\end{align*}
for
\begin{equation}
\widetilde{{\cal Q}}_{P,n-1}:=-\frac{1}{2\pi i}\oint_{\Gamma_{P,n}}dz\frac{1}{H_{P,n-1}-z}\restrict{\cal F}_{n},\qquad\widetilde{{\cal Q}}_{P,n-1}^{\perp}:=\charf_{{\cal F}_{n}}-\widetilde{{\cal Q}}_{P,n-1}\label{eq:Qtilde_def}
\end{equation}
where $\charf_{{\cal F}_{n}}$ is the identity operator on ${\cal F}_{n}$.\\

\item The norm of the ground state vectors fulfills the relation
\begin{equation}
\left\Vert \Psi_{P,n}\right\Vert ^{2}=\left\langle \Psi_{P,n},\Psi_{P,n}\right\rangle =\left(1-g^{2}\alpha_{P}|_{n}^{n-1}+{\cal O}\left(|g|^{4}(1-\gamma)^{4/2}\right)\right)\left\Vert \Psi_{P,n-1}\right\Vert ^{2}\label{eq:norm_alpha}
\end{equation}
where
\[
\alpha_{P}|_{n}^{n-1}:=\left\langle \widehat{\Psi}_{P,n-1},\phi|_{n}^{n-1}\left(\frac{1}{H_{P,n-1}-E_{P,n-1}}\right)^{2}\phi^{*}|_{n}^{n-1}\widehat{\Psi}_{P,n-1}\right\rangle .
\]

\item There exist universal constants $0<c_{1}\leq c_{2}<\infty$ such that
\[
c_{1}(1-\gamma)\leq\alpha_{P}|_{n}^{n-1}\leq c_{2}(1-\gamma).
\]

\end{enumerate}
\end{thm}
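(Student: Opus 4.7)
Starting from $\Psi_{P,n}=-\frac{1}{2\pi i}\oint_{\Gamma_{P,n}}(H_{P,n}-z)^{-1}\Psi_{P,n-1}\,dz$, I substitute the Neumann expansion of \lemref{neumann}. The innermost resolvent acts on the eigenvector $\Psi_{P,n-1}$ as $(E_{P,n-1}-z)^{-1}$, so the $j=0$ summand gives $\Psi_{P,n-1}$. For $j=1,2$ I use $\phi|_{n}^{n-1}\Psi_{P,n-1}=0$ and decompose each intermediate resolvent as $\widetilde{{\cal Q}}_{P,n-1}(E_{P,n-1}-z)^{-1}+\widetilde{{\cal Q}}_{P,n-1}^{\perp}(H_{P,n-1}-z)^{-1}$, whose second summand is analytic in $z$ inside $\Gamma_{P,n}$. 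Residue calculus on the simple poles --- together with the one double pole appearing in the $j=2$, $\widetilde{{\cal Q}}_{P,n-1}$-branch --- then produces the four explicit order $g$ and $g^{2}$ terms. For $j\geq 3$, the operator bound $\|(H_{P,n-1}-z)^{-1/2}g\Phi|_{n}^{n-1}(H_{P,n-1}-z)^{-1/2}\|_{{\cal F}_n}\leq C|g|(1-\gamma)^{1/2}$ on $\Gamma_{P,n}$ (proven inside Theorem \ref{thm:mass_shell}), the sandwich estimate $\|(H_{P,n-1}-z)^{-1/2}\|_{{\cal F}_n}\leq (C/\Lambda\gamma^{n})^{1/2}$ on $\Gamma_{P,n}$, and the contour length $\sim\Lambda\gamma^{n+1}$ combine via a convergent geometric series to bound the tail by $\mathcal{O}(|g|^{3}(1-\gamma)^{3/2})$.

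\textbf{Part (ii).} Writing the expansion as $\Psi_{P,n}=\Psi_{P,n-1}+A_{1}+A_{2}+A_{3}+A_{4}+R$, I expand $\langle\Psi_{P,n},\Psi_{P,n}\rangle$ bilinearly. The slice boson counts $0,1,0,2,0$ of $\Psi_{P,n-1},A_{1},A_{2},A_{3},A_{4}$ together with $A_{4}\in\widetilde{{\cal Q}}_{P,n-1}{\cal F}_n$ and $A_{2}\in\widetilde{{\cal Q}}_{P,n-1}^{\perp}{\cal F}_n$ force every cross inner product among these five vectors to vanish except $\langle\Psi_{P,n-1},A_{4}\rangle$ and $\|A_{1}\|^{2}$. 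The identity $\widetilde{{\cal Q}}_{P,n-1}\phi|_{n}^{n-1}(H_{P,n-1}-E_{P,n-1})^{-2}\phi^{*}|_{n}^{n-1}\Psi_{P,n-1}=\alpha_{P}|_{n}^{n-1}\Psi_{P,n-1}$ evaluates these to $-2g^{2}\alpha_{P}|_{n}^{n-1}\|\Psi_{P,n-1}\|^{2}$ and $+g^{2}\alpha_{P}|_{n}^{n-1}\|\Psi_{P,n-1}\|^{2}$, which sum to the claimed $-g^{2}\alpha_{P}|_{n}^{n-1}\|\Psi_{P,n-1}\|^{2}$. The remaining diagonals $\|A_{i}\|^{2}$ with $i\geq 2$ are $\mathcal{O}(|g|^{4}(1-\gamma)^{2})$ by the operator bounds from Theorem \ref{thm:mass_shell} combined with the form-factor estimate $\int_{\Lambda\gamma^{n}}^{\Lambda\gamma^{n-1}}|\rho(k)|^{2}\,dk\leq C(\Lambda\gamma^{n-1})^{2}(1-\gamma)$. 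The potential cubic-in-$g$ contamination $\langle\Psi_{P,n-1},R\rangle$ is eliminated by a parity argument: the $j$-th Neumann term has net slice boson count of the parity of $j$, so only even-$j$ terms pair with $\Psi_{P,n-1}$, and the lowest survivor $j=4$ already has norm $\mathcal{O}(|g|^{4}(1-\gamma)^{2})$. Cross terms $\langle A_{i},R\rangle$ are absorbed using $\|R\|=\mathcal{O}(|g|^{3}(1-\gamma)^{3/2})$.

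\textbf{Part (iii).} Rewrite $\alpha_{P}|_{n}^{n-1}=\|(H_{P,n-1}-E_{P,n-1})^{-1}\phi^{*}|_{n}^{n-1}\widehat{\Psi}_{P,n-1}\|^{2}$. The upper bound is immediate from the gap estimate $\|(H_{P,n-1}-E_{P,n-1})^{-1}\restrict\widetilde{{\cal Q}}_{P,n-1}^{\perp}{\cal F}_n\|\leq C/(\Lambda\gamma^{n})$ of Theorem \ref{thm:mass_shell} combined with $\|\phi^{*}|_{n}^{n-1}\widehat{\Psi}_{P,n-1}\|^{2}=\int_{\Lambda\gamma^{n}}^{\Lambda\gamma^{n-1}}|\rho(k)|^{2}\,dk\leq C(\Lambda\gamma^{n-1})^{2}(1-\gamma)$. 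For the lower bound I apply the reverse Cauchy--Schwarz inequality $\|\phi^{*}|_{n}^{n-1}\widehat{\Psi}_{P,n-1}\|^{4}\leq\alpha_{P}|_{n}^{n-1}\cdot\|(H_{P,n-1}-E_{P,n-1})\phi^{*}|_{n}^{n-1}\widehat{\Psi}_{P,n-1}\|^{2}$. Since the interaction in $H_{P,n-1}$ has momentum support $[\Lambda\gamma^{n-1},\Lambda]$ disjoint from the slice, the identity $f(P^{f})b^{*}(k)=b^{*}(k)f(P^{f}+k)$ gives, for $k$ in the slice, $[H_{P,n-1},b^{*}(k)]=(\omega(k)+\sqrt{(P-P^{f}-k)^{2}+m^{2}}-\sqrt{(P-P^{f})^{2}+m^{2}})b^{*}(k)$. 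The Lipschitz bound $|\sqrt{(P-P^{f}-k)^{2}+m^{2}}-\sqrt{(P-P^{f})^{2}+m^{2}}|\leq|k|\leq\Lambda\gamma^{n-1}$ together with $\omega(k)\leq C\Lambda\gamma^{n-1}$ on the slice then yields $\|(H_{P,n-1}-E_{P,n-1})\phi^{*}|_{n}^{n-1}\widehat{\Psi}_{P,n-1}\|\leq C\Lambda\gamma^{n-1}\|\phi^{*}|_{n}^{n-1}\widehat{\Psi}_{P,n-1}\|$. The matching form-factor lower bound $\|\phi^{*}|_{n}^{n-1}\widehat{\Psi}_{P,n-1}\|^{2}\geq c(\Lambda\gamma^{n-1})^{2}(1-\gamma)$ (same integral from below using $\omega(k)\leq C|k|$ for $|k|\geq 1$) closes $\alpha_{P}|_{n}^{n-1}\geq c_{1}(1-\gamma)$.

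\textbf{Main obstacle.} I expect the delicate step to be the lower bound in (iii): the relativistic square-root nucleon dispersion does not commute with the slice creation operators, so the control of $\|(H_{P,n-1}-E_{P,n-1})\phi^{*}|_{n}^{n-1}\widehat{\Psi}_{P,n-1}\|$ relies essentially on the commutator identity above combined with the Lipschitz estimate for the kinetic energy difference. Once (iii) is in hand, (ii) becomes a careful bookkeeping exercise driven by the boson-number and $\widetilde{{\cal Q}}_{P,n-1}$ orthogonalities together with the parity argument, and (i) is standard residue calculus on a contour enclosing only $E_{P,n-1}$.
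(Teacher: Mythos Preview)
Your proof is correct. Parts (i) and (ii) follow the same route as the paper, which simply says ``direct computation''; your bilinear expansion in (ii) is slightly more laborious than the paper's shortcut via the projection identity $\|\Psi_{P,n}\|^{2}=\langle\Psi_{P,n},\Psi_{P,n-1}\rangle$ (valid because ${\cal Q}_{P,n}$ is rank one), which reduces the computation to a one-sided overlap and makes the parity argument unnecessary --- but your version works as written.

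Part (iii) is where you genuinely diverge. For the lower bound on $\alpha_{P}|_{n}^{n-1}$ the paper passes through the pull-through formula and then introduces spectral cutoffs $\chi^{\pm}(k)$ at the level $5\omega(k)$ for $H_{P-k,n-1}+\omega(k)-E_{P,n-1}$, splits into the two cases $\|\chi^{+}(k)\widehat{\Psi}_{P,n-1}\|^{2}\gtrless\tfrac{1}{2}$, and in the harder case invokes a resolvent identity with the kinetic-energy difference $\Delta_{P}(k)$. Your Cauchy--Schwarz argument $\|\psi\|^{4}\leq\alpha_{P}|_{n}^{n-1}\cdot\|(H_{P,n-1}-E_{P,n-1})\psi\|^{2}$ for $\psi=\phi^{*}|_{n}^{n-1}\widehat{\Psi}_{P,n-1}$, combined with the pull-through identity $(H_{P,n-1}-E_{P,n-1})b^{*}(k)\widehat{\Psi}_{P,n-1}=b^{*}(k)(\omega(k)+\Delta_{P}(k))\widehat{\Psi}_{P,n-1}$ and the Lipschitz bound $\|\Delta_{P}(k)\|\leq|k|$, is more elementary and avoids the case distinction entirely. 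The paper's route has the advantage of yielding an explicit pointwise-in-$k$ lower bound on the integrand (which could be reused elsewhere), whereas your argument only controls the integral; for the present statement your method is shorter and equally effective.

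One cosmetic remark: in your commutator identity the shift operator should sit to the \emph{right} of $b^{*}(k)$, i.e.\ $[H_{P,n-1},b^{*}(k)]=b^{*}(k)\big(\omega(k)+\sqrt{(P-P^{f}-k)^{2}+m^{2}}-\sqrt{(P-P^{f})^{2}+m^{2}}\big)$, since $f(P^{f})b^{*}(k)=b^{*}(k)f(P^{f}+k)$. This does not affect the norm estimate you need.
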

\begin{proof}
Claim (i) can be shown by a direct computation using \defref{ground_states}.
Likewise claim (ii) follows from \defref{ground_states} by exploiting
the relation
\[
\Psi_{P,n}={\cal Q}_{P,n}\Psi_{P,n-1}=\frac{\left\langle \Psi_{P,n},\Psi_{P,n-1}\right\rangle }{\left\langle \Psi_{P,n},\Psi_{P,n}\right\rangle }\Psi_{P,n}
\]
that holds by construction.\\

Next, we prove claim (iii). The bound from above is obtained by using
the pull-through formula and Lemma \ref{lem:energy} (iii), i.e.,
\begin{multline}
\alpha_{P}|_{n}^{n-1}=\left\langle \widehat{\Psi}_{P,n-1},\phi|_{n}^{n-1}\left(\frac{1}{H_{P,n-1}-E_{P,n-1}}\right)^{2}\phi^{*}|_{n}^{n-1}\widehat{\Psi}_{P,n-1}\right\rangle \\
=\int_{\Lambda\gamma^{n}}^{\Lambda\gamma^{n-1}}dk\,\rho(k)^{2}\left\langle \widehat{\Psi}_{P,n-1},\left(\frac{1}{H_{P-k,n-1}+\omega(k)-E_{P,n-1}}\right)^{2}\widehat{\Psi}_{P,n-1}\right\rangle \leq-C\ln\gamma\leq c_{2}(1-\gamma)\label{eq:alpha_bound_above}
\end{multline}
for an appropriately chosen constant $c_{2}$; recall that $\frac{1}{2}<\gamma<1.$ 

With respect to the bound from below we consider the spectral representation
for the self-adjoint operator $H_{P-k,n-1}+\omega(k)-E_{P,n-1}$ and
define the spectral projections
\[
\chi^{+}(k):=\chi_{(5\omega(k),+\infty)}\left(H_{P-k,n-1}+\omega(k)-E_{P,n-1}\right),\qquad\chi^{-}(q):\mathbb{=}\charf_{{\cal F}_{n-1}}-\chi^{+}(q)
\]
where $\chi_{(5\omega(k),+\infty)}$ is the characteristic function
being one on the interval $(5\omega(k),+\infty)$ and zero otherwise.
We also define the function
\[
f(k):=\rho(k)^{2}\left\langle \widehat{\Psi}_{P,n-1},\left(\frac{1}{H_{P-k,n-1}+\omega(k)-E_{P,n-1}}\right)^{2}\left(\chi^{+}(k)+\chi^{-}(k)\right)\widehat{\Psi}_{P,n-1}\right\rangle 
\]
that we study for two complementary cases:
\begin{enumerate}[label=(\alph*)]
\item In the case $\left\Vert \chi^{+}(k)\widehat{\Psi}_{P,n-1}\right\Vert ^{2}<\frac{1}{2}$
we get
\begin{equation}
f(k)\geq\rho(k)^{2}\left\langle \chi^{-}(k)\widehat{\Psi}_{P,n-1},\left(\frac{1}{H_{P-k,n-1}+\omega(k)-E_{P,n-1}}\right)^{2}\chi^{-}(k)\widehat{\Psi}_{P,n-1}\right\rangle \geq\frac{\rho(k)^{2}}{50\omega(k)^{2}}.\label{eq:fk_1}
\end{equation}

\item In the other case, i.e., $\left\Vert \chi^{+}(k)\widehat{\Psi}_{P,n-1}\right\Vert ^{2}\geq\frac{1}{2}$,
we start with the trivial inequality
\begin{equation}
f(k)\geq\rho(k)^{2}\left\langle \frac{1}{H_{P-k,n-1}+\omega(k)-E_{P,n-1}}\widehat{\Psi}_{P,n-1},\chi^{+}(k)\frac{1}{H_{P-k,n-1}+\omega(k)-E_{P,n-1}}\widehat{\Psi}_{P,n-1}\right\rangle \label{eq:fk_second}
\end{equation}
and consider the resolvent formulas 
\begin{multline}
\frac{1}{H_{P-k,n-1}+\omega(k)-E_{P,n-1}}=\frac{1}{H_{P,n-1}+\omega(k)-E_{P,n-1}}\\
-\frac{1}{H_{P,n-1}+\omega(k)-E_{P,n-1}}\Delta_{P}(k)\frac{1}{H_{P-k,n-1}+\omega(k)-E_{P,n-1}}\label{eq:resolvent formula 1}
\end{multline}
and 
\begin{multline}
\quad\frac{1}{H_{P-k,n-1}+\omega(k)-E_{P,n-1}}=\frac{1}{H_{P,n-1}+\omega(k)-E_{P,n-1}}\\
-\frac{1}{H_{P-k,n-1}+\omega(k)-E_{P,n-1}}\Delta_{P}(k)\frac{1}{H_{P,n-1}+\omega(k)-E_{P,n-1}}\label{eq:resolvent formula 2}
\end{multline}
where 
\[
\Delta_{P}(k):=\sqrt{(P-k-P^{f})^{2}+m^{2}}-\sqrt{(P-P^{f})^{2}+m^{2}}\,.
\]
Then we apply the expansions in (\ref{eq:resolvent formula 1}) and
in (\ref{eq:resolvent formula 2}) to the resolvents on the left and
on the right in the scalar product of (\ref{eq:fk_second}), respectively,
and get
\begin{multline}
f(k)\geq\rho(k)^{2}\left\langle \widehat{\Psi}_{P,n-1},\frac{1}{H_{P,n-1}+\omega(k)-E_{P,n-1}}\chi^{+}(k)\frac{1}{H_{P,n-1}+\omega(k)-E_{P,n-1}}\widehat{\Psi}_{P,n-1}\right\rangle \\
-2\Re\rho(k)^{2}\Bigg\langle\widehat{\Psi}_{P,n-1},\frac{1}{H_{P,n-1}+\omega(k)-E_{P,n-1}}\Delta_{P}(k)\frac{1}{H_{P-k,n-1}+\omega(k)-E_{P,n-1}}\times\\
\times\chi^{+}(k)\frac{1}{H_{P,n-1}+\omega(k)-E_{P,n-1}}\widehat{\Psi}_{P,n-1}\Bigg\rangle\\
+\rho(k)^{2}\left\langle \widehat{\Psi}_{P,n-1},\left|\chi^{+}(k)\frac{1}{H_{P-k,n-1}+\omega(k)-E_{P,n-1}}\Delta_{P}(k)\frac{1}{H_{P,n-1}+\omega(k)-E_{P,n-1}}\right|^{2}\widehat{\Psi}_{P,n-1}\right\rangle .\label{eq:f_binomial_form}
\end{multline}
Note that 
\[
\left\Vert \Delta_{P}(k)\right\Vert \leq|k|
\]
so that neglecting the last positive term in (\ref{eq:f_binomial_form})
we get the estimate
\begin{multline}
f(k)\geq\frac{\rho(k)^{2}}{\omega(k)^{2}}\left\Vert \chi^{+}(k)\widehat{\Psi}_{P,n-1}\right\Vert ^{2}-\frac{2\rho(k)^{2}|k|}{5\omega(k)^{3}}\left\Vert \chi^{+}(k)\widehat{\Psi}_{P,n-1}\right\Vert \\
\geq\frac{\rho(k)^{2}}{\omega(k)^{2}}\left\Vert \chi^{+}(k)\widehat{\Psi}_{P,n-1}\right\Vert \left(\frac{1}{\sqrt{2}}-\frac{2}{5}\right)\geq\frac{(5-2\sqrt{2})\rho(k)^{2}}{10\omega(k)^{2}}.\label{eq:fk_2-2}
\end{multline}

\end{enumerate}
Combining the bounds (\ref{eq:fk_1}) and (\ref{eq:fk_2-2}) we obtain
\[
\int_{\Lambda\gamma^{n}}^{\Lambda\gamma^{n-1}}dk\,\rho(k)^{2}\left\langle \widehat{\Psi}_{P,n-1},\left(\frac{1}{H_{P-k,n-1}+\omega(k)-E_{P,n-1}}\right)^{2}\widehat{\Psi}_{P,n-1}\right\rangle \geq-C\ln\gamma\geq c_{1}(1-\gamma)
\]
that gives the bound from below on $\alpha_{P}|_{n}^{n-1}$ for an
appropriately chosen constant $c_{1}$. This together with the bound
from above (\ref{eq:alpha_bound_above}) proves the claim.
\end{proof}
With the help of these expansion formulas we get upper and lower bounds
on the ground state energy shifts:
\begin{lem}
\label{lem:ground_state_energy_bounds}Let $|g|$ be sufficiently
small and $|P|<P_{max}$. For $1\leq n\leq N$ the following holds:
\begin{enumerate}[label=(\roman*)]
\item 
\begin{eqnarray}
E_{P,n}-E_{P,n-1} & = & -\Delta E_{P}|_{n}^{n-1}+{\cal O}\left(|g|^{4}\Lambda(1-\gamma)^{4/2}\right),\label{eq:energy_shift}\\
\Delta E_{P}|_{n}^{n-1} & := & g^{2}\left\langle \widehat{\Psi}_{P,n-1},\phi|_{n}^{n-1}\frac{1}{H_{P,n-1}-E_{P,n-1}}\phi^{*}|_{n}^{n-1}\widehat{\Psi}_{P,n-1}\right\rangle .\nonumber 
\end{eqnarray}

\item There exist universal constants $a,b>0$ such that 
\[
g^{2}a\Lambda\gamma^{n-1}(1-\gamma)\leq\Delta E_{P}|_{n}^{n-1}\leq g^{2}b\Lambda\gamma^{n-1}(1-\gamma).
\]

\end{enumerate}
\end{lem}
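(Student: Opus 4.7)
For part (i), my plan is to extract the energy shift from the eigenvalue equation. Testing $H_{P,n}\Psi_{P,n} = E_{P,n}\Psi_{P,n}$ against $\Psi_{P,n-1}$ and using the decomposition $H_{P,n} = H_{P,n-1} + g\Phi|_{n}^{n-1}$, the self-adjointness of $H_{P,n-1}$ together with $\phi|_{n}^{n-1}\Psi_{P,n-1} = 0$ (the slice is empty of bosons in $\Psi_{P,n-1}$) will yield
\[
(E_{P,n} - E_{P,n-1})\langle\Psi_{P,n-1},\Psi_{P,n}\rangle = g\langle\phi^{*}|_{n}^{n-1}\Psi_{P,n-1}, \Psi_{P,n}\rangle.
\]
Because the spectral projection $\mathcal{Q}_{P,n}$ is the self-adjoint, rank-one Riesz projection onto the ground-state eigenspace of $H_{P,n}$, $\langle\Psi_{P,n-1},\Psi_{P,n}\rangle = \|\Psi_{P,n}\|^{2}$. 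I will then substitute the expansion of $\Psi_{P,n}$ from \thmref{expansion_formulas}(i) into the right-hand side. Since $\phi^{*}|_{n}^{n-1}\Psi_{P,n-1}$ carries exactly one boson in the slice, three of the four nontrivial terms in the expansion vanish by orthogonality between states of different slice-boson number in $\mathcal{F}_{n}^{n-1}$: the zeroth-order term has zero slice bosons, the second-order $\widetilde{\mathcal{Q}}^{\perp}R\phi R\phi^{*}$ and $\widetilde{\mathcal{Q}}^{\perp}R\phi^{*}R\phi^{*}$ terms have zero and two slice bosons respectively, and the $\widetilde{\mathcal{Q}}\phi R^{2}\phi^{*}$ term is proportional to $\Psi_{P,n-1}$. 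Only the $\mathcal{O}(g)$ term $-gR\phi^{*}|_{n}^{n-1}\Psi_{P,n-1}$ survives, producing $-\Delta E_{P}|_{n}^{n-1}\|\Psi_{P,n-1}\|^{2}$.

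The $\mathcal{O}(|g|^{3}(1-\gamma)^{3/2})$ remainder in the expansion will be controlled using $\|\phi^{*}|_{n}^{n-1}\Psi_{P,n-1}\| \leq C\sqrt{\Lambda(1-\gamma)}\|\Psi_{P,n-1}\|$ (a direct $\rho$-integral), yielding an error $\mathcal{O}(|g|^{4}\Lambda^{1/2}(1-\gamma)^{2})$ after multiplication by the outer $g$. To finish, I will divide by $\|\Psi_{P,n}\|^{2}$ and invoke \thmref{expansion_formulas}(ii) to expand $1/(1 - g^{2}\alpha_{P}|_{n}^{n-1} + \mathcal{O}(g^{4}(1-\gamma)^{2}))$ as a geometric series; the leading correction becomes $-g^{2}\alpha_{P}|_{n}^{n-1}\Delta E_{P}|_{n}^{n-1} = \mathcal{O}(g^{4}\Lambda(1-\gamma)^{2})$ by part (iii) of the same theorem, which absorbs the subdominant $\Lambda^{1/2}$ contribution into the claimed $\mathcal{O}(|g|^{4}\Lambda(1-\gamma)^{4/2})$.

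For part (ii), I will combine a pull-through identity with a Cauchy-Schwarz trick. Using $H_{P,n-1}b^{*}(k) = b^{*}(k)(H_{P-k,n-1}+\omega(k))$ for $k$ in the slice, together with the canonical commutation relation and the absence of slice bosons in $\widehat\Psi_{P,n-1}$, the pull-through will recast
\[
\Delta E_{P}|_{n}^{n-1} = g^{2}\int_{n}^{n-1}dk\,\rho(k)^{2}\left\langle\widehat\Psi_{P,n-1}, \frac{1}{H_{P-k,n-1}+\omega(k)-E_{P,n-1}}\widehat\Psi_{P,n-1}\right\rangle.
\]
The upper bound will follow directly from \lemref{energy}(iii), which gives the operator inequality $H_{P-k,n-1}+\omega(k)-E_{P,n-1} \geq (1-P_{max})\omega(k)$; the integrand is then $\leq C/\omega(k)^{2}$, and spherical integration over the shell $\Lambda\gamma^{n} \leq |k| < \Lambda\gamma^{n-1}$ (with $\omega(k) \sim |k|$ for $|k| \geq 1$) yields the $\mathcal{O}(\Lambda\gamma^{n-1}(1-\gamma))$ upper bound.

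The lower bound is where I expect the main obstacle to lie, since a direct resolvent expansion of $R_{k} := (H_{P-k,n-1}+\omega(k)-E_{P,n-1})^{-1}$ around $(H_{P,n-1}+\omega(k)-E_{P,n-1})^{-1}$ fails to give a useful pointwise lower bound for large $|k|$ (the perturbation $\Delta_{P}(k)$ grows like $|k|$, which can be comparable to $\omega(k)$). My plan is to invert Cauchy-Schwarz: writing $1 = \|\widehat\Psi_{P,n-1}\|^{2} = \langle R_{k}^{-1/2}\widehat\Psi_{P,n-1}, R_{k}^{1/2}\widehat\Psi_{P,n-1}\rangle$ produces $\langle\widehat\Psi_{P,n-1}, R_{k}\widehat\Psi_{P,n-1}\rangle \geq 1/\langle\widehat\Psi_{P,n-1}, R_{k}^{-1}\widehat\Psi_{P,n-1}\rangle$. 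The Lipschitz estimate $\|\sqrt{(P-k-P^{f})^{2}+m^{2}} - \sqrt{(P-P^{f})^{2}+m^{2}}\| \leq |k|$, combined with $H_{P,n-1}\widehat\Psi_{P,n-1} = E_{P,n-1}\widehat\Psi_{P,n-1}$, then gives $\langle\widehat\Psi_{P,n-1}, R_{k}^{-1}\widehat\Psi_{P,n-1}\rangle \leq |k| + \omega(k) \leq 2\omega(k)$. Hence the integrand is bounded below by $1/(2\omega(k))$, and the same spherical integration supplies the matching $\gtrsim \Lambda\gamma^{n-1}(1-\gamma)$ lower bound.
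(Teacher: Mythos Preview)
Your argument for part~(i) and the upper bound in part~(ii) is essentially the paper's: both compute
\[
E_{P,n}-E_{P,n-1}=\frac{\langle\Psi_{P,n},\,g\Phi|_{n}^{n-1}\Psi_{P,n-1}\rangle}{\langle\Psi_{P,n},\Psi_{P,n-1}\rangle}
\]
and expand $\Psi_{P,n}$ via Theorem~\ref{thm:expansion_formulas}(i). One small slip: the norm $\|\phi^{*}|_{n}^{n-1}\Psi_{P,n-1}\|$ is of order $\Lambda\gamma^{n-1}\sqrt{1-\gamma}\,\|\Psi_{P,n-1}\|$, not $\sqrt{\Lambda(1-\gamma)}\,\|\Psi_{P,n-1}\|$, because $\int_{\Lambda\gamma^{n}}^{\Lambda\gamma^{n-1}}\rho(k)^{2}\,dk\sim(\Lambda\gamma^{n-1})^{2}(1-\gamma)$. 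With this correction the remainder is already $\mathcal{O}(|g|^{4}\Lambda(1-\gamma)^{2})$, so the conclusion is unaffected. Note also that in absorbing $g^{2}\alpha_{P}|_{n}^{n-1}\,\Delta E_{P}|_{n}^{n-1}$ you tacitly use the upper bound of~(ii); this is harmless since that bound is established independently from~(i).

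For the \emph{lower} bound in~(ii) your route differs from the paper's and is more elementary. The paper refers back to the spectral-projection argument of Theorem~\ref{thm:expansion_formulas}(iii): one splits the spectrum of $H_{P-k,n-1}+\omega(k)-E_{P,n-1}$ at the level $5\omega(k)$ via projections $\chi^{\pm}(k)$, treats separately the cases $\|\chi^{+}(k)\widehat\Psi_{P,n-1}\|^{2}\lessgtr\tfrac{1}{2}$, and in the harder case inserts the resolvent identity with $\Delta_{P}(k)$ to extract a positive leading term. Your Cauchy--Schwarz inversion $\langle\widehat\Psi,R_{k}\widehat\Psi\rangle\geq\langle\widehat\Psi,R_{k}^{-1}\widehat\Psi\rangle^{-1}$ bypasses the case analysis entirely: because $\widehat\Psi_{P,n-1}$ is the ground state of $H_{P,n-1}$, the quantity $\langle\widehat\Psi,R_{k}^{-1}\widehat\Psi\rangle$ collapses to $\langle\widehat\Psi,\Delta_{P}(k)\widehat\Psi\rangle+\omega(k)\leq|k|+\omega(k)\leq2\omega(k)$, and the shell integral $\int\rho(k)^{2}/\omega(k)\,dk\sim\Lambda\gamma^{n-1}(1-\gamma)$ finishes the job. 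The paper's method has the advantage of treating $\alpha_{P}|_{n}^{n-1}$ (resolvent squared) and $\Delta E_{P}|_{n}^{n-1}$ (single resolvent) by the same machinery; your trick, however, also covers the squared case via $\langle R_{k}^{2}\rangle\geq\langle R_{k}\rangle^{2}\geq(2\omega(k))^{-2}$, so in fact it gives a shorter proof of Theorem~\ref{thm:expansion_formulas}(iii) as well.
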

\begin{proof}
Claim (i) follows from the expansion formula of \thmref{expansion_formulas}
applied to
\[
E_{P,n}-E_{P,n-1}=\frac{\left\langle \Psi_{P,n},\left[H_{P,n}-H_{P,n-1}\right]\Psi_{P,n-1}\right\rangle }{\left\langle \Psi_{P,n},\Psi_{P,n-1}\right\rangle }=\frac{\left\langle \Psi_{P,n},g\Phi|_{n}^{n-1}\Psi_{P,n-1}\right\rangle }{\left\langle \Psi_{P,n},\Psi_{P,n-1}\right\rangle }.
\]

Next, we show claim (ii). The bound from above follows by using the
pull-through formula, i.e.,
\begin{equation}
\Delta E_{P}|_{n}^{n-1}=g^{2}\int_{\Lambda\gamma^{n}}^{\Lambda\gamma^{n-1}}dk\,\rho(k)^{2}\left\langle \widehat{\Psi}_{P,n-1},\frac{1}{H_{P-k,n-1}+\omega(k)-E_{P,n-1}}\widehat{\Psi}_{P,n-1}\right\rangle \label{eq:energy_shift_to_be_bound}
\end{equation}
and the estimate
\begin{equation}
g^{2}\int_{\Lambda\gamma^{n}}^{\Lambda\gamma^{n-1}}dk\,\rho(k)^{2}\left\langle \widehat{\Psi}_{P,n-1},\frac{1}{H_{P-k,n-1}+\omega(k)-E_{P,n-1}}\widehat{\Psi}_{P,n-1}\right\rangle \leq g^{2}b\Lambda\gamma^{n-1}(1-\gamma).\label{eq:delta_E_bound}
\end{equation}
that uses \lemref{energy} (iii). The bound from below of (\ref{eq:energy_shift_to_be_bound})
can be shown by a similar argument as in (iii) of \thmref{expansion_formulas}.
Therefore we omit the proof. 
\end{proof}
The established upper and lower bounds given in \lemref{ground_state_energy_bounds}
enable us to prove the first main result.
\begin{proof}[\textbf{Proof of \thmref{energy_bounds}}]
Using (i) of \lemref{ground_state_energy_bounds} we find
\[
E_{P,N}=E_{P,0}-\sum_{n=1}^{N}\Delta E_{P}|_{n}^{n-1}+{\cal O}\left(N|g|^{4}\Lambda(1-\gamma)^{4/2}\right)\,,
\]
where by construction $E_{P,0}=\sqrt{P^{2}+m^{2}}$.

The inequalities in (ii) of \lemref{ground_state_energy_bounds} imply
\begin{equation}
E_{P,N}\leq\sqrt{P^{2}+m^{2}}-g^{2}a\Lambda(1-\gamma)\sum_{n=1}^{N}\gamma^{n-1}+|g|^{4}C\Lambda N(1-\gamma)\label{eq:estimate-d-energy1}
\end{equation}
as well as
\begin{equation}
E_{P,N}\geq\sqrt{P^{2}+m^{2}}-g^{2}b\Lambda(1-\gamma)\sum_{n=1}^{N}\gamma^{n-1}-|g|^{4}C\Lambda\ln\Lambda(1-\gamma).\label{eq:estimate-d-energy2}
\end{equation}
Notice that by the same argument used in Lemma \ref{lem:gap_estimate}
one can conclude that $E_{P,N}=\inf\spec{H_{P,N}\restrict{\cal F}_{j}}$
for all $j\geq N$. Since $N=\frac{\ln\Lambda}{-\ln\gamma}$ and the
estimates in (\ref{eq:estimate-d-energy1}) and (\ref{eq:estimate-d-energy2})
hold for $\gamma$ arbitrarily close to $1$, they imply the inequalities
in (\ref{eq:ground_state_energy_estimate-1}).
\end{proof}

\section{\label{sec:eff_vel_eff_mass}The Effective Velocity and the Mass
Shell}

In this last section we provide the proof of \thmref{effective_velocity-2}, 
the starting point of which is the expression of the first derivative
of the ground state energies $E_{P,n}$ that follows from analytic
perturbation theory in $P$ as stated in the proposition below:
\begin{prop}
Suppose $E_{P,n}$ is the non-degenerate isolated eigenvalue corresponding
to the ground state $\Psi_{P,n}$. Then, the equation
\begin{eqnarray}
\frac{\partial E_{P,n}}{\partial P_{i}} & = & \left\langle \widehat{\Psi}_{P,n},V_{i}(P)\widehat{\Psi}_{P,n}\right\rangle \,,\,\quad\quad V_{i}(P):=\frac{P_{i}-P_{i}^{f}}{\left[(P-P^{f})^{2}+m^{2}\right]^{1/2}}\qquad\label{eq:energy_derivatives-1}
\end{eqnarray}
holds true for components $i=1,2,3$.\end{prop}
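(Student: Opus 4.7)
The statement is a Hellmann--Feynman-type identity for the fiber Hamiltonian $H_{P,n}$, so the natural plan is to apply analytic perturbation theory in the parameter $P$ and differentiate the eigenvalue equation. The key observation is that only the nucleon term $\sqrt{(P-P^f)^2+m^2}$ depends on $P$, and a direct computation gives
\[
\frac{\partial H_{P,n}}{\partial P_{i}}=V_{i}(P)=\frac{P_{i}-P_{i}^{f}}{[(P-P^f)^2+m^2]^{1/2}},\qquad i=1,2,3,
\]
as a self-adjoint operator on $D(H_{P=0}^{(0)})$, with norm bounded by $1$.

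Next, I would establish analyticity (hence differentiability) of both $E_{P,n}$ and $\Psi_{P,n}$ in $P$. The Hamiltonian $H_{P,n}$ depends analytically on $P$ in the sense of Kato, since $V_{i}(P)$ is bounded and $P\mapsto H_{P,n}$ is norm-resolvent analytic. By \thmref{mass_shell}, $E_{P,n}$ is a non-degenerate isolated eigenvalue of $H_{P,n}\restrict{\cal F}_{n}$ with a spectral gap of size at least $\zeta\omega(\Lambda\gamma^{n+1})$, uniformly in $|P|<P_{max}$. Hence, for each $P_{0}$ with $|P_{0}|<P_{max}$ there is a fixed contour $\Gamma$ encircling only $E_{P_{0},n}$ such that for $P$ in a complex neighborhood of $P_{0}$, the spectral projection
\[
{\cal Q}_{P,n}=-\frac{1}{2\pi i}\oint_{\Gamma}\frac{dz}{H_{P,n}-z}\restrict{\cal F}_{n}
\]
is well-defined and depends analytically on $P$. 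Consequently $\Psi_{P,n}$ and $E_{P,n}={\langle\Psi_{P,n},H_{P,n}\Psi_{P,n}\rangle}/{\langle\Psi_{P,n},\Psi_{P,n}\rangle}$ are real-analytic in a real neighborhood of $P_{0}$, and $\partial_{P_{i}}\Psi_{P,n}$ lies in $D(H_{P,n})$.

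Given this regularity, the Hellmann--Feynman argument is standard. Differentiating the eigenvalue equation $H_{P,n}\Psi_{P,n}=E_{P,n}\Psi_{P,n}$ with respect to $P_{i}$ yields
\[
V_{i}(P)\Psi_{P,n}+(H_{P,n}-E_{P,n})\frac{\partial\Psi_{P,n}}{\partial P_{i}}=\frac{\partial E_{P,n}}{\partial P_{i}}\Psi_{P,n}.
\]
Taking the inner product with $\widehat{\Psi}_{P,n}$ and using self-adjointness together with $(H_{P,n}-E_{P,n})\widehat{\Psi}_{P,n}=0$ makes the middle term vanish; after dividing by $\|\Psi_{P,n}\|^{2}$ one obtains
\[
\frac{\partial E_{P,n}}{\partial P_{i}}=\langle\widehat{\Psi}_{P,n},V_{i}(P)\widehat{\Psi}_{P,n}\rangle,
\]
as claimed.

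The main technical point (which is really the only obstacle) is justifying that $\Psi_{P,n}$ is strongly differentiable in $P$ with derivative in $D(H_{P,n})$, so that the formal differentiation above is rigorous. This is handled by the contour representation of ${\cal Q}_{P,n}$ together with the uniform gap estimate from \thmref{mass_shell}(ii); the latter ensures the contour $\Gamma$ can be kept fixed while $P$ varies in a neighborhood, and the boundedness of $V_{i}(P)$ together with standard second-resolvent-identity estimates yields the required norm-analyticity of $P\mapsto{\cal Q}_{P,n}$ on ${\cal F}_{n}$.
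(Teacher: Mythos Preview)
Your argument is correct and is exactly the standard Hellmann--Feynman computation justified by Kato analytic perturbation theory in $P$. The paper itself does not give a proof but simply cites Lemma~3.7 in Fr\"ohlich~\cite{fraehlich_existence_1974}; your write-up supplies precisely the details that reference stands in for, so the approaches coincide.
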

\begin{proof}
See Lemma 3.7 in \cite{fraehlich_existence_1974}.
\end{proof}
In order to control the scalar product in ($\ref{eq:energy_derivatives-1}$)
the following definition will be convenient:
\begin{defn}
For each $\Lambda\gamma^{n-1}$ we consider the energy level 
\begin{equation}
\min\left\{ \Lambda,\frac{\Lambda\gamma^{n-1}}{g^{\epsilon}}\right\} ,\qquad0<\epsilon\leq1/2\,\,,\label{eq:def_chi-1}
\end{equation}
and $l\in\mathbb{N}\cup\{0\}$ such that 
\[
\Lambda\gamma^{l}\leq\min\left\{ \Lambda,\frac{\Lambda\gamma^{n-1}}{g^{\epsilon}}\right\} <\Lambda\gamma^{l-1}\,.
\]
We define
\begin{equation}
\Xi{}_{n-1}:=\Lambda\gamma^{l}\,.\label{eq:def_chi2-1}
\end{equation}

\end{defn}
The energy scale $\Xi{}_{n-1}$ will be used in a convenient \emph{backwards
expansion} to gain a certain power of $|g|$ in some estimates. From
now on, we use the notation
\[
H_{P,\Xi{}_{n-1}}:=H_{P}|_{\Xi{}_{n-1}}^{\Lambda},\qquad\Psi_{P,\Xi{}_{n-1}}:=\Psi_{P,l}\,.
\]
The following lemma gives a justification for this type of expansion:
\begin{lem}
\label{lem:backwards_expansion-1}Let $|g|$ be sufficiently small,
$|P|<P_{max}$, and $0<\epsilon\leq1/2$. For $z\in\Gamma_{P,n-1}$
the bound
\begin{equation}
\left\Vert \left(\frac{1}{H_{P,\Xi{}_{n-1}}-z}\right)^{1/2}g\Phi|_{\Lambda\gamma^{n-1}}^{\Xi{}_{n-1}}\left(\frac{1}{H_{P,\Xi{}_{n-1}}-z}\right)^{1/2}\right\Vert _{{\cal F}_{n-1}}\leq|g|^{\delta}C,\qquad\delta:=1-\frac{\epsilon}{2},\label{eq:backwards_order_est-1-1}
\end{equation}
holds true. Consequently, the expansion formulas
\begin{equation}
\begin{split}\Psi_{P,n-1}^{(\Xi{}_{n-1})} & :={\cal Q}_{P,n-1}\Psi_{P,\Xi{}_{n-1}},\\
{\cal Q}_{P,n-1} & :=-\frac{1}{2\pi i}\oint_{\Gamma_{P,n-1}}\frac{dz}{H_{P,n-1}-z}\restrict{\cal F}_{n-1}\\
 & =-\frac{1}{2\pi i}\oint_{\Gamma_{P,n-1}}\frac{dz}{H_{P,\Xi{}_{n-1}}-z}\sum_{j=0}^{\infty}\left[-g\Phi|_{\Lambda\gamma^{n-1}}^{\Xi{}_{n-1}}\frac{1}{H_{P,\Xi{}_{n-1}}-z}\right]^{j}\restrict{\cal F}_{n-1}
\end{split}
\label{eq:norm relation-1}
\end{equation}
hold true and 
\begin{equation}
\|\Psi_{P,n-1}^{(\Xi{}_{n-1})}\|^{2}\geq(1-\mathcal{O}(|g|^{4\delta}))\|\Psi_{P,\Xi{}_{n-1}}\|^{2}\,.\label{eq:norm relation-2-1}
\end{equation}
\end{lem}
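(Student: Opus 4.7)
The plan is to adapt the Neumann-expansion argument of Lemma \ref{lem:neumann} to the wider momentum slice $[\Lambda\gamma^{n-1},\Xi_{n-1}]$. The key accounting is that by the definition of $\Xi_{n-1}$ one has $\Xi_{n-1}\leq\Lambda\gamma^{n-1}/|g|^{\epsilon}$, so the slice is at most a factor $|g|^{-\epsilon}$ wider than a single scale; this is exactly what degrades the effective coupling from $|g|$ to $|g|^{\delta}=|g|^{1-\epsilon/2}$. As a preparatory step I would show that $E_{P,\Xi_{n-1}}=E_{P,l}$ is the non-degenerate ground state energy of $H_{P,\Xi_{n-1}}\restrict\mathcal{F}_{n-1}$ with spectral gap at least $\zeta\omega(\Lambda\gamma^{n-1})$; this follows by iteratively applying Lemma \ref{lem:gap_estimate} to the scales between $l$ and $n-1$ and by using that the interaction in $H_{P,\Xi_{n-1}}$, supported on $[\Xi_{n-1},\Lambda]$, commutes with the slice modes in $[\Lambda\gamma^{n-1},\Xi_{n-1}]$. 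Lemma \ref{lem:ground_state_energy_bounds}(ii) moreover yields $|E_{P,l}-E_{P,n-2}|\leq C|g|^{2-\epsilon}\Lambda\gamma^{n-1}$, so for $|g|$ small enough $E_{P,l}$ lies strictly inside $\Gamma_{P,n-1}$ (whose radius is $\tfrac{1}{2}\zeta\omega(\Lambda\gamma^{n}))$ and the bound $\|(H_{P,\Xi_{n-1}}-z)^{-1/2}\|_{\mathcal{F}_{n-1}}\leq C/\sqrt{\omega(\Lambda\gamma^{n})}$ holds uniformly for $z\in\Gamma_{P,n-1}$.

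Next I would reproduce the $(H^{f})^{1/2}$-commutation trick of Lemma \ref{lem:neumann} on this larger slice: since $H^{f}|_{n-1}^{l}$ commutes with $H_{P,\Xi_{n-1}}$ and the subtraction $\theta H^{f}|_{n-1}^{l}$ preserves the extended gap above $E_{P,l}$ by the analogue of Lemma \ref{lem:gap_estimate} C(ii), one obtains
\[
\bigl\|g\phi|_{n-1}^{l}\,(H_{P,\Xi_{n-1}}-z)^{-1/2}\bigr\|_{\mathcal{F}_{n-1}}\leq C|g|\sqrt{\Xi_{n-1}-\Lambda\gamma^{n-1}}\,\theta^{-1/2}\leq C|g|\sqrt{\Lambda\gamma^{n-1}/|g|^{\epsilon}}.
\]
Combined with the resolvent bound this gives exactly $C|g|^{1-\epsilon/2}=C|g|^{\delta}$ in \eqref{eq:backwards_order_est-1-1}; the $\phi^{*}$-contribution is controlled identically by passing to adjoints and using that $|H-z|=|H-\bar z|$ on $\Gamma_{P,n-1}$. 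Once this operator bound is established, the Neumann series in \eqref{eq:norm relation-1} converges uniformly in $z\in\Gamma_{P,n-1}$ for $|g|$ small, yielding the claimed expansion of $\mathcal{Q}_{P,n-1}$ and hence of $\Psi_{P,n-1}^{(\Xi_{n-1})}$.

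Finally, for the norm lower bound \eqref{eq:norm relation-2-1}, I would exploit that $\mathcal{Q}_{P,n-1}$ is an orthogonal projection (since $H_{P,n-1}$ is self-adjoint), so that
\[
\|\Psi_{P,n-1}^{(\Xi_{n-1})}\|^{2}=\|\Psi_{P,\Xi_{n-1}}\|^{2}-\|\mathcal{Q}_{P,n-1}^{\perp}\Psi_{P,\Xi_{n-1}}\|^{2},
\]
and to bound the error I would exploit the identity $(H_{P,n-1}-E_{P,n-1})\Psi_{P,\Xi_{n-1}}=(E_{P,l}-E_{P,n-1})\Psi_{P,\Xi_{n-1}}+g\Phi|_{n-1}^{l}\Psi_{P,\Xi_{n-1}}$ together with the operator inequality $H_{P,n-1}-E_{P,n-1}\geq\zeta\omega(\Lambda\gamma^{n})$ on the range of $\mathcal{Q}_{P,n-1}^{\perp}$. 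The hard part I expect is recovering the sharp exponent $4\delta$ rather than the cruder $2\delta$ that a single application of \eqref{eq:backwards_order_est-1-1} would produce: the energy-difference piece $|E_{P,l}-E_{P,n-1}|^{2}/\omega(\Lambda\gamma^{n})^{2}$ is already $\mathcal{O}(|g|^{4\delta})$ by Lemma \ref{lem:ground_state_energy_bounds}(ii), but the field contribution needs one additional insertion of $(H_{P,\Xi_{n-1}}-E_{P,l})^{-1}$ on $g\Phi|_{n-1}^{l}\Psi_{P,\Xi_{n-1}}$ and a second use of \eqref{eq:backwards_order_est-1-1}, exploiting the orthogonality $\phi|_{n-1}^{l}\Psi_{P,\Xi_{n-1}}=0$ to avoid the crude $\|\rho\|_{2}^{2}$ contribution from $\phi^{*}|_{n-1}^{l}\Psi_{P,\Xi_{n-1}}$; this second iteration effectively squares the $|g|^{\delta}$ gain and is what upgrades $2\delta$ to $4\delta$.
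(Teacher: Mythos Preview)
Your argument for the operator bound \eqref{eq:backwards_order_est-1-1} is correct and essentially identical to the paper's: establish the gap for $H_{P,\Xi_{n-1}}\restrict\mathcal{F}_{n-1}$ via the mechanism of Lemma~\ref{lem:gap_estimate}, use Lemma~\ref{lem:ground_state_energy_bounds} to place $E_{P,\Xi_{n-1}}$ inside $\Gamma_{P,n-1}$, and then repeat the $H^{f}$--commutation estimate of Lemma~\ref{lem:neumann} on the wider slice. The paper does exactly this, with the same bookkeeping $\Xi_{n-1}^{1/2}/(\Lambda\gamma^{n})^{1/2}\leq C|g|^{-\epsilon/2}$.

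For the norm relation \eqref{eq:norm relation-2-1} the paper gives no detailed argument at all; it simply asserts that the bound follows once the Neumann expansion is available. The natural route, mirroring Theorem~\ref{thm:expansion_formulas}(ii), is to use $\|\Psi_{P,n-1}^{(\Xi_{n-1})}\|^{2}=\langle\Psi_{P,\Xi_{n-1}},\mathcal{Q}_{P,n-1}\Psi_{P,\Xi_{n-1}}\rangle$, expand $\mathcal{Q}_{P,n-1}$ via \eqref{eq:norm relation-1}, and observe that odd powers of $g\Phi$ vanish by boson-number parity. Your alternative, based on the eigenvalue identity $(H_{P,n-1}-E_{P,n-1})\Psi_{P,\Xi_{n-1}}=(E_{P,l}-E_{P,n-1})\Psi_{P,\Xi_{n-1}}+g\phi^{*}|_{n-1}^{l}\Psi_{P,\Xi_{n-1}}$ together with the gap on $\operatorname{Ran}\mathcal{Q}_{P,n-1}^{\perp}$, is in principle workable but actually produces a cruder exponent: the field piece gives $\|g\phi^{*}|_{n-1}^{l}\Psi_{P,\Xi_{n-1}}\|/\omega(\Lambda\gamma^{n})\sim|g|\,\Xi_{n-1}/(\Lambda\gamma^{n})\sim|g|^{1-\epsilon}$, hence only $|g|^{2-2\epsilon}$ after squaring, which is weaker than $|g|^{2\delta}=|g|^{2-\epsilon}$.

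Your proposed ``second iteration'' to upgrade this to $|g|^{4\delta}$ is not spelled out clearly enough to be convincing, and in fact the $j=2$ term of the Neumann expansion contributes $-g^{2}\widetilde{\alpha}$ with $\widetilde{\alpha}\sim\epsilon\ln|g|^{-1}$ (by pull-through, as in \eqref{eq:alpha_bound_above}), which is \emph{not} $\mathcal{O}(|g|^{4\delta-2})$. So either the exponent $4\delta$ in \eqref{eq:norm relation-2-1} should be read as $2\delta$ (or as the explicit $1-g^{2}\widetilde{\alpha}+\mathcal{O}(|g|^{4\delta})$), or the paper is being somewhat loose here. This does not matter for the downstream application in the proof of Theorem~\ref{thm:effective_velocity-2}, where only $\|\Psi_{P,n-1}^{(\Xi_{n-1})}\|\geq c\|\Psi_{P,\Xi_{n-1}}\|$ for some fixed $c>0$ is used; that much follows already from $|g|^{2\delta}$ small. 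I would recommend you simply state the weaker bound and note that it suffices.
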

\begin{proof}
With the help of \lemref{ground_state_energy_bounds} we infer the
bound
\begin{equation}
\left|E_{P,n-1}-E_{P,\Xi{}_{n-1}}\right|\leq Cg^{2}\Xi{}_{n-1}.\label{eq:backward_energy_est-1}
\end{equation}
Hence, by the definition of $\Xi{}_{n-1}$ in (\ref{eq:def_chi-1})
and $0<\epsilon\leq1/2$, $|g|$ can be chosen sufficiently small
but uniformly in $n$ such that both ground state energies, $E_{P,n-1}$
and $E_{P,\Xi{}_{n-1}}$, lie inside the contour $\Gamma_{P,n-1}$.
We estimate
\begin{multline*}
\sup_{z\in\Gamma_{P,n-1}}\left\Vert \left(\frac{1}{H_{P,\Xi{}_{n-1}}-z}\right)^{1/2}g\Phi|_{\Lambda\gamma^{n-1}}^{\Xi{}_{n-1}}\left(\frac{1}{H_{P,\Xi{}_{n-1}}-z}\right)^{1/2}\right\Vert _{{\cal F}_{n-1}}\\
\leq2|g|\sup_{z\in\Gamma_{P,n-1}}\left\Vert \left(\frac{1}{H_{P,\Xi{}_{n-1}}-z}\right)^{1/2}\right\Vert _{{\cal F}_{n-1}}\cdot\sup_{z\in\Gamma_{P,n-1}}\left\Vert \phi|_{\Lambda\gamma^{n-1}}^{\Xi_{n-1}}\left(\frac{1}{H_{P,\Xi{}_{n-1}}-z}\right)^{1/2}\right\Vert _{{\cal F}_{n-1}}\,.
\end{multline*}
 A similar computation as in \lemref{gap_estimate} gives
\begin{equation}
\gap{H_{P,\Xi{}_{n-1}}}{{\cal F}_{n-1}}\geq\zeta\omega\left(\Lambda\gamma^{n}\right)\label{eq:reexpansion_gap-1}
\end{equation}
such that for sufficiently small $|g|$ one has the bound
\begin{equation}
\left\Vert \left(\frac{1}{H_{P,\Xi{}_{n-1}}-z}\right)^{1/2}\right\Vert _{{\cal F}_{n-1}}\leq\left(\frac{C}{\Lambda\gamma^{n}}\right)^{1/2}\label{eq:resolvent-wrong-contour}
\end{equation}
by using inequality (i) in Lemma \ref{lem:energy}. Furthermore, one
can bound

\begin{eqnarray*}
\left\Vert \phi|_{\Lambda\gamma^{n-1}}^{\Xi{}_{n-1}}\left(\frac{1}{H_{P,\Xi{}_{n-1}}-z}\right)^{1/2}\right\Vert _{{\cal F}_{n-1}} & \leq & C\Xi{}_{n-1}^{1/2}\left\Vert \left(H^{f}|_{\Lambda\gamma^{n-1}}^{\Xi{}_{n-1}}\right)^{1/2}\left(\frac{1}{H_{P,\Xi{}_{n-1}}-z}\right)^{1/2}\right\Vert _{{\cal F}_{n-1}}\leq C\Xi{}_{n-1}^{1/2}\theta^{-1/2}.
\end{eqnarray*}
Hence, we may conclude that
\[
\left\Vert \left(\frac{1}{H_{P,\Xi{}_{n-1}}-z}\right)^{1/2}g\Phi|_{\Lambda\gamma^{n-1}}^{\Xi{}_{n-1}}\left(\frac{1}{H_{P,\Xi{}_{n-1}}-z}\right)^{1/2}\right\Vert _{{\cal F}_{n-1}}\leq|g|C\begin{cases}
\left(\frac{\Lambda\gamma^{n-1}}{\Lambda\gamma^{n}g^{\epsilon}}\right)^{1/2} & \text{for }\frac{\Lambda\gamma^{n-1}}{g^{\epsilon}}<\Lambda\\
\left(\frac{\Lambda}{\Lambda\gamma^{n}}\right)^{1/2} & \text{for }\frac{\Lambda\gamma^{n-1}}{g^{\epsilon}}\geq\Lambda
\end{cases}\qquad\leq|g|^{\delta}C.
\]
This ensures the validity of the expansion formulas (\ref{eq:norm relation-1})
as well as the relation in (\ref{eq:norm relation-2-1}). 
\end{proof}
We can now prove our second main result:
\begin{proof}[\textbf{Proof of \thmref{effective_velocity-2}}]
The strategy of proof is an expansion using the formulas provided
by \thmref{expansion_formulas}. As a first observation we note that
by the spectral theorem the bounds 
\begin{equation}
\|V_{i}(P)\|\leq1\quad\forall P\in\mathbb{R}^{3},\qquad\left|\frac{\partial E_{P,n}}{\partial P_{i}}\right|\leq1\quad\text{for \quad}|P|<P_{max}\label{eq:vel_est}
\end{equation}
hold. These inequalities will be employed frequently without further
notice.

With
the help of \thmref{expansion_formulas} we find the following expansion
for all $N\geq n\geq1$: 
\begin{align}
\left\langle \widehat{\Psi}_{P,n},V_{i}(P)\widehat{\Psi}_{P,n}\right\rangle = & \frac{\left\langle \Psi_{P,n},V_{i}(P)\Psi_{P,n}\right\rangle }{\left\langle \Psi_{P,n},\Psi_{P,n}\right\rangle }\label{eq:T expansion}\\
= & \frac{1+g^{2}\alpha_{P}|_{n}^{n-1}+{\cal O}\left(|g|^{4}(1-\gamma)^{4/2}\right)}{\left\langle \Psi_{P,n-1},\Psi_{P,n-1}\right\rangle }\Bigg[\left\langle \Psi_{P,n-1},V_{i}(P)\Psi_{P,n-1}\right\rangle +\nonumber \\
 & +g^{2}\left\langle \frac{1}{H_{P,n-1}-E_{P,n-1}}\phi^{*}|_{n}^{n-1}\Psi_{P,n-1},V_{i}(P)\frac{1}{H_{P,n-1}-E_{P,n-1}}\phi^{*}|_{n}^{n-1}\Psi_{P,n-1}\right\rangle \nonumber \\
 & +g^{2}\left\langle \widetilde{\mathcal{Q}}_{P,n-1}^{\perp}\frac{1}{H_{P,n-1}-E_{P,n-1}}\phi|_{n}^{n-1}\frac{1}{H_{P,n-1}-E_{P,n-1}}\phi^{*}|_{n}^{n-1}\Psi_{P,n-1},V_{i}(P)\Psi_{P,n-1}\right\rangle +h.c.\nonumber \\
 & -g^{2}\left\langle \widetilde{\mathcal{Q}}_{P,n-1}\phi|_{n}^{n-1}\left(\frac{1}{H_{P,n-1}-E_{P,n-1}}\right)^{2}\phi^{*}|_{n}^{n-1}\Psi_{P,n-1},V_{i}(P)\Psi_{P,n-1}\right\rangle +h.c.\label{eq:term_3}\\
 & +{\cal O}\left(|g|^{4}(1-\gamma)^{4/2}\right)\Bigg].\nonumber 
\end{align}
 We observe that
\[
(\ref{eq:term_3})=-2g^{2}\alpha_{P}|_{n}^{n-1}\left\langle \Psi_{P,n-1},V_{i}(P)\Psi_{P,n-1}\right\rangle 
\]
because
\[
g^{2}\left\langle {\cal \widetilde{{\cal Q}}}_{P,n-1}\phi|_{n}^{n-1}\left(\frac{1}{H_{P,n-1}-E_{P,n-1}}\right)^{2}\phi^{*}|_{n}^{n-1}\Psi_{P,n-1},V_{i}(P)\Psi_{P,n-1}\right\rangle =g^{2}\alpha_{P}|_{n}^{n-1}\left\langle \Psi_{P,n-1},V_{i}(P)\Psi_{P,n-1}\right\rangle \,.
\]
Hence, we can rewrite (\ref{eq:T expansion}) as
\begin{align}
\left\langle \widehat{\Psi}_{P,n},V_{i}(P)\widehat{\Psi}_{P,n}\right\rangle = & \left(1-g^{2}\alpha_{P}|_{n}^{n-1}+{\cal O}\left(|g|^{4}(1-\gamma)^{4/2}\right)\right)\left\langle \widehat{\Psi}_{P,n-1},V_{i}(P)\widehat{\Psi}_{P,n-1}\right\rangle \label{eq:recursion}\\
 & +g^{2}\left\langle \frac{1}{H_{P,n-1}-E_{P,n-1}}\phi^{*}|_{n}^{n-1}\widehat{\Psi}_{P,n-1},V_{i}(P)\frac{1}{H_{P,n-1}-E_{P,n-1}}\phi^{*}|_{n}^{n-1}\widehat{\Psi}_{P,n-1}\right\rangle \label{eq:term_1}\\
 & +g^{2}2\Re\left\langle {\cal \widetilde{{\cal Q}}}_{P,n-1}^{\perp}\frac{1}{H_{P,n-1}-E_{P,n-1}}\phi|_{n}^{n-1}\frac{1}{H_{P,n-1}-E_{P,n-1}}\phi^{*}|_{n}^{n-1}\widehat{\Psi}_{P,n-1},V_{i}(P)\widehat{\Psi}_{P,n-1}\right\rangle \label{eq:term_2}\\
 & +{\cal O}\left(|g|^{4}(1-\gamma)^{4/2}\right).\nonumber 
\end{align}
Next, we proceed iteratively by expanding $\left\langle \widehat{\Psi}_{P,n},V_{i}(P)\widehat{\Psi}_{P,n}\right\rangle $
at each step from $n=N$ to $n=0$. Meanwhile, we define 
\[
A_{P,n-1}:=(\ref{eq:term_1}),\qquad B_{P,n-1}:=(\ref{eq:term_2})\,.
\]
As a result of the iteration we find the following expansion 
\begin{multline}
\left\langle \widehat{\Psi}_{P,N},V_{i}(P)\widehat{\Psi}_{P,N}\right\rangle =\prod_{j=1}^{N}\left(1-g^{2}\alpha_{P}|_{N-j+1}^{N-j}\right)\left\langle \widehat{\Psi}_{P,0},V_{i}(P)\widehat{\Psi}_{P,0}\right\rangle \\
+\sum_{j=2}^{N-1}\left(1-g^{2}\alpha_{P}|_{N}^{N-1}\right)\ldots\left(1-g^{2}\alpha_{P}|_{N-j+1}^{N-j}\right)\left[A_{P,N-j-1}+B_{P,N-j-1}\right]\\
+\left(1-g^{2}\alpha_{P}|_{N}^{N-1}\right)\left[A_{P,N-2}+B_{P,N-2}\right]+\left[A_{P,N-1}+B_{P,N-1}\right]+{\cal O}\left(|g|^{4}N(1-\gamma)^{4/2}\right).\label{eq:eff_mass_in_terms_of_0}
\end{multline}
Let us assume one could show the bounds 
\begin{eqnarray}
\left|A_{P,N-j}\right| & \leq & g^{2}C\frac{1-\gamma}{\Lambda\gamma^{N-j+1}},\label{eq:A_assumption}\\
\left|B_{P,N-j}\right| & \leq & |g|^{5/2}C(1-\gamma)\label{eq:B_assumption}
\end{eqnarray}
where we stress that the universal constant $C$ is independent of
the mass $m$. Then, using the following ingredients
\begin{itemize}
\item (iii) of \thmref{expansion_formulas},
\item $N=\frac{\ln\Lambda}{-\ln\gamma}$, 
\item the basic estimates
\[
\prod_{j=1}^{N}\left(1-g^{2}\alpha_{P}|_{N-j+1}^{N-j}\right)\leq\prod_{j=1}^{N}\left(1-g^{2}c_{1}(1-\gamma)\right)\leq\Lambda^{-g^{2}c_{1}\frac{(1-\gamma)}{-\ln\gamma}},
\]
\begin{multline*}
\sum_{j=2}^{N-1}\left(1-g^{2}\alpha_{P}|_{N}^{N-1}\right)\ldots\left(1-g^{2}\alpha_{P}|_{N-j+1}^{N-j}\right)+\left(1-g^{2}\alpha_{P}|_{N}^{N-1}\right)+1\leq\sum_{j=0}^{N-1}\left(1-g^{2}c_{1}(1-\gamma)\right)^{j}\\
\leq\frac{1}{g^{2}c_{1}(1-\gamma)}\,,
\end{multline*}
and using $\Lambda\gamma^{N}=1$ 
\begin{multline*}
\sum_{j=2}^{N-1}\left(1-g^{2}\alpha_{P}|_{N}^{N-1}\right)\ldots\left(1-g^{2}\alpha_{P}|_{N-j+1}^{N-j}\right)\frac{1-\gamma}{\Lambda\gamma^{N-j}}+\left(1-g^{2}\alpha_{P}|_{N}^{N-1}\right)\frac{1-\gamma}{\Lambda\gamma^{N-1}}+\frac{1-\gamma}{\Lambda\gamma^{N}}\\
\leq C(1-\gamma)\sum_{j=0}^{N-1}\gamma^{j}\leq C,
\end{multline*}

\end{itemize}
the bounds in (\ref{eq:A_assumption})-(\ref{eq:B_assumption}) are
seen to imply 
\begin{equation}
\left|\left\langle \widehat{\Psi}_{P,N},V_{i}(P)\widehat{\Psi}_{P,N}\right\rangle \right|\leq\Lambda^{-g^{2}c_{1}\frac{(1-\gamma)}{-\ln\gamma}}\frac{|P|}{\left[P^{2}+m^{2}\right]^{1/2}}+C|g|^{1/2}+C|g|^{4}\ln\Lambda(1-\gamma)\,,\label{eq:J_est}
\end{equation}
where we recall that $\left|\left\langle \widehat{\Psi}_{P,0},V_{i}(P)\widehat{\Psi}_{P,0}\right\rangle \right|=\frac{|P_{i}|}{\left[P^{2}+m^{2}\right]^{1/2}}$.

As the fineness parameter $\gamma$ can be chosen arbitrarily close
to one the bound in (\ref{thm:effective_velocity-1-1}) is proven.\\
We show now that the bounds (\ref{eq:A_assumption})-(\ref{eq:B_assumption})
hold true.\\
\\
\textbf{Bound} (\ref{eq:A_assumption})\textbf{:} Defining $P_{\lambda}:=\lambda P$
and its components $P_{\lambda\, i}:=\lambda P_{i}$, $1\leq i\leq3$,
we start with the identity
\begin{equation}
A_{P,n-1}=\int_{0}^{1}d\lambda\frac{d}{d\lambda}g^{2}\left\langle \frac{1}{H_{P_{\lambda},n-1}-E_{P_{\lambda},n-1}}\phi^{*}|_{n}^{n-1}\widehat{\Psi}_{P_{\lambda},n-1},V_{i}(P_{\lambda})\frac{1}{H_{P_{\lambda},n-1}-E_{P_{\lambda},n-1}}\phi^{*}|_{n}^{n-1}\widehat{\Psi}_{P_{\lambda},n-1}\right\rangle \label{eq:A_integral}
\end{equation}
that holds because of analytic perturbation theory in $P$ (see Lemma
3.7 in \cite{fraehlich_existence_1974}) and

\[
\left\langle \frac{1}{H_{0,n-1}-E_{0,n-1}}\phi^{*}|_{n}^{n-1}\widehat{\Psi}_{0,n-1},V_{i}(0)\frac{1}{H_{0,n-1}-E_{0,n-1}}\phi^{*}|_{n}^{n-1}\widehat{\Psi}_{0,n-1}\right\rangle =0
\]
by symmetry under rotational invariance of $H_{0,n-1}$, $E_{0,n-1}$
and $\widehat{\Psi}_{0,n-1}$. In order to estimate the integrand
\begin{multline}
g^{2}\frac{d}{d\lambda}\left\langle \frac{1}{H_{P_{\lambda},n-1}-E_{P_{\lambda},n-1}}\phi^{*}|_{n}^{n-1}\widehat{\Psi}_{P_{\lambda},n-1},V_{i}(P_{\lambda})\frac{1}{H_{P_{\lambda},n-1}-E_{P_{\lambda},n-1}}\phi^{*}|_{n}^{n-1}\widehat{\Psi}_{P_{\lambda},n-1}\right\rangle =\\
=\lim_{h\to0}\frac{g^{2}}{h}\Bigg[\left\langle \frac{1}{H_{P_{\lambda+h},n-1}-E_{P_{\lambda+h},n-1}}\phi^{*}|_{n}^{n-1}\widehat{\Psi}_{P_{\lambda+h},n-1},V_{i}(P_{\lambda+h})\frac{1}{H_{P_{\lambda+h},n-1}-E_{P_{\lambda+h},n-1}}\phi^{*}|_{n}^{n-1}\widehat{\Psi}_{P_{\lambda+h},n-1}\right\rangle \\
-\left\langle \frac{1}{H_{P_{\lambda},n-1}-E_{P_{\lambda},n-1}}\phi^{*}|_{n}^{n-1}\widehat{\Psi}_{P_{\lambda},n-1},V_{i}(P_{\lambda})\frac{1}{H_{P_{\lambda},n-1}-E_{P_{\lambda},n-1}}\phi^{*}|_{n}^{n-1}\widehat{\Psi}_{P_{\lambda},n-1}\right\rangle \Bigg]\label{eq:vertical_limit}
\end{multline}
we first observe that in expression (\ref{eq:vertical_limit}), at
least for small $|h|$, the vector $\widehat{\Psi}_{P_{\lambda+h},n-1}$
can be replaced by the vector $\widehat{\Upsilon}_{P_{\lambda+h},n-1}$
where 
\[
\Upsilon_{P_{\lambda+h},n-1}:=-\frac{1}{2\pi i}\oint_{\Gamma_{P,n-1}}\frac{dz}{H_{P_{\lambda+h,n-1}}-z}\Psi_{P_{\lambda},n-1}\,.
\]
Notice that $\Upsilon_{P_{\lambda+h},n-1}\parallel\Psi_{P_{\lambda+h},n-1}$
and $\Upsilon_{P_{\lambda+h},n-1}\big|_{h=0}=\Psi_{P_{\lambda},n-1}$.
Hence, we need to estimate three types of terms:
\begin{multline}
\lim_{h\to0}\frac{g^{2}}{h}\Bigg<\left[\frac{1}{H_{P_{\lambda+h},n-1}-E_{P_{\lambda+h},n-1}}-\frac{1}{H_{P_{\lambda},n-1}-E_{P_{\lambda},n-1}}\right]\phi^{*}|_{n}^{n-1}\widehat{\Psi}_{P_{\lambda},n-1},\\
V_{i}(P_{\lambda})\frac{1}{H_{P_{\lambda},n-1}-E_{P_{\lambda},n-1}}\phi^{*}|_{n}^{n-1}\widehat{\Psi}_{P_{\lambda},n-1}\Bigg>,\label{eq:type_d_res}
\end{multline}
\begin{equation}
\lim_{h\to0}\frac{g^{2}}{h}\left\langle \frac{1}{H_{P_{\lambda},n-1}-E_{P_{\lambda},n-1}}\phi^{*}|_{n}^{n-1}\left[\widehat{\Upsilon}_{P_{\lambda+h},n-1}-\widehat{\Upsilon}_{P_{\lambda},n-1}\right],V_{i}(P_{\lambda})\frac{1}{H_{P_{\lambda},n-1}-E_{P_{\lambda},n-1}}\phi^{*}|_{n}^{n-1}\widehat{\Upsilon}_{P_{\lambda},n-1}\right\rangle ,\label{eq:type_d_psi}
\end{equation}
\begin{equation}
\lim_{h\to0}\frac{g^{2}}{h}\left\langle \frac{1}{H_{P_{\lambda},n-1}-E_{P_{\lambda},n-1}}\phi^{*}|_{n}^{n-1}\widehat{\Psi}_{P_{\lambda},n-1},\left[V_{i}(P_{\lambda+h})-V_{i}(P_{\lambda})\right]\frac{1}{H_{P_{\lambda},n-1}-E_{P_{\lambda},n-1}}\phi^{*}|_{n}^{n-1}\widehat{\Psi}_{P_{\lambda},n-1}\right\rangle ,\label{eq:type_d_vel}
\end{equation}
In order to estimate term (\ref{eq:type_d_res}) we observe that the
expression is well defined because the vector $\phi^{*}|_{n}^{n-1}\widehat{\Psi}_{P_{\lambda},n-1}$
is orthogonal to the ground state vector of both the Hamiltonians
$H_{P_{\lambda+h},n-1}$ and $H_{P_{\lambda},n-1}$. Hence, we verify
that\foreignlanguage{english}{
\begin{multline}
\lim_{h\to0}\frac{1}{h}\left[\frac{1}{H_{P_{\lambda+h},n-1}-E_{P_{\lambda+h},n-1}}-\frac{1}{H_{P_{\lambda},n-1}-E_{P_{\lambda},n-1}}\right]\\
=\lim_{h\to0}\frac{1}{h}\left[\frac{1}{H_{P_{\lambda+h},n-1}-E_{P_{\lambda+h},n-1}}\left(H_{P_{\lambda},n-1}-H_{P_{\lambda+h},n-1}-E_{P_{\lambda},n-1}+E_{P_{\lambda+h},n-1}\right)\frac{1}{H_{P_{\lambda},n-1}-E_{P_{\lambda},n-1}}\right]\\
=\frac{1}{H_{P_{\lambda},n-1}-E_{P_{\lambda},n-1}}\left(-\frac{d}{d\lambda}\sqrt{(P_{\lambda}-P^{f})^{2}+m^{2}}+\frac{d}{d\lambda}E_{P_{\lambda},n-1}\right)\frac{1}{H_{P_{\lambda},n-1}-E_{P_{\lambda},n-1}}\\
=\frac{1}{H_{P_{\lambda},n-1}-E_{P_{\lambda},n-1}}\sum_{i=1}^{3}P_{\lambda\, i}\left(-V_{i}(P_{\lambda})+\frac{\partial E_{P,n-1}}{\partial P_{i}}\bigg|_{P\equiv P_{\lambda}}\right)\frac{1}{H_{P_{\lambda},n-1}-E_{P_{\lambda},n-1}}\label{eq:vertical expansion term 1)}
\end{multline}
}holds true when applied to the vector $\phi^{*}|_{n}^{n-1}\widehat{\Psi}_{P_{\lambda},n-1}$.
At first we treat the term proportional to $\sum_{i=1}^{3}P_{\lambda\, i}V_{i}(P_{\lambda})$.
Using (iii) in \lemref{energy}, the estimate in (\ref{eq:vel_est}),
and the pull-through formula, we get the estimate
\begin{multline*}
g^{2}\Bigg|\Bigg<V_{i}(P_{\lambda})\frac{1}{H_{P_{\lambda},n-1}-E_{P_{\lambda},n-1}}\phi^{*}|_{n}^{n-1}\widehat{\Psi}_{P_{\lambda},n-1},\\
\frac{1}{H_{P_{\lambda},n-1}-E_{P_{\lambda},n-1}}\sum_{j=1}^{3}P_{\lambda\, j}V_{j}(P_{\lambda})\frac{1}{H_{P_{\lambda},n-1}-E_{P_{\lambda},n-1}}\phi^{*}|_{n}^{n-1}\widehat{\Psi}_{P_{\lambda},n-1}\Bigg>\Bigg|\\
=g^{2}\Bigg|\int_{\Lambda\gamma^{n}}^{\Lambda\gamma^{n-1}}dk\rho(k)^{2}\Bigg<V_{i}(P_{\lambda}-k)\left(\frac{1}{H_{P_{\lambda}-k,n-1}-E_{P_{\lambda},n-1}+\omega(k)}\right)\widehat{\Psi}_{P_{\lambda},n-1},\\
\left(\frac{1}{H_{P_{\lambda}-k,n-1}-E_{P_{\lambda},n-1}+\omega(k)}\right)\sum_{j=1}^{3}P_{\lambda\, j}V_{j}(P_{\lambda}-k)\frac{1}{H_{P_{\lambda}-k,n-1}-E_{P_{\lambda},n-1}+\omega(k)}\widehat{\Psi}_{P_{\lambda},n-1}\Bigg>\Bigg|\\
\leq g^{2}C\int_{\Lambda\gamma^{n}}^{\Lambda\gamma^{n-1}}dk\,\frac{1}{|k|^{4}}\leq g^{2}\frac{C(1-\gamma)}{\Lambda\gamma^{n}}.
\end{multline*}
The remaining term in (\ref{eq:type_d_res}) being proportional to
$\sum_{i=1}^{3}P_{\lambda\, i}\left(\frac{\partial E_{P,n-1}}{\partial P_{i}}\big|_{P\equiv P_{\lambda}}\right)$
can be estimated in the same way. In consequence, we get 
\begin{equation}
\left|(\ref{eq:type_d_res})\right|\leq g^{2}\frac{C(1-\gamma)}{\Lambda\gamma^{n}}.\label{eq:type_d_res_final}
\end{equation}

Next, we consider term (\ref{eq:type_d_psi}). Using the differentiability
in $\lambda$ again we find
\begin{multline}
\lim_{h\to0}\frac{\Upsilon_{P_{\lambda+h},n-1}-\Upsilon_{P_{\lambda},n-1}}{h}=-\frac{1}{2\pi i}\lim_{h\to0}\frac{1}{h}\oint_{\Gamma_{P,n-1}}dz\,\left[\frac{1}{H_{P_{\lambda+h},n-1}-z}-\frac{1}{H_{P_{\lambda},n-1}-z}\right]\Psi_{P_{\lambda},n-1}\\
=-\frac{1}{2\pi i}\lim_{h\to0}\frac{1}{h}\oint_{\Gamma_{P,n-1}}dz\,\left[\frac{1}{H_{P_{\lambda},n-1}-z}\left(H_{P_{\lambda,n-1}}-H_{P_{\lambda+h,n-1}}\right)\frac{1}{H_{P_{\lambda},n-1}-z}\right]\Psi_{P_{\lambda},n-1}\\
=-\frac{1}{2\pi i}\oint_{\Gamma_{P,n-1}}dz\,\left[\frac{1}{H_{P_{\lambda},n-1}-z}\left(-\sum_{i=1}^{3}P_{\lambda\, i}V_{i}(P_{\lambda})\right)\frac{1}{H_{P_{\lambda},n-1}-z}\right]\Psi_{P_{\lambda},n-1}\\
=-{\cal Q}_{P_{\lambda},n-1}^{\perp}\frac{1}{H_{P_{\lambda},n-1}-E_{P_{\lambda},n-1}}\sum_{i=1}^{3}P_{\lambda\, i}V_{i}(P_{\lambda})\Psi_{P_{\lambda},n-1}\label{eq:d_ups-1}
\end{multline}
and
\begin{equation}
\lim_{h\to0}\frac{1}{h}\left[\frac{1}{\left\Vert \Upsilon_{P_{\lambda+h},n-1}\right\Vert }-\frac{1}{\left\Vert \Upsilon_{P_{\lambda},n-1}\right\Vert }\right]=-\frac{1}{\left\Vert \Upsilon_{P_{\lambda},n-1}\right\Vert ^{3}}\lim_{h\to0}\Re\left\langle \frac{\Upsilon_{P_{\lambda+h},n-1}-\Upsilon_{P_{\lambda},n-1}}{h},\Upsilon_{P_{\lambda},n-1}\right\rangle =0.\label{eq:d_abs_ups}
\end{equation}
Equations (\ref{eq:d_ups-1}) and (\ref{eq:d_abs_ups}), the pull-through
formula, and the gap estimate in \thmref{mass_shell} give
\begin{equation}
\left|(\ref{eq:type_d_psi})\right|\leq g^{2}\frac{C(1-\gamma)}{\Lambda\gamma^{n}}.\label{eq:type_d_psi_final}
\end{equation}

In the estimate of the third term, i.e., term (\ref{eq:type_d_vel}),
we exploit the additional decay which we gain through the derivative
of $V_{i}(P_{\lambda})$, i.e.,
\[
\lim_{h\to0}\frac{1}{h}\left[V_{i}(P_{\lambda+h})-V_{i}(P_{\lambda})\right]=\frac{P_{\lambda\, i}-V_{i}(P_{\lambda})\sum_{j=1}^{3}V_{j}(P_{\lambda})P_{\lambda\, j}}{\sqrt{(P_{\lambda}-P^{f})^{2}+m^{2}}}.
\]
Thus, we can rewrite and estimate (\ref{eq:type_d_vel}) as follows

\begin{multline}
\Bigg|g^{2}\int_{\Lambda\gamma^{n}}^{\Lambda\gamma^{n-1}}dk\,\rho(k)^{2}\Bigg<\widehat{\Psi}_{P_{\lambda},n-1},\frac{1}{H_{P_{\lambda}-k,n-1}+\omega(k)-E_{P_{\lambda},n-1}}\times\\
\left[\frac{P_{\lambda\, i}-V_{i}(P_{\lambda}-k)\sum_{j=1}^{3}V_{j}(P_{\lambda}-k)P_{\lambda\, j}}{\sqrt{(P_{\lambda}-P^{f}-k)^{2}+m^{2}}}\right]\frac{1}{H_{P_{\lambda}-k,n-1}+\omega(k)-E_{P_{\lambda},n-1}}\widehat{\Psi}_{P_{\lambda},n-1}\Bigg>\Bigg|\\
\leq Cg^{2}P_{max}\Bigg|\int_{\Lambda\gamma^{n}}^{\Lambda\gamma^{n-1}}dk\,\rho(k)^{2}\Bigg<\widehat{\Psi}_{P,n-1},\frac{1}{H_{P_{\lambda}-k,n-1}+\omega(k)-E_{P_{\lambda},n-1}}\times\\
\times\frac{1}{\sqrt{(P_{\lambda}-P^{f}-k)^{2}+m^{2}}}\,\frac{1}{H_{P_{\lambda}-k,n-1}+\omega(k)-E_{P_{\lambda},n-1}}\widehat{\Psi}_{P_{\lambda},n-1}\Bigg>\Bigg|\label{eq:apply_spec}
\end{multline}
where we have used the pull-through formula. Next we consider the
spectral measure $d\mu_{k}(\xi)\equiv f_{k}(\xi)d\xi$ (where $f_{k}(\xi)\geq0$
a.e.) associated with the vector 
\[
\frac{1}{H_{P_{\lambda}-k,n-1}+\omega(k)-E_{P_{\lambda},n-1}}\widehat{\Psi}_{P_{\lambda},n-1}
\]
in the joint spectral representation of the components of the operator
$P^{f}$ where $\xi$ is the spectral variable. The measure is defined
by 
\[
(0\leq)\,\,\|\chi_{\Omega}\frac{1}{H_{P_{\lambda}-k,n-1}+\omega(k)-E_{P_{\lambda},n-1}}\widehat{\Psi}_{P_{\lambda},n-1}\|^{2}=:\int_{\sigma(P^{f})}d\xi f_{k}(\xi)\,\chi_{\Omega}(\xi)\,\,\leq\frac{C}{|k|^{2}}
\]
for every measurable set $\Omega\subseteq\sigma(P^{f})$ where $\chi_{\Omega}(\xi)$
is the characteristic function of the set $\Omega$ and $\chi_{\Omega}$
is the corresponding spectral projection. Thus we can write (\ref{eq:apply_spec})
as follows
\begin{eqnarray}
(\ref{eq:apply_spec}) & = & Cg^{2}\int_{\Lambda\gamma^{n}}^{\Lambda\gamma^{n-1}}dk\,\frac{1}{|k|}\Big|\Big|\Big[\frac{1}{\sqrt{(P_{\lambda}-P^{f}-k)^{2}+m^{2}}}\Big]^{\frac{1}{2}}\frac{1}{H_{P_{\lambda}-k,n-1}+\omega(k)-E_{P_{\lambda},n-1}}\widehat{\Psi}_{P_{\lambda},n-1}\Big|\Big|^{2}\nonumber \\
 & = & Cg^{2}\int_{\Lambda\gamma^{n}}^{\Lambda\gamma^{n-1}}\int d\Omega_{k}d|k|\frac{1}{|k|}\int_{\sigma(P^{f})}d\xi\: f_{k}(\xi)\,\frac{1}{\sqrt{(P_{\lambda}-\xi-k)^{2}+m^{2}}}\label{eq:theta int}
\end{eqnarray}
By knowing that 
\[
\int_{\sigma(P^{f})}d\xi\, f_{k}(\xi)\,\frac{1}{\sqrt{(P_{\lambda}-\xi-k)^{2}+m^{2}}}<+\infty
\]
we can interchange the integration in $d\xi$ with the angular integration
in the variable $k$, i.e., 
\[
(\ref{eq:theta int})=Cg^{2}\int_{\Lambda\gamma^{n}}^{\Lambda\gamma^{n-1}}|k|d|k|\,\int_{\sigma(P^{f})}d\xi\int_{0}^{2\pi}d\varphi\int_{0}^{\pi}d\theta\sin\theta\: f_{k}(\xi)\,\frac{1}{\sqrt{(P_{\lambda}-\xi)^{2}+k^{2}-2\cos\theta|P_{\lambda}-\xi|\,|k|+m^{2}}}
\]
where $\theta$ denotes the angle between the vector $k$ and the
vector $P_{\lambda}-\xi$ and $\varphi$ the azimuthal angle with
respect to an arbitrarily chosen vector orthogonal to $P_{\lambda}-\xi$.
We split the integration in the variable $\theta$ into two regions:
$\theta\in\left[\frac{\pi}{3},\pi\right]$ and $\theta\in\left[0,\frac{\pi}{3}\right]$.
For $\theta\in\left[\frac{\pi}{3},\pi\right]$ being $\cos\theta\in\left[-1,\frac{1}{2}\right]$
we observe that 
\[
(P_{\lambda}-\xi)^{2}+k^{2}-2\cos\theta|P_{\lambda}-\xi|\,|k|\geq(P_{\lambda}-\xi)^{2}+k^{2}-|P_{\lambda}-\xi|\,|k|\geq\frac{3}{4}k^{2}
\]
and, consequently, 
\begin{align}
 & \int_{\sigma(P^{f})}d\xi\int_{0}^{2\pi}d\varphi\int_{\pi/3}^{\pi}d\theta\,\sin\theta\, f_{k}(\xi)\,\frac{1}{\sqrt{(P_{\lambda}-\xi)^{2}+k^{2}-2\cos\theta|P_{\lambda}-\xi|\,|k|+m^{2}}}\label{eq:theta_region_1}\\
\leq & C\int_{\sigma(P^{f})}d\xi\int_{0}^{2\pi}d\varphi\int_{0}^{\pi}d\theta\,\sin\theta\, f_{k}(\xi)\,\frac{1}{|k|}\\
\leq & \frac{C}{|k|}\int d\Omega_{k}\,\int_{\sigma(P^{f})}d\mu_{k}(\xi)\\
\leq & \frac{C}{|k|^{3}}\label{eq:constant-67}
\end{align}
Notice that the constant $C$ in (\ref{eq:constant-67}) can be chosen
to be independent of the mass $m$. Next, we treat the integration
over $\theta\in\left[0,\frac{\pi}{3}\right]$ where $\cos\theta\in\left[\frac{1}{2},1\right]$
and 
\[
(P_{\lambda}-\xi)^{2}+k^{2}-2\cos\theta|P_{\lambda}-\xi|\,|k|\geq\left[(P_{\lambda}-\xi)^{2}+k^{2}\right](1-\cos\theta)
\]
we find 
\begin{equation}
\begin{split}\int_{\sigma(P^{f})}d\xi\int_{0}^{2\pi}d\varphi\int_{0}^{\pi/3}d\theta\,\sin\theta\, f_{k}(\xi)\,\frac{1}{\sqrt{[(P_{\lambda}-\xi)^{2}+k^{2}](1-\cos\theta)+m^{2}}}\\
\leq\int_{\sigma(P^{f})}d\xi\,\int_{0}^{2\pi}d\varphi\int_{0}^{\pi/3}d\theta\,\sin\theta\frac{1}{|k|}\, f_{k}(\xi)\,\frac{1}{\sqrt{(1-\cos\theta)}}\\
\leq C\int_{\sigma(P^{f})}d\xi\,\int_{0}^{2\pi}d\varphi\int_{0}^{\pi/3}d\theta\frac{1}{|k|}\, f_{k}(\xi)\\
\leq\frac{C}{|k|^{3}}
\end{split}
\label{eq:theta_region_2}
\end{equation}
Notice that also the constant $C$ in (\ref{eq:theta_region_2}) can
be chosen to be independent of the mass $m$. Combining the results
for the two integration domains, i.e., (\ref{eq:theta_region_1})
and (\ref{eq:theta_region_2}), we arrive at
\begin{equation}
(\ref{eq:theta int})\leq g^{2}C\int_{\Lambda\gamma^{n}}^{\Lambda\gamma^{n-1}}d|k|\frac{1}{|k|^{2}}\leq g^{2}C\frac{1-\gamma}{\Lambda\gamma^{n}}.\label{eq:type_d_vel_final}
\end{equation}
Hence, we have proven the bound in (\ref{eq:type_d_vel}).

With the three bounds in (\ref{eq:type_d_res_final}), (\ref{eq:type_d_psi_final})
and (\ref{eq:type_d_vel_final}) we can control the integrand (\ref{eq:type_d_res})-(\ref{eq:type_d_vel}),
and hence, the integral given in (\ref{eq:A_integral}) which proves
the bound in (\ref{eq:A_assumption}).\\
\\
\textbf{Bound} (\ref{eq:B_assumption})\textbf{:} As a next step we
proceed with the bound of (\ref{eq:B_assumption}) where by using
the pull-through formula we get
\begin{multline}
\left|B_{P,n-1}\right|\\
=g^{2}\left|2\Re\int_{\Lambda\gamma^{n}}^{\Lambda\gamma^{n-1}}dk\,\rho(k)^{2}\left\langle {\cal Q}_{P,n-1}^{\perp}\frac{1}{H_{P,n-1}-E_{P,n-1}}\frac{1}{H_{P-k,n-1}+\omega(k)-E_{P,n-1}}\widehat{\Psi}_{P,n-1},V_{i}(P)\widehat{\Psi}_{P,n-1}\right\rangle \right|\\
\leq g^{2}C\int_{\Lambda\gamma^{n}}^{\Lambda\gamma^{n-1}}dk\,\frac{1}{k^{2}}\left\Vert \frac{1}{H_{P,n-1}-E_{P,n-1}}{\cal Q}_{P,n-1}^{\perp}V_{i}(P)\widehat{\Psi}_{P,n-1}\right\Vert \\
\leq g^{2}C\Lambda\gamma^{n-1}(1-\gamma)\left\Vert \frac{1}{H_{P,n-1}-E_{P,n-1}}{\cal Q}_{P,n-1}^{\perp}V_{i}(P)\widehat{\Psi}_{P,n-1}\right\Vert .\label{eq:B term 1}
\end{multline}
We shall now show that
\begin{equation}
\left\Vert \frac{1}{H_{P,n-1}-E_{P,n-1}}{\cal Q}_{P,n-1}^{\perp}V_{i}(P)\widehat{\Psi}_{P,n-1}\right\Vert \leq C\frac{|g|^{1/2}}{\Lambda\gamma^{n}}\label{eq:bad_term_est}
\end{equation}
holds true, so that, by inserting this bound in (\ref{eq:B term 1}),
we get the desired $m$-independent estimate in (\ref{eq:B_assumption}). 

In order to gain a certain power of $|g|$ we re-expand the left-hand
side of (\ref{eq:bad_term_est}) backwards from energy level $\Lambda\gamma^{n-1}$
to $\Xi{}_{n-1}$, as defined in (\ref{eq:def_chi2-1}), with the
help of \lemref{backwards_expansion-1} for an $\epsilon$, $0<\epsilon\leq\frac{1}{2}$,
and $\delta=1-\frac{\epsilon}{2}$ which will be fixed later. We know
that
\begin{itemize}
\item $\left\Vert \left(\frac{1}{H_{P,\Xi{}_{n-1}}-z}\right)^{1/2}g\Phi|_{\Lambda\gamma^{n-1}}^{\Xi{}_{n-1}}\left(\frac{1}{H_{P,\Xi{}_{n-1}}-z}\right)^{1/2}\right\Vert _{{\cal F}_{n-1}}\leq|g|^{\delta}C\quad$
for $z\in\Gamma_{P,n-1}$ (see (\ref{eq:backwards_order_est-1-1})),
\item $\Psi_{P,n-1}^{(\Xi{}_{n-1})}$ and $\Psi_{P,n-1}$ are two vectors
belonging to the same ray with $\|\Psi_{P,n-1}^{(\Xi{}_{n-1})}\|^{2}\geq(1-\mathcal{O}(|g|^{4\delta}))\|\Psi_{P,\Xi{}_{n-1}}\|^{2}$
(see (\ref{eq:norm relation-1})). 
\end{itemize}
Thus, denoting the length of the contour $\Gamma_{P,n-1}$ by $|\Gamma_{P,n-1}|$,
we find for $|g|$ sufficiently small 
\begin{multline}
\left\Vert \frac{1}{H_{P,n-1}-E_{P,n-1}}{\cal Q}_{P,n-1}^{\perp}V_{i}(P)\widehat{\Psi}_{P,n-1}\right\Vert =\left\Vert {\cal Q}_{P,n-1}^{\perp}\frac{1}{H_{P,n-1}-E_{P,n-1}}{\cal Q}_{P,n-1}^{\perp}V_{i}(P)\widehat{\Psi}_{P,n-1}^{(\Xi{}_{n-1})}\right\Vert \\
\leq C\left\Vert {\cal Q}_{P,n-1}^{\perp}\frac{1}{H_{P,n-1}-E_{P,n-1}}{\cal Q}_{P,n-1}^{\perp}V_{i}(P)\widehat{\Psi}_{P,\Xi{}_{n-1}}\right\Vert \\
+C\left\Vert \frac{1}{H_{P,n-1}-E_{P,n-1}}{\cal Q}_{P,n-1}^{\perp}\right\Vert _{{\cal F}_{n-1}}\left\Vert V_{i}(P)\right\Vert |\Gamma_{P,n-1}|\sup_{z\in\Gamma_{P,n-1}}\sum_{j=1}^{\infty}\left\Vert \frac{1}{H_{P,\Xi{}_{n-1}}-z}\left[-g\phi^{*}|_{\Lambda\gamma^{n-1}}^{\Xi{}_{n-1}}\frac{1}{H_{P,\Xi{}_{n-1}}-z}\right]^{j}\widehat{\Psi}_{P,\Xi{}_{n-1}}\right\Vert \\
\leq\left\Vert {\cal Q}_{P,n-1}^{\perp}\frac{1}{H_{P,n-1}-E_{P,n-1}}{\cal Q}_{P,n-1}^{\perp}V_{i}(P)\widehat{\Psi}_{P,\Xi{}_{n-1}}\right\Vert +C\frac{|g|^{\delta}}{\Lambda\gamma^{n}}\label{eq:backward_exp_step1}
\end{multline}
where we have used the bound in (\ref{eq:backwards_order_est-1-1}),
the inequality in (\ref{eq:resolvent-wrong-contour}), and the gap
estimate given in \thmref{mass_shell}. Using the same ingredients,
we estimate
\begin{equation}
\left\Vert {\cal Q}_{P,n-1}^{\perp}\frac{1}{H_{P,n-1}-E_{P,n-1}}{\cal Q}_{P,n-1}^{\perp}V_{i}(P)\widehat{\Psi}_{P,\Xi{}_{n-1}}\right\Vert \label{eq:backwards expansion step 1.5}
\end{equation}
 by expanding the spectral projection on the right, i.e.,
\begin{multline}
(\ref{eq:backwards expansion step 1.5})\leq\left\Vert {\cal Q}_{P,n-1}^{\perp}\frac{1}{H_{P,n-1}-E_{P,n-1}}{\cal Q}_{P,\Xi{}_{n-1}}^{\perp}V_{i}(P)\widehat{\Psi}_{P,\Xi{}_{n-1}}\right\Vert \\
+C\left\Vert {\cal Q}_{P,n-1}^{\perp}\frac{1}{H_{P,n-1}-E_{P,n-1}}\right\Vert _{{\cal F}_{n-1}}|\Gamma_{P,n-1}|\sup_{z\in\Gamma_{P,n-1}}\sum_{j=1}^{\infty}\left\Vert \frac{1}{H_{P,\Xi{}_{n-1}}-z}\left[-g\Phi|_{\Lambda\gamma^{n-1}}^{\Xi{}_{n-1}}\frac{1}{H_{P,\Xi{}_{n-1}}-z}\right]^{j}\right\Vert _{{\cal F}_{n-1}}\left\Vert V_{i}(P)\right\Vert \\
\leq\left\Vert {\cal Q}_{P,n-1}^{\perp}\frac{1}{H_{P,n-1}-E_{P,n-1}}{\cal Q}_{P,\Xi{}_{n-1}}^{\perp}V_{i}(P)\widehat{\Psi}_{P,\Xi{}_{n-1}}\right\Vert +C\frac{|g|^{\delta}}{\Lambda\gamma^{n}}\label{eq:backward_exp_step2}
\end{multline}
where ${\cal Q}_{P,\Xi{}_{n-1}}^{\perp}\equiv{\cal Q}_{P,l}^{\perp}$
, for some $l\leq N$ specified in (\ref{eq:def_chi2-1}). Next, we
study 
\begin{equation}
\left\Vert {\cal Q}_{P,n-1}^{\perp}\frac{1}{H_{P,n-1}-E_{P,n-1}}{\cal Q}_{P,\Xi{}_{n-1}}^{\perp}V_{i}(P)\widehat{\Psi}_{P,\Xi{}_{n-1}}\right\Vert \label{eq:backwards expansion step 2.5}
\end{equation}
by applying the resolvent formula
\begin{multline}
(\ref{eq:backwards expansion step 2.5})\leq\left\Vert {\cal Q}_{P,n-1}^{\perp}\frac{1}{H_{P,\Xi{}_{n-1}}-E_{P,n-1}}{\cal Q}_{P,\Xi{}_{n-1}}^{\perp}V_{i}(P)\widehat{\Psi}_{P,\Xi{}_{n-1}}\right\Vert \\
+\left\Vert {\cal Q}_{P,n-1}^{\perp}\frac{1}{H_{P,n-1}-E_{P,n-1}}g\phi^{*}|_{\Lambda\gamma^{n-1}}^{\Xi{}_{n-1}}\frac{1}{H_{P,\Xi{}_{n-1}}-E_{P,n-1}}{\cal Q}_{P,\Xi{}_{n-1}}^{\perp}V_{i}(P)\widehat{\Psi}_{P,\Xi{}_{n-1}}\right\Vert .\label{eq:backward_exp_step3}
\end{multline}
In order to estimate (\ref{eq:backward_exp_step3}) we make use of
the following intermediate steps:
\begin{itemize}
\item $\left\Vert {\cal Q}_{P,n-1}^{\perp}\left(\frac{1}{H_{P,n-1}-E_{P,n-1}}\right)\right\Vert _{{\cal F}_{n-1}}\leq\frac{C}{\Lambda\gamma^{n}}$,
\item 
\begin{multline*}
\left\Vert g\phi^{*}|_{\Lambda\gamma^{n-1}}^{\Xi{}_{n-1}}\frac{1}{H_{P,\Xi{}_{n-1}}-E_{P,n-1}}{\cal Q}_{P,\Xi{}_{n-1}}^{\perp}V_{i}(P)\widehat{\Psi}_{P,\Xi{}_{n-1}}\right\Vert \\
=|g|\left(\int_{\Lambda\gamma^{n-1}}^{\Xi{}_{n-1}}dk\,\rho(k)^{2}\right)^{1/2}\left\Vert \frac{1}{H_{P,\Xi{}_{n-1}}-E_{P,n-1}}{\cal Q}_{P,\Xi{}_{n-1}}^{\perp}V_{i}(P)\widehat{\Psi}_{P,\Xi{}_{n-1}}\right\Vert \\
\leq C|g|\Xi{}_{n-1}\frac{1}{\Xi{}_{n-1}}
\end{multline*}
following from\\
\begin{equation}
\left\Vert {\cal Q}_{P,\Xi{}_{n-1}}^{\perp}\frac{1}{H_{P,\Xi{}_{n-1}}-E_{P,n-1}}\right\Vert _{{\cal F}|_{\Xi{}_{n-1}}^{\Lambda}}\leq\frac{C}{\Xi{}_{n-1}}=C\max\Big(\frac{g^{\epsilon}}{\Lambda\gamma^{n}};\frac{1}{\Lambda}\Big)\label{eq:backward-gap-estimate}
\end{equation}
that holds because of \thmref{mass_shell} and inequality (i) in Lemma
\ref{lem:energy}.
\end{itemize}
This implies
\begin{equation}
(\ref{eq:backwards expansion step 2.5})\leq\left\Vert {\cal Q}_{P,n-1}^{\perp}\frac{1}{H_{P,\Xi{}_{n-1}}-E_{P,n-1}}{\cal Q}_{P,\Xi{}_{n-1}}^{\perp}V_{i}(P)\widehat{\Psi}_{P,\Xi{}_{n-1}}\right\Vert +C\frac{|g|}{\Lambda\gamma^{n}}.\label{eq:backward_exp_step4}
\end{equation}
Next we consider
\begin{equation}
\left\Vert {\cal Q}_{P,n-1}^{\perp}\frac{1}{H_{P,\Xi{}_{n-1}}-E_{P,n-1}}{\cal Q}_{P,\Xi{}_{n-1}}^{\perp}V_{i}(P)\widehat{\Psi}_{P,\Xi{}_{n-1}}\right\Vert \label{eq:backward expansion step 4.5}
\end{equation}
 and re-expand the first spectral projection. Hence, by using (\ref{eq:backwards_order_est-1-1})
and (\ref{eq:backward-gap-estimate}) we can conclude that
\begin{equation}
(\ref{eq:backward expansion step 4.5})\leq\left\Vert {\cal Q}_{P,\Xi{}_{n-1}}^{\perp}\frac{1}{H_{P,\Xi_{n-1}}-E_{P,n-1}}{\cal Q}_{P,\Xi{}_{n-1}}^{\perp}V_{i}(P)\widehat{\Psi}_{P,\Xi{}_{n-1}}\right\Vert +C\frac{|g|^{\delta}}{\Lambda\gamma^{n}}.\label{eq:backward_exp_step5}
\end{equation}
As a last step, for the first term on the right-hand side of (\ref{eq:backward_exp_step5})
we have to regard two cases:
\begin{enumerate}
\item Case $\Xi{}_{n-1}<\Lambda.$ In this case we exploit
\[
\left\Vert {\cal Q}_{P,\Xi{}_{n-1}}^{\perp}\frac{1}{H_{P,\Xi{}_{n-1}}-E_{P,n-1}}\right\Vert _{{\cal F}|_{\Xi{}_{n-1}}^{\Lambda}}\leq\frac{g^{\epsilon}C}{\Lambda\gamma^{n}}
\]

\item Case $\Xi{}_{n-1}=\Lambda.$ In this case we have
\[
{\cal Q}_{P,\Xi{}_{n-1}}^{\perp}V_{i}(P)\widehat{\Psi}_{P,\Xi{}_{n-1}}=\frac{P_{i}}{\sqrt{P^{2}+m^{2}}}{\cal Q}_{P,\Xi{}_{n-1}}^{\perp}\widehat{\Psi}_{P,\Xi{}_{n-1}}=0.
\]

\end{enumerate}
For both cases the estimate
\[
\left\Vert {\cal Q}_{P,\Xi{}_{n-1}}^{\perp}\frac{1}{H_{P,\Xi{}_{n-1}}-E_{P,n-1}}{\cal Q}_{P,\Xi{}_{n-1}}^{\perp}V_{i}(P)\widehat{\Psi}_{P,\Xi{}_{n-1}}\right\Vert \leq\frac{Cg^{\epsilon}}{\Lambda\gamma^{n}}
\]
holds true. 

Choosing $\epsilon=\frac{1}{2}$ and collecting all the remainders
the bound in (\ref{eq:bad_term_est}) is seen to be true. Hence, we
have also proven the inequality in (\ref{eq:B_assumption}). This
concludes the proof of the bound in (\ref{eq:J_est}). 
\end{proof}

\end{document}